\numberwithin{equation}{section}
\DeclareMathAlphabet{\mathsf}{OT1}{phv}{m}{n}
\DeclareMathAlphabet{\mathrm}{OT1}{ptm}{m}{n}
\DeclareSymbolFont{ER}{U}{eur}{m}{n} 
\DeclareSymbolFont{SY}{U}{psy}{m}{n}
\DeclareMathSymbol{,}{\mathpunct}{SY}{'054}
\DeclareMathSymbol{.}{\mathpunct}{SY}{'056}       
\DeclareMathSymbol{:}{\mathpunct}{SY}{'072}
\DeclareMathSymbol{(}{\mathopen}{SY}{'050}
\DeclareMathSymbol{)}{\mathclose}{SY}{'051}
\DeclareMathSymbol{+}{\mathbin}{SY}{'053}
\DeclareMathSymbol{-}{\mathbin}{SY}{'055}
\DeclareMathSymbol{=}{\mathbin}{SY}{'075}
\DeclareMathSymbol{<}{\mathbin}{SY}{'074}
\DeclareMathSymbol{>}{\mathbin}{SY}{'076}
\DeclareMathSymbol{\leq}{\mathbin}{SY}{'243}
\DeclareMathSymbol{\geq}{\mathbin}{SY}{'263}
\DeclareMathSymbol{\nneq}{\mathbin}{SY}{'271}
\DeclareMathSymbol{\in}{\mathbin}{SY}{'316}
\DeclareMathSymbol{\nnotin}{\mathbin}{SY}{'317}
\DeclareMathSymbol{\times}{\mathbin}{SY}{'264}
\DeclareMathSymbol{\pm}{\mathbin}{SY}{'261}
\DeclareMathSymbol{\subset}{\mathbin}{SY}{'314}
\DeclareMathSymbol{\supset}{\mathbin}{SY}{'311}
\DeclareMathSymbol{\subseteq}{\mathbin}{SY}{'315}
\DeclareMathSymbol{\supseteq}{\mathbin}{SY}{'312}
\DeclareMathSymbol{/}{\mathord}{SY}{'057}
\DeclareMathSymbol{\ast}{\mathord}{SY}{'052}
\DeclareMathSymbol{\perp}{\mathord}{SY}{'136}
\renewcommand{\neq}{\nneq}
\renewcommand{\notin}{\nnotin}
\renewcommand{\theequation}{\arabic{section}.\arabic{equation}}
\newcommand{\R}{\mathbb{R}}
\newcommand{\C}{\mathbb{C}}
\newcommand{\cC}{\mathcal{C}}
\newcommand{\cD}{\mathcal{D}}
\newcommand{\cE}{\mathcal{E}}
\newcommand{\cG}{\mathcal{G}}
\newcommand{\cH}{\mathcal{H}}
\newcommand{\cI}{\mathcal{I}}
\newcommand{\cM}{\mathcal{M}}
\newcommand{\cO}{\mathcal{O}}
\newcommand{\cS}{\mathcal{S}}
\newcommand{\cT}{\mathcal{T}}
\newcommand{\cV}{\mathcal{V}}
\newcommand{\cW}{\mathcal{W}}
\renewcommand{\Im}{{\ensuremath{\mathrm{Im}}}}
\renewcommand{\Re}{{\ensuremath{\mathrm{Re}}}}
\newcommand{\supp}{{\ensuremath{\mathrm{supp}}}}
\newcommand{\diag}{{\ensuremath{\mathrm{diag}}}}
\newcommand{\pdist}{{\ensuremath{\mathrm{pdist}}}}
\newcommand{\tr}{\mathrm{tr}}
\newcommand{\sign}{\mathrm{sign}}
\newcommand{\fD}{\mathfrak{D}}
\newcommand{\fd}{\mathfrak{d}}
\newcommand{\fF}{\mathfrak{F}}
\newcommand{\fI}{\mathfrak{I}}
\DeclareMathOperator{\Ran}{\mathrm{Ran}}
\DeclareMathOperator{\Ker}{\mathrm{Ker}}
\newcommand{\1}{\mathbb{I}}
\newcommand{\spec}{{\ensuremath{\rm spec}}}
\renewcommand{\det}{\mathrm{det\ }}
\newcommand{\sk}{\mathsf{k}}
\newcommand{\ii}{\mathrm{i}}
\newcommand{\e}{\mathrm{e}}
\newcommand{\bw}{\mathbf{w}}
\DeclareMathSymbol{\emptyset}{\mathord}{SY}{'306}
\DeclareMathSymbol{\oplus}{\mathord}{SY}{'305}
\newtheorem{theorem}{Theorem}{\bf}{\it}
\newtheorem{proposition}[theorem]{Proposition}{\bf}{\it}
\newtheorem{corollary}[theorem]{Corollary}{\bf}{\it}
{\bf}{\it}
\newtheorem{example}[theorem]{Example}{\it}{\rm}
\newtheorem{lemma}[theorem]{Lemma}{\bf}{\it}
\newtheorem{remark}[theorem]{Remark}{\it}{\rm}
{\bf}{\it}
{\bf}{\it}
{\bf}{\it}
\title[Finite propagation speed and free quantum fields on networks]
{Finite propagation speed and causal free quantum fields on networks}
\author[R. Schrader]{Robert Schrader}
\address{Robert Schrader\\ Institut f\"{u}r
Theoretische Physik\\ Freie Universit\"{a}t Berlin, Arnimallee 14
\\ D-14195 Berlin,
Germany}
\email{robert.schrader@fu-berlin.de}
\keywords{Metric graphs, Klein-Gordon equation, Wave Equation, Quantum field theory}
\subjclass{34B45,35L05,35L20}
\begin{document}

\begin{abstract}
Laplace operators on metric graphs give rise to Klein-Gordon and wave operators. 
Solutions of the Klein-Gordon equation and the wave equation are studied and 
finite propagation speed is established. Massive, free quantum fields are then constructed, whose 
commutator function is just the Klein-Gordon kernel. As a consequence of finite propagation speed 
Einstein causality (local commutativity) holds. Comparison is made with an alternative construction 
of free fields involving RT-algebras. 

\vspace{0.3cm}
PACS: 03.65.Nk, 03.70.+k, 73.21.Hb
\end{abstract}

\maketitle

\section{Introduction}
In the last years the study of quantum systems on networks has received an increasing attention. 
They are of interest for possible applications in condensed matter physics. 
In addition interesting mathematical structures appear giving rise to a host of attractive problems, 
see e.g. the articles in \cite{EKKST} and further references given there. 
In this article we study Klein-Gordon and wave equations on any metric graph and for any given
Laplace operator thereon. We establish existence, uniqueness and finite propagation speed for given 
initial data.
In addition we construct free quantum fields on arbitrary metric graphs. The construction of such 
fields was initiated in \cite{BellMin,BellBurrMinSo,BeMiSoI}. The results obtained there were 
applied to a study of spin transport and conductance \cite{BellBurrMinSo,BeMiSoI,Ragoucy}, incorporating
additional techniques developed in \cite{BellBurrMinSo,BeMiSoII}. The main tool for the construction 
of these fields was the use of (a simple version of) \emph{RT-algebras} \cite{CMRS,MRSI,MRSII}.
Also the construction there was limited to relatively simple graphs. 
The construction we present here does not involve RT-algebras and uses only standard and familiar 
methods of second quantization. However, we will be able to relate our construction 
to the RT-construction. Spin will not be considered. In order to avoid dealing with infrared problems, 
we will construct only massive and not massless quantum fields. 

We briefly outline our strategy and our results. As a starting point we choose the Hilbert 
space of square integrable functions on the graph as the 1-particle space. Next we make a choice of 
a self-adjoint Laplacian $-\Delta$ on the graph, which is not necessarily positive. However, 
$-\Delta$ will always be bounded below. To define $-\Delta$, we follow the discussion in \cite{KS1} 
by specifying boundary conditions at the vertices of the graph for the operator 
given as the second derivative acting on functions on the graph. 
Given the Laplacian and a mass $m>0$ and motivated by relativistic quantum theory, we introduce the 
energy operator $\sqrt{-\Delta+m^2}$, the d'Alembert operator (wave operator) $\Box=\partial_t^2-\Delta$ 
and the Klein-Gordon operator $\Box+m^2$. \footnote{We work in units where $\hbar=c=1$}
Unique solutions of the classical Klein-Gordon equation for given Cauchy data are then obtained by 
using 
\begin{equation}\label{wavekernel}
\frac{\sin\sqrt{-\Delta+m^2}t}{\sqrt{-\Delta+m^2}},
\end{equation}
which is the \emph{Klein-Gordon kernel} for $m>0$ and the \emph{wave kernel} for $m=0$ and which 
will be studied in detail. In particular finite propagation speed will be established. This notion 
makes sense, since on a metric graph the distance between two points is well defined, so the concepts of 
two \emph{events}, that is points in  space-time, being space-like separated makes sense. 
Finite propagation speed for solutions of the wave 
equation on smooth manifolds is well studied and understood, see e.g 
\cite{CheegerGromovTaylor,Evans,Taylor,TaylorI}.
So far for spaces with singularities finite propagation speed has been proved only for the case when 
the singularities are conical \cite{CheegerTaylor}.

Applying second quantization and a given choice of the Laplacian, we arrive at free fields which 
satisfy the Klein-Gordon equation.  They are hermitian as soon as the boundary conditions defining the 
Laplacian are chosen to be real, a notion that will be explained below and which is equivalent to 
time reversal invariance in quantum mechanics, when the Laplace operator is taken to be a Schr\"odinger 
operator. As usual, the non-hermitian scalar fields carry charge. 
For their construction we work with two Laplacians, one for the particle and the other one for the 
antiparticle. They are such that the boundary conditions defining them are the complex conjugates of 
each other, again a notion that will be explained in due time.  
Since the commutator is actually given by the kernel \eqref{wavekernel}, Einstein causality 
(local commutativity) is just another formulation of finite propagation speed. In other words, 
we show that the commutator vanishes for space-like separated events. 
Our proof is different from the standard proof of finite propagation speed on smooth manifolds.
Our methods, however, do not allow us to prove finite propagation speed and hence 
Einstein causality in full generality. As a matter of fact, we miss those space-like separated 
events, whose space components are both points in the interior of the 
graph. Theorem \ref{theo:mfinite} gives the precise conditions and statements. 

The article is organized as follows. In Section \ref{sec:laplace} we summarize several properties 
of Laplace operators on metric graphs in a form needed for the next sections. 
It includes a detailed discussion of their (improper) eigenfunctions. In fact since these eigenfunctions 
give us the integral kernel of the Klein-Gordon kernel, some of their properties are crucial for 
establishing finite propagation speed. In addition to recalling several results from 
\cite{KPS1,KS1,KS2,KS8}, we also establish new and relevant ones. This includes the following. 
Viewing the Laplacian as the Hamiltonian of a quantum dynamical system,
there is an associated scattering theory. As it turns out, the on-shell scattering matrix enters the 
eigenfunctions \cite{KS1} and hence also the integral kernel of the Klein-Gordon kernel. 
The crucial ingredients in proving finite propagation speed are the analytic properties of the S matrix.
In the single vertex case the information we gain on the S matrix is so detailed, that we are able to 
establish finite propagation speed even in the case that the Laplacian has bound states.

In Section \ref{sec:clKG} we discuss classical solutions of the Klein-Gordon and the wave equation. 
There we also formulate the finite propagation speed result, the proof of which is given in 
Appendix \ref{app:KGsupport}. In Section \ref{sec:relnew} we construct 
space-time dependent relativistic free fields, both hermitian and non-hermitian, 
that satisfy the Klein-Gordon equation and the same boundary conditions as those for the given Laplacian. 
There we also show, that their commutator function equals (minus) the Klein-Gordon kernel 
\eqref{wavekernel}. The proof of the orthonormality of the improper eigenfunctions of the Laplacian 
is given in Appendix \ref{app:orthoproof}. 

\section{Laplace Operators on Metric Graphs , their spectral properties and their 
eigenfunctions}\label{sec:laplace}
In this section and for the convenience of the reader, we recall the construction of self-adjoint 
Laplace operators on metric graphs in terms of boundary conditions. Also we list several of 
their properties, in particular of 
their eigenfunctions. They will be needed when we establish finite propagation speed and 
when we construct free fields and discuss some of their properties.
We start with some elementary concepts from graph theory. The material is mainly taken from \cite{KPS1}.
\vspace{0.3cm}
\subsection{Basic concepts}~~~\\
A finite graph is a 4-tuple $\cG=(\cV,\cI,\cE,\partial)$, where $\cV$ is a
finite set of \emph{vertices}, $\cI$ a finite set of \emph{internal
edges} and $\cE$ a finite set of \emph{external edges}. Elements in
$\cI\cup\cE$ are called \emph{edges}. $\partial$ is a map, which assigns to each
internal edge $i\in\cI$ an ordered pair of (possibly equal) vertices
$\partial(i):=\{v_1,v_2\}$ and to each external edge $e\in\cE$ a single
vertex $v$. The vertices $v_1=:\partial^-(i)$ and $v_2=:\partial^+(i)$ are
called the \emph{initial} and \emph{final} vertex of the internal edge
$i$, respectively. The vertex $v=\partial(e)$ is the initial vertex of the
external edge $e$. If $\partial(i)=\{v,v\}$, that is
$\partial^-(i)=\partial^+(i)$, then $i$ is called a \emph{tadpole}. 
A graph is \emph{compact} if $\cE=\emptyset$, otherwise it is
\emph{noncompact}. 
Two vertices $v$ and $v^\prime$ are called \emph{adjacent} if there is an
internal edge $i\in\cI$ such that $v\in\partial(i)$ and
$v^\prime\in\partial(i)$. A vertex $v$ and the (internal or external) edge
$j\in\cI\cup\cE$ are \emph{incident} if $v\in\partial(j)$.

We do not require the map $\partial$ to be injective. In particular, any two
vertices are allowed to be adjacent to more than one internal edge and two
different external edges may be incident with the same vertex. If
$\partial$ is injective and $\partial^-(i)\neq\partial^+(i)$ for all
$i\in\cI$, the graph $\cG$ is called \emph{simple}.
The \emph{degree} $\deg(v)$ of the vertex $v$ is defined as
\begin{equation*}
\deg(v)=|\{e\in\cE\mid\partial(e)=v\}|+|\{i\in\cI\mid\partial^-(i)
=v\}|+|\{i\in\cI\mid\partial^+(i)=v\}|,
\end{equation*}
that is, it is the number of (internal or external) edges incident with the
given vertex $v$ and by which every tadpole is counted twice.
A vertex is called a \emph{boundary vertex} if it is incident with at least one
external edge. The set of all boundary vertices will be denoted by $\partial
\cV$ such that $|\partial\cV|\le |\cE|$ holds.
The vertices not in $\partial \cV$, that is in 
$\cV_{\mathrm{int}}=\cV\setminus\partial \cV$ are
called \emph{internal vertices}.

The compact graph $\cG_{\mathrm{int}}=(\cV,\cI,\emptyset,\partial|_{\cI})$
will be called the \emph{interior} of the graph $\cG=(\cV,\cI,$ $
\cE,\partial)$. It is obtained from $\cG$ by eliminating all external
edges $e$. 
Correspondingly, if $\cE\neq \emptyset$, the noncompact graph 
$\cG_{\mathrm{ext}}=(\partial\cV,\emptyset,\cE,\partial|_{\cE})$
is called the \emph{exterior} of $\cG$. We will view both $\cG_{\mathrm{int}}$ and $\cG_{\mathrm{ext}}$ as 
subgraphs of $\cG$ with $\cG_{\mathrm{int}}\cap\cG_{\mathrm{ext}}=\partial\cV$.

Throughout the whole work we will from now on assume that the graph $\cG$ is connected,
that is, for any $v,v^\prime\in \cV$ there is an ordered sequence $\{v_1=v,
v_2,\ldots, v_{n-1}, v_n=v^\prime\}$ such that any two successive vertices in this
sequence are adjacent. In particular, this implies that any vertex of the
graph $\cG$ has nonzero degree, that is for any vertex there is at least one
edge with which it is incident. $\cG_{\mathrm{int}}$ is connected if $\cG$ is.
For connected $\cG$, the graph $\cG_{\mathrm{ext}}$ is connected if and only if $\partial\cV$ 
consists of one vertex only. 
By definition a \emph{single vertex graph} is a connected graph which has no internal edges, 
only one vertex, and at least one external edge. 
The \emph{star graph} $\cS(v)\subseteq \cE\cup\cI$ associated to the vertex $v\in \cV$
consists of the set of the edges adjacent to $v$ and of the vertex $v$.

We will endow the graph with the following metric structure. Any internal
edge $i\in\cI$ will be associated with an interval $I_i=[0,a_i]$ with $a_i>0$
such that the initial vertex of $i$ corresponds to $x=0$ and the final
one to $x=a_i$. The open interval $I_i^o=(0,a_i)$ will be called the interior of the edge $i$.
We call the number $a_i$ the length of the internal edge $i$.
Any external edge $e\in\cE$ will be associated with a
semi-line $I_e=[0,+\infty)$ whose interior is $I_e^o=(0,+\infty)$. 
The set of lengths $\{a_i\}_{i\in\cI}$, which will also be treated
as an element of $\R^{|\cI|}$, will be denoted by $\underline{a}$. 
A compact or noncompact graph $\cG$ endowed with a metric structure is called
a \emph{metric graph} $(\cG,\underline{a})$. For the purpose of a compact notation 
we set $a_e=\infty$ for $e\in\cE$. The metric structure induces a distance function 
$d(p,q)\ge 0$ with the familiar three properties 
\begin{itemize}
\item{$d(p,p)=0$}
\item{$d(p,q)=d(q,p)$}
\item{$d(p,q)\le d(p,p^{\prime})+d(p^{\prime},q)$}
\end{itemize}
for all $p,p^{\prime},q\in\cG$. This defines a topology on $(\cG,\underline{a})$, such that 
$d(p,q)$ is continuous in both variables.
For any $e,e^\prime\in\cE$ we call $\pdist(e,e^\prime)=d(\partial(e),\partial(e^\prime))$ 
the \emph{passage distance} from the external edge $I_e$ to the external edge $I_{e^\prime}$.
Thus $\pdist(e,e^\prime)=0$ if and only if $\partial(e)=\partial(e^\prime)$ and 
$\pdist(e,e^\prime)\ge \min_{i\in\cI}a_i>0$, whenever $\partial(e)\neq\partial(e^\prime)$.
$d(p,q)\ge \pdist(e,e^\prime)$ holds for any $p\in I_e$ and $q\in I_{e^\prime}$. 

On the graph $\cG$ there is a natural Lebesgue measure $dp$. 
In particular there is the 
Hilbert space $L^2(\cG)$ of square integrable functions on $\cG$. We write the scalar product as
\begin{equation}\label{scalarpconv}
\langle \psi,\phi\rangle_\cG=\int_{\cG}\overline{\psi(p)}\phi(p)dp 
\end{equation}
or simply $\langle \psi,\phi\rangle$, if the context is clear.
We write $x\in I_j=[0,a_j]$ for the coordinate of the point $p\in \cG$ 
if $p$ lies on the edge $j\in\cE\cup\cI$ at the point $x$  and we shall
say that the pair $(j,x)$ is the 
\emph{local coordinate} for $p$. For short and whenever convenient we will also view $(j,x)$ as a point in 
$\cG$. 
A complex valued function on the graph, or more precisely on $\cG\setminus\cV$,
may be considered to be a family
$\psi=\{\psi_{j}\}_{j\in\cE\cup\cI}$ of complex valued functions $\psi_{j}$
defined on $(0,a_j)$, so by the convention just made $\psi(j,x)=\psi_j(x)$.
With this notation the scalar product may be written as 
\begin{equation*}
\langle \psi,\phi\rangle=\sum_{j\in\cE\cup\cI}\int_0^{a_j}\overline{\psi_j(x)}
\phi_j(x)dx
\end{equation*}
Also we define the derivative $\psi^\prime=\partial_x\psi$ of $\psi$ as
$$
(\psi^\prime)_j(x)=\frac{d}{dx}\psi_j(x).
$$
We also introduce the following set of boundary values of $\psi$ and its derivative
as
\begin{equation}\label{bdvpsi}
\underline{\psi} = \begin{pmatrix} \{\psi_e(0)\}_{e\in\cE} \\
                                   \{\psi_i(0)\}_{i\in\cI} \\
                                   \{\psi_i(a_i)\}_{i\in\cI} \\
                                     \end{pmatrix},\qquad
\underline{\psi}' = \begin{pmatrix} \{\psi_e'(0)\}_{e\in\cE} \\
                                   \{\psi_i'(0)\}_{i\in\cI} \\
                                   \{-\psi_i'(a_i)\}_{i\in\cI} \\
                                     \end{pmatrix}.
\end{equation}
The ordering of the set $\cE$ is arbitrary but fixed as is the ordering in $\cI$. 
Given an ordering, in \eqref{bdvpsi} the boundary values on the external edges come 
first, then the boundary values at the initial vertices and finally the boundary 
values at the final vertices. 
Note also that $\underline{\psi}'$ is defined in terms of the inward normal derivative, 
which is intrinsic, that is independent of the special choice of the orientation 
on each of the internal edges.

The Laplace operator is defined as
\begin{equation*}
\left(-\Delta_{A,B}\psi\right)_j (x) = -\frac{d^2}{dx^2}
\psi_j(x),\qquad j\in\cI\cup\cE
\end{equation*}
with boundary conditions
\begin{equation}\label{bdycond}
A\underline{\psi} + B\underline{\psi}' = 0.
\end{equation}
$A$ and $B$ are $(|\cE|+2|\cI|)\times(|\cE|+2|\cI|)$ matrices. For later reference we rewrite this 
condition as 
\begin{equation}\label{bdycondrewr}
(A,B)\begin{pmatrix}\underline{\psi}\\\underline{\psi}'\end{pmatrix}=0,
\end{equation}
where $(A,B)$ is the $(|\cE|+2|\cI|)\times2(|\cE|+2|\cI|)$ matrix obtained by putting the matrices 
$A$ and $B$ next to each other. So \eqref{bdycondrewr} is the condition 
\begin{equation}\label{bdycondrewr1}
\begin{pmatrix}\underline{\psi}\\\underline{\psi}'\end{pmatrix}\in \Ker (A,B).
\end{equation}
The operator
$-\Delta_{A,B}$ is self-adjoint if and only if the matrix 
$(A,B)$ has maximal rank and the matrix $AB^\dagger$ is hermitian. Obviously 
for any invertible $C$ the pair $(CA,CB)$ gives the same boundary conditions since 
$\Ker (CA,CB)=\Ker (A,B)$. Moreover, with these conditions $\Ker (A,B)$ is a 
maximal isotropic subspace $\cM(A,B)$ w.r.t. the canonical hermitian symplectic form on 
$\cC^{2(|\cE|+2|\cI|)}$ and all hermitian subspaces can be written in this form, see \cite{KS1}. 
Moreover $\cM(A,B)=\cM(A^\prime,B^\prime)$ if 
and only if $A^\prime=CA,B^\prime=CB$  for some invertible $C$.
For a detailed discussion concerning the self-adjointness see \cite{KS1,KS8}.
In addition, if the pair $(A,B)$ satisfies these two conditions, so does the complex conjugate 
pair $(\bar{A},\bar{B})$ giving rise to the Laplacian $\Delta_{\bar{A},\bar{B},
\underline{a}}$. Let $n_+(AB^\dagger)$ be the number of positive eigenvalues 
of $AB^\dagger$, counting multiplicities. The identity 
\begin{equation}\label{nplusbar}
n_+(AB^\dagger)=n_+(\bar{A}\bar{B}^\dagger)
\end{equation}
is clear. In fact, $AB^\dagger$ and $\bar{A}\bar{B}^\dagger$ actually have the same spectrum.
\begin{proposition}\label{prop:spectrum}
The absolute continuous spectrum of each $-\Delta_{A,B}$ is the 
interval $[0,\infty)$. It has multiplicity equal to the number of external edges, $|\cE|$.
The number of negative eigenvalues, counting multiplicities, is at most 
$n_+(AB^\dagger)\;(\;\le |\cE|+2|\cI|\;)$. 
It is equal to $n_+(AB^\dagger)$ if $\cI=\emptyset$.
\end{proposition}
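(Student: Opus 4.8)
The plan is to treat the three assertions by different means: the absolutely continuous spectrum by a perturbative comparison, and the two counting statements by a quadratic-form/min--max argument together with an explicit eigenvalue computation. For the a.c.\ spectrum I would compare $-\Delta_{A,B}$ with a decoupled reference operator $-\Delta_0$ obtained by imposing, say, Dirichlet conditions at every vertex, so that $-\Delta_0=-\Delta_0^{\mathrm{ext}}\oplus-\Delta_0^{\mathrm{int}}$ splits into a direct sum over the $|\cE|$ half-lines $I_e=[0,\infty)$ and over the compact interior $\cG_{\mathrm{int}}$. On each half-line the Dirichlet Laplacian has purely a.c.\ spectrum $[0,\infty)$ of multiplicity one, so $-\Delta_0^{\mathrm{ext}}$ contributes a.c.\ spectrum $[0,\infty)$ of multiplicity exactly $|\cE|$, while $-\Delta_0^{\mathrm{int}}$, acting on finitely many compact intervals, has purely discrete spectrum. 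Since $-\Delta_{A,B}$ and $-\Delta_0$ are self-adjoint extensions of the same symmetric operator (second derivative with all vertex boundary values vanishing), whose deficiency indices equal $|\cE|+2|\cI|$, the difference of their resolvents has finite rank. Weyl's theorem then gives equality of essential spectra, and the Kato--Rosenblum theorem identifies the a.c.\ parts up to unitary equivalence, so the a.c.\ spectrum of $-\Delta_{A,B}$ is $[0,\infty)$ with multiplicity $|\cE|$.

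For the bound on negative eigenvalues I would integrate by parts on each edge, using the inward-normal convention built into $\underline\psi'$, to obtain the quadratic form
\[ \langle\psi,-\Delta_{A,B}\psi\rangle=\|\psi'\|^{2}+\langle\underline\psi,\underline\psi'\rangle, \]
valid for $\psi$ in the form domain, where $(\underline\psi,\underline\psi')\in\cM(A,B)$. Since $\cM(A,B)$ is maximal isotropic, the boundary term $Q(\psi):=\langle\underline\psi,\underline\psi'\rangle$ is real, and a linear-algebra computation identifies its number of negative squares on $\cM(A,B)$ with $n_+(AB^\dagger)$; when $B$ is invertible this is transparent, as $Q=-\underline\psi^\dagger(B^{-1}A)\underline\psi$ with $B^{-1}A$ Hermitian and congruent via $B$ to $AB^\dagger$. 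The bound then follows from min--max: on any subspace $V$ of the form domain on which the form is negative, the boundary map $\psi\mapsto(\underline\psi,\underline\psi')$ is injective, since a nonzero $\psi\in V$ with vanishing boundary data would give $\langle\psi,-\Delta_{A,B}\psi\rangle=\|\psi'\|^2\ge 0$, forcing $\psi$ constant on each edge and hence $\psi=0$; and on the image $Q$ is strictly negative because $\|\psi'\|^2\ge 0$. Thus $\dim V\le n_+(AB^\dagger)$, while $n_+(AB^\dagger)\le|\cE|+2|\cI|$ is immediate from the matrix size.

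For the equality when $\cI=\emptyset$ the graph is a single-vertex star and every edge is a half-line, so a negative eigenvalue $-\kappa^2$ with $\kappa>0$ forces the eigenfunction to be $\psi_e(x)=c_e e^{-\kappa x}$ on each external edge, the unique $L^2$ solution of $-\psi''=-\kappa^2\psi$. Then $\underline\psi=c$ and $\underline\psi'=-\kappa c$, so the boundary condition reduces to the generalized eigenvalue equation $(A-\kappa B)c=0$. Hence the negative eigenvalues of $-\Delta_{A,B}$, with multiplicities, are in bijection with the $\kappa>0$ at which $A-\kappa B$ is singular, counted by $\dim\Ker(A-\kappa B)$. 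For invertible $B$ these are exactly the positive eigenvalues of the Hermitian matrix $B^{-1}A$, whose number is $n_+(B^{-1}A)=n_+(AB^\dagger)$, yielding equality; this also shows why internal edges spoil equality, as their positive kinetic energy can compensate negative boundary directions and suppress bound states.

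The main obstacle is the case of non-invertible $B$, where both the signature identity $n_-(Q)=n_+(AB^\dagger)$ and the counting in the star case require genuine work rather than the clean $B^{-1}A$ calculation. I would handle it by passing to a convenient representative $(CA,CB)$ of the same boundary conditions and invoking the normal form of \cite{KS1}, in which the conditions decouple into Dirichlet, Neumann, and Robin blocks, and then checking that only a Robin block with positive Hermitian coefficient produces bound states, exactly $n_+(AB^\dagger)$ of them; one must additionally verify that no eigenvalue escapes at the threshold $\kappa\to 0$ and that the generalized-eigenvalue multiplicities match the inertia count, for which a deformation of $B$ to an invertible matrix preserving Hermiticity of $AB^\dagger$ while tracking crossings is an effective alternative. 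The a.c.-multiplicity claim rests on the finite-rank (hence trace-class) nature of the resolvent difference, which is routine given the vertex-local structure of the boundary conditions.
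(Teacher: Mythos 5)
Your proposal is correct, and it diverges from the paper in an instructive way. For the absolutely continuous spectrum you follow essentially the paper's own route: all $-\Delta_{A,B}$ are self-adjoint extensions of one symmetric operator with deficiency indices $(|\cE|+2|\cI|,|\cE|+2|\cI|)$, hence mutual finite-rank resolvent perturbations, and the spectrum is transferred from a decoupled comparison operator; the paper compresses this into a reference to Appendix A of \cite{AGHKH}, while you spell out the Dirichlet decoupling, Weyl's theorem and Kato--Rosenblum. The genuine difference is in the two counting statements, which the paper handles by citation (Theorem 3.7 of \cite{KS9}) rather than by proof: you instead give a variational argument (boundary quadratic form plus min--max) for the upper bound, and the generalized eigenvalue equation $(A-\kappa B)c=0$ for the star-graph equality --- the very equation the paper itself uses later in Proposition \ref{prop:starev}. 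So your route buys a self-contained and structurally transparent proof where the paper has a citation, at the cost of the linear-algebra work you correctly identify. That remaining work, however, is lighter than you think: the obstacle at non-invertible $B$ dissolves, because maximal rank of $(A,B)$ and hermiticity of $AB^\dagger$ give
\begin{equation*}
\Ker (A,B)=\Ran\begin{pmatrix}B^\dagger\\ -A^\dagger\end{pmatrix},
\end{equation*}
both sides having dimension $|\cE|+2|\cI|$ and the inclusion $\supseteq$ being exactly $AB^\dagger=BA^\dagger$; moreover $u\mapsto(B^\dagger u,-A^\dagger u)$ is injective since $\Ker A^\dagger\cap\Ker B^\dagger=0$ by maximal rank. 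Thus every boundary pair in $\cM(A,B)$ has this form, and $Q=\langle B^\dagger u,-A^\dagger u\rangle=-u^\dagger BA^\dagger u=-u^\dagger AB^\dagger u$, so $n_-\bigl(Q|_{\cM(A,B)}\bigr)=n_+(AB^\dagger)$ in full generality, with no normal form and no deformation. For $\cI=\emptyset$ the same parametrization turns $(A-\kappa B)c=0$, $c=B^\dagger u$, into the hermitian pencil equation $A^\dagger u=\kappa B^\dagger u$, whose positive-$\kappa$ solution count is $n_+(AB^\dagger)$ (e.g.\ via the Robin-block normal form you invoke, or by combining the paper's Proposition \ref{prop:starev} with the spectral decomposition \eqref{star1}); your deformation-of-$B$ alternative, with its threshold bookkeeping, can simply be dropped.
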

Below we shall see that the external edges provide a natural labeling for the multiplicities of the
absolutely continuous spectrum.
\begin{proof} 
We claim that all Laplacians $-\Delta_{A,B}$ are finite 
rank perturbations of each other, that is the difference of two resolvents is always a finite rank 
operator. To see this, consider the Hilbert space
\begin{equation}
\cH =\cH(\cE,\cI, \underline{a})= \cH_{\cE}\oplus\cH_{\cI},\qquad \cH_{\cE} = 
\oplus_{e\in\cE}\cH_e,\; \cH_{\cI} = \oplus_{i\in\cI}\cH_i,
\end{equation}
where $\cH_e = L^2([0,\infty),dx)$ for all $e\in\cE$ and $\cH_i = L^2([0, a_i],dx )$ for all $i\in\cI$.
Then $L^2(\cG)\cong\cH$. By $\cD_j$ with $j\in \cE\cup \cI$ 
denote the set of all $\psi_j\in \cH_j$ such that $\psi_j (x)$ and its derivative $\psi^\prime_j (x)$
are absolutely continuous and $\psi_j (x)$ is square integrable. 
Let $\cD_j^0$ denote the subset of consisting of elements $\psi_j$ which satisfy
\begin{align*}
\psi_j(0)&=\psi^\prime(0)=0\quad\quad\qquad\qquad\qquad\mbox{when}\quad j\in\cE\\
\psi_j(0)&=\psi^\prime(0)=\psi_j(a_j)=\psi^\prime(a_j)=0\quad\mbox{when}\quad j\in\cI.
\end{align*}
Let $\Delta^0$ be defined as the second derivative operator, $\Delta^0\psi=\psi^{\prime\prime}$, with domain
$$
\cD^0=\oplus_{j\in\cE\cup\cI}\cD^0_j\subset \cH.
$$
Then the deficiency index of $-\Delta^0$ is equal to $(|\cE|+2|\cI|,|\cE|+2|\cI|)$
and every self-adjoint extension is of the form $-\Delta_{A,B}$ for a suitable boundary 
condition $(A,B)$. Thus the claim follows by general results on self-adjoint extensions, see, e.g., 
Appendix A in \cite{AGHKH} and the references quoted there. 
The last statement is just Theorem 3.7 in \cite{KS9}. 
\end{proof}
We elaborate on the sufficient criterion $n_+(AB^\dagger)=0$ for the absence of negative eigenvalues.
For given boundary condition $(A,B)$ introduce the meromorphic matrix valued function 
in $\sk$
\begin{equation}\label{mathfracs}
\mathfrak{S}(\sk;A,B)=-(A+\ii \sk B)^{-1}(A-\ii \sk B).
\end{equation}
Observe that $\mathfrak{S}(\sk;CA,CB)=\mathfrak{S}(\sk;A,B)$ holds for all invertible 
$C$, so this function depends  only on the maximal isotropic subspace defined by $(A,B)$, 
$\mathfrak{S}(\sk;A,B)=\mathfrak{S}(\sk;\cM(A,B))$.
\begin{lemma}(\cite{KS1}, Theorem 2.1;\cite{KS8},Theorem 3.12,\cite{KS9}; 
Theorem 3.7)\label{lem:mathfracs}
$\mathfrak{S}(\sk;A,B)$ exists and is unitary for all $\sk>0$. 
Its poles lie on the imaginary axis. There are no poles on the positive imaginary axis
if and only if $AB^\dagger \le 0$ and then $-\Delta_{A,B}$ has no 
negative eigenvalues.
\end{lemma}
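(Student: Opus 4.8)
The plan is to handle the three assertions in turn, using throughout the two self-adjointness hypotheses — that $AB^\dagger$ is Hermitian (so $AB^\dagger = BA^\dagger$) and that $(A,B)$ has maximal rank $n := |\cE| + 2|\cI|$ — together with the single algebraic identity they produce. I would first establish, for every $\sk \in \C$, the matrix identity
\[
(A + \ii\sk B)(A^\dagger - \ii\sk B^\dagger) = AA^\dagger + \sk^2 BB^\dagger,
\]
since on expanding the left side the cross terms combine to $\ii\sk(BA^\dagger - AB^\dagger) = 0$. For real $\sk \neq 0$ the right side is positive definite: a vector $w$ in its kernel satisfies $A^\dagger w = B^\dagger w = 0$, i.e. $w^\dagger(A,B) = 0$, hence $w = 0$ by maximal rank. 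Thus $A + \ii\sk B$ is invertible and $\mathfrak{S}(\sk;A,B)$ exists for all $\sk > 0$. Writing $M_\pm = A \pm \ii\sk B$, so that $M_\pm^\dagger = A^\dagger \mp \ii\sk B^\dagger$ for real $\sk$, the same computation gives $M_+ M_+^\dagger = M_- M_-^\dagger = AA^\dagger + \sk^2 BB^\dagger$, whence $\mathfrak{S}^\dagger\mathfrak{S} = M_-^\dagger (M_+ M_+^\dagger)^{-1} M_- = M_-^\dagger(M_- M_-^\dagger)^{-1}M_- = \1$, so $\mathfrak{S}$ is unitary on $\sk > 0$.

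Second, the location of the poles. The poles of the rational matrix function $\mathfrak{S}(\cdot\,;A,B)$ lie at the zeros $\sk_0$ of $\det(A + \ii\sk B)$. Taking determinants in the identity above gives $\det(A + \ii\sk_0 B)\,\det(A^\dagger - \ii\sk_0 B^\dagger) = \det(AA^\dagger + \sk_0^2 BB^\dagger)$, so the right side vanishes and I may pick $w \neq 0$ in the kernel of $AA^\dagger + \sk_0^2 BB^\dagger$. Then $\|A^\dagger w\|^2 + \sk_0^2\|B^\dagger w\|^2 = 0$ with $B^\dagger w \neq 0$ (else $w^\dagger(A,B) = 0$), so $\sk_0^2 = -\|A^\dagger w\|^2/\|B^\dagger w\|^2 \le 0$, i.e. $\sk_0$ is purely imaginary.

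Third — the main point — the equivalence with $AB^\dagger \le 0$, for which I would pass to the canonical (Cayley) form. With $C := -(A + \ii B)^{-1}$, invertible by the first step at $\sk = 1$, a direct computation gives $CA = \tfrac12(U - \1)$ and $CB = \tfrac{\ii}2(U + \1)$, where $U := \mathfrak{S}(1;A,B)$ is unitary. Since $\mathfrak{S}(\sk; CA, CB) = \mathfrak{S}(\sk;A,B)$ and since $(CA)(CB)^\dagger = C(AB^\dagger)C^\dagger$ has the same inertia as $AB^\dagger$ by Sylvester's law, I may assume $A = \tfrac12(U-\1)$, $B = \tfrac{\ii}2(U+\1)$. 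Then $AB^\dagger = \tfrac{-\ii}{4}(U - U^\dagger)$, whose eigenvalues are $\tfrac12\sin\theta$ over the eigenvalues $e^{\ii\theta}$ of $U$; so $AB^\dagger \le 0$ iff every eigenvalue of $U$ lies in the closed lower unit semicircle. On the other hand $A - \kappa B = \tfrac12[(1 - \ii\kappa)U - (1 + \ii\kappa)\1]$ is singular for some $\kappa > 0$ iff $U$ has an eigenvalue equal to $(1 + \ii\kappa)/(1 - \ii\kappa)$, and as $\kappa$ runs through $(0,\infty)$ this Cayley image sweeps out exactly the open upper unit semicircle. Hence a pole sits on the positive imaginary axis at $\sk = \ii\kappa$ iff $U$ has an eigenvalue of positive imaginary part iff $AB^\dagger \not\le 0$, which is the equivalence in both directions. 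Finally, when $AB^\dagger \le 0$ one has $n_+(AB^\dagger) = 0$, so by Proposition \ref{prop:spectrum} the number of negative eigenvalues of $-\Delta_{A,B}$ is zero.

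I expect the Cayley-form step to be the crux. The implication ``no pole on the positive imaginary axis $\Rightarrow AB^\dagger \le 0$'' — equivalently, manufacturing a pole out of a positive eigenvalue of $AB^\dagger$ — is invisible from the identity used in the first two parts; it is the reduction to a single unitary $U$, together with the explicit conformal correspondence between $\{\kappa > 0\}$ and the upper semicircle, that makes both directions simultaneously transparent. The routine points left to verify are that the singularity at $\sk = \ii\kappa$ is a genuine (non-removable) pole, and that the semicircle endpoints $\pm 1$, where $\sin\theta = 0$, correctly correspond to the absence of any finite pole.
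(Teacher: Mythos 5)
Your proof is correct, and the comparison with the paper is necessarily one-sided: the paper offers no proof of Lemma \ref{lem:mathfracs} at all, but imports it from \cite{KS1}, \cite{KS8}, \cite{KS9}. What your argument buys is self-containedness. The identity $(A+\ii\sk B)(A^\dagger-\ii\sk B^\dagger)=AA^\dagger+\sk^2BB^\dagger$ (valid for all complex $\sk$ by hermiticity of $AB^\dagger$) cleanly yields existence, unitarity, and the confinement of the zeros of $\det(A+\ii\sk B)$ — hence of all poles — to the imaginary axis; the Sylvester-inertia step and the invariance $\mathfrak{S}(\sk;CA,CB)=\mathfrak{S}(\sk;A,B)$ justify the passage to the Cayley normal form $A=\tfrac12(U-\1)$, $B=\tfrac{\ii}{2}(U+\1)$, which is exactly the parametrization $(A(\sk),B(\sk))$ at $\sk=1$ that the paper itself borrows from \cite{KS2}, \cite{KS8} in the proof of Proposition \ref{prop:starev}; and the final clause (no negative eigenvalues) follows legitimately from Proposition \ref{prop:spectrum}, which precedes the lemma — though note that this makes that clause only as self-contained as the proposition, whose relevant part the paper in turn cites from \cite{KS9}.

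The one item you deferred — that singularity of $A-\kappa B$ for some $\kappa>0$ produces a genuine, non-removable pole of $\mathfrak{S}$ — is not cosmetic: it is precisely the implication ``$AB^\dagger\not\le 0\Rightarrow$ there is a pole on the positive imaginary axis'', i.e.\ one direction of the stated equivalence. But your normal form closes it in one line. Writing $U=\sum_m \e^{\ii\theta_m}P_m$ with distinct eigenvalues and mutually orthogonal nonzero eigenprojections, the definition of $\mathfrak{S}$ gives
\[
\mathfrak{S}(\sk)=-\sum_m \frac{(1+\sk)\e^{\ii\theta_m}-(1-\sk)}{(1-\sk)\e^{\ii\theta_m}-(1+\sk)}\,P_m
=\sum_m \frac{\sk+\ii\kappa_m}{\sk-\ii\kappa_m}\,P_m,
\qquad \kappa_m=\tan(\theta_m/2),
\]
which is precisely the representation \eqref{star1} that the paper quotes later for the single-vertex case. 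If some $\kappa_m\in(0,\infty)$, the $m$-th term has residue $2\ii\kappa_m P_m\neq 0$ at $\sk=\ii\kappa_m$, while all other terms are analytic there (the $\kappa_m$ being distinct), so the pole cannot cancel against anything; the endpoint eigenvalues $\pm1$ of $U$ (i.e.\ $\kappa_m=0$ or $\kappa_m=\infty$) contribute the constant terms $\pm P_m$ and hence no finite pole, exactly as you anticipated. With that display inserted, your argument is complete.
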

The condition $A^\dagger B\le 0$ has the following local formulation, see Definition 
2.3 in \cite{KS9}, in terms of vertex quantities and which will be used below.
By Proposition 4.2 in \cite{KS8} for given boundary conditions $(A,B)$ 
there is an invertible $C$ such that the two matrices $CA$ and $CB$ have a common 
block decomposition
\begin{equation}\label{block}
CA=\bigoplus_{v\in\cV}A(v)\qquad CB=\bigoplus_{v\in\cV}B(v)
\end{equation}
where the pair $(A(v),B(v))$ gives the boundary conditions at the vertex $v$.
Thus we obtain the 
\begin{lemma}\label{lem:decomp}
The following block decomposition holds for all $\sk$
\begin{equation}\label{block1}
\mathfrak{S}(\sk;A,B)=\bigoplus_{v\in\cV}\mathfrak{S}(\sk;A(v),B(v)).
\end{equation}
In particular, if the boundary conditions $(A,B)$ are such that $AB^\dagger\le 0$, 
then 
$A(v) B(v)^\dagger\le 0$ holds for all vertices $v$ and therefore no
$\mathfrak{S}(v;\sk)=\mathfrak{S}(\sk;A(v),B(v))$ has poles on the positive 
imaginary axis.
\end{lemma}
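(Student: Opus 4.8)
The plan is to exploit the two facts recorded just before the statement: first, that $\mathfrak{S}(\sk;A,B)$ depends only on the maximal isotropic subspace, so that $\mathfrak{S}(\sk;CA,CB)=\mathfrak{S}(\sk;A,B)$ for every invertible $C$; and second, that by Proposition 4.2 of \cite{KS8} one may choose a \emph{single} invertible $C$ realizing the common block decomposition \eqref{block}, namely $CA=\bigoplus_{v\in\cV}A(v)$ and $CB=\bigoplus_{v\in\cV}B(v)$. Combining these, I would replace $(A,B)$ by the block-diagonal pair $(CA,CB)$ at no cost, and then read off the decomposition directly from the definition \eqref{mathfracs}.

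Concretely, since $CA\pm\ii\sk CB=\bigoplus_{v\in\cV}\bigl(A(v)\pm\ii\sk B(v)\bigr)$ is block diagonal, its inverse (wherever it exists) is the block-diagonal matrix of the inverses, and the product of two block-diagonal matrices is block diagonal with the product of corresponding blocks. Hence
\begin{equation*}
\mathfrak{S}(\sk;A,B)=\mathfrak{S}(\sk;CA,CB)=-\Bigl(\bigoplus_{v}(A(v)+\ii\sk B(v))\Bigr)^{-1}\Bigl(\bigoplus_{v}(A(v)-\ii\sk B(v))\Bigr)=\bigoplus_{v\in\cV}\mathfrak{S}(\sk;A(v),B(v)),
\end{equation*}
which is \eqref{block1}. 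This first half is essentially bookkeeping; the only thing to keep in mind is that the same $C$ block-diagonalizes both $A$ and $B$, which is exactly what the cited proposition guarantees.

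For the second assertion the key observation is how the hypothesis $AB^\dagger\le 0$ transforms under $C$. Since $(CA)(CB)^\dagger=C\,AB^\dagger\,C^\dagger$, the matrix $(CA)(CB)^\dagger$ is obtained from $AB^\dagger$ by a congruence with the invertible matrix $C$; substituting $x\mapsto C^\dagger x$ shows at once that $AB^\dagger\le 0$ forces $(CA)(CB)^\dagger\le 0$ (indeed congruence by an invertible matrix preserves the inertia, by Sylvester). But $(CA)(CB)^\dagger=\bigoplus_{v\in\cV}A(v)B(v)^\dagger$ is block diagonal and hermitian, and a block-diagonal hermitian matrix is $\le 0$ if and only if each diagonal block is. Hence $A(v)B(v)^\dagger\le 0$ for every $v$.

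It then remains to apply Lemma \ref{lem:mathfracs} vertex by vertex. Maximal rank of each pair $(A(v),B(v))$ follows from maximal rank of the block-diagonal $(CA,CB)$, and $A(v)B(v)^\dagger$ is hermitian by the previous step, so each vertex condition is itself self-adjoint; the relevant direction of Lemma \ref{lem:mathfracs}, applied to $(A(v),B(v))$, then yields that $\mathfrak{S}(\sk;A(v),B(v))$ has no poles on the positive imaginary axis. I expect no serious obstacle: the whole argument is the interplay between the $C$-invariance of $\mathfrak{S}$ and the covariance (by congruence) of $AB^\dagger$, and the only point deserving care is that the single $C$ from Proposition 4.2 of \cite{KS8} simultaneously block-diagonalizes $A$, $B$ and hence $AB^\dagger$, so that all three decompositions are mutually compatible.
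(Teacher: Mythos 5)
Your proof is correct and follows essentially the same route as the paper, which presents Lemma \ref{lem:decomp} as an immediate consequence of the invariance $\mathfrak{S}(\sk;CA,CB)=\mathfrak{S}(\sk;A,B)$ together with the common block decomposition \eqref{block} from Proposition 4.2 of \cite{KS8}. Your write-up merely makes explicit the steps the paper leaves implicit (block-diagonal algebra, the congruence $(CA)(CB)^\dagger=C\,AB^\dagger C^\dagger$ preserving negative semidefiniteness, and the vertex-wise application of Lemma \ref{lem:mathfracs}), all of which are sound.
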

With the notation just introduced there is the following characterization of $\sk$-
independence.
\begin{lemma}\cite{KPS1}
$\mathfrak{S}(\sk;A,B)$ is $\sk$-independent if and only if $AB^\dagger=0$ and hence 
if and only if $A(v) B(v)^\dagger=0$ holds for all $v\in\cV$.
\end{lemma}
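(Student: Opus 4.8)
The plan is to dispose first of the second equivalence and then concentrate on the genuinely nontrivial one, namely that $\mathfrak{S}(\sk;A,B)$ is $\sk$-independent precisely when $AB^\dagger=0$. For the second equivalence I would simply invoke the common block decomposition of Lemma~\ref{lem:decomp}: choosing $C$ so that $CA=\bigoplus_v A(v)$ and $CB=\bigoplus_v B(v)$, one has $(CA)(CB)^\dagger=CAB^\dagger C^\dagger=\bigoplus_v A(v)B(v)^\dagger$, which vanishes if and only if $AB^\dagger=0$ (as $C$ is invertible) and if and only if $A(v)B(v)^\dagger=0$ for every $v\in\cV$. Since $\mathfrak{S}(\sk;A,B)=\mathfrak{S}(\sk;\cM(A,B))$ depends only on the maximal isotropic subspace, while the condition $AB^\dagger=0$ is likewise invariant under $(A,B)\mapsto(CA,CB)$ (again because $(CA)(CB)^\dagger=CAB^\dagger C^\dagger$ with $C$ invertible), I am free to compute with any convenient representative of $\cM(A,B)$.

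Exploiting this freedom, I would pass to the standard unitary parametrization of the self-adjoint boundary conditions (Kostrykin--Schrader, \cite{KS1}): every maximal isotropic $\cM(A,B)$ admits a representative $A=U-\1$, $B=\ii(U+\1)$ with $U$ unitary, and conversely every such pair is admissible (indeed $AA^\dagger+BB^\dagger=4\,\1$, so $(A,B)$ has maximal rank, and $AB^\dagger=-\ii(U-U^\dagger)$ is automatically hermitian). With this representative one computes
\begin{equation*}
A+\ii\sk B=(1-\sk)U-(1+\sk)\1,\qquad A-\ii\sk B=(1+\sk)U-(1-\sk)\1,
\end{equation*}
and the factor $(1-\sk)U-(1+\sk)\1$ is invertible for every $\sk>0$ (a kernel vector would force an eigenvalue $e^{\ii\theta}$ of $U$ with $|1-\sk|=1+\sk$, impossible for $\sk>0$), consistently with Lemma~\ref{lem:mathfracs}. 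In particular at $\sk=1$ one gets $A+\ii B=-2\,\1$ and $A-\ii B=2U$, so $\mathfrak{S}(1;A,B)=U$ for every $U$.

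It then remains to read off the $\sk$-dependence. Multiplying the relation $\mathfrak{S}(\sk;A,B)=U$ on the left by the invertible factor $(1-\sk)U-(1+\sk)\1$ and simplifying, one finds that at each fixed $\sk$ the equality $\mathfrak{S}(\sk;A,B)=U$ holds if and only if $(1-\sk)(U^2-\1)=0$. Hence if $U^2=\1$ this is satisfied for all $\sk>0$, so $\mathfrak{S}(\sk;A,B)\equiv U$ is constant; conversely $\sk$-independence forces $\mathfrak{S}(\sk;A,B)$ to equal its value $U$ at $\sk=1$ for every $\sk$, whence $(1-\sk)(U^2-\1)=0$ for $\sk\neq1$ and therefore $U^2=\1$. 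Since $U$ is unitary, $U^2=\1$ is equivalent to $U=U^\dagger$, i.e.\ to $AB^\dagger=-\ii(U-U^\dagger)=0$, which closes the argument. The point requiring the most care --- and the main obstacle --- is the justification that one may replace $(A,B)$ by the unitary-parametrized representative; this rests on the invariance of both $\mathfrak{S}$ and of the vanishing of $AB^\dagger$ under $(A,B)\mapsto(CA,CB)$, and it is exactly what lets me avoid analyzing $(A+\ii\sk B)^{-1}$ directly when $A$ or $B$ is singular, since in the $U$-form the inverse is controlled through the spectral decomposition of $U$.
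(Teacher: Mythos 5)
Your proof is correct. There is nothing in the paper to compare it against step by step: the lemma is quoted from \cite{KPS1} without proof (the paper only points to alternative characterizations in \cite{KS8}, Remark 3.9, and \cite{KPS1}, Proposition 2.4), so your argument serves as a self-contained substitute rather than a variant of an in-paper proof. Your handling of the second equivalence is exactly the intended use of Lemma \ref{lem:decomp}, since $(CA)(CB)^\dagger=C\,AB^\dagger\,C^\dagger$ and conjugation by an invertible matrix preserves vanishing. For the main equivalence, the passage to the representative $A=U-\1$, $B=\ii(U+\1)$, the invertibility check for $(1-\sk)U-(1+\sk)\1$, the identity $\mathfrak{S}(\sk;A,B)=U\Leftrightarrow(1-\sk)(U^2-\1)=0$, and the translation $AB^\dagger=-\ii(U-U^\dagger)$ are all correct, and both directions of the equivalence follow; as a bonus, your criterion $U^2=\1$ (a hermitian unitary) makes transparent the paper's subsequent claim \eqref{kindeps} that $\sk$-independent single-vertex S-matrices have the form $\1-2P$ with $P$ an orthogonal projector. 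The only step you delegate to the literature --- that every maximal isotropic subspace admits such a representative --- can be made self-contained with the paper's own Lemma \ref{lem:mathfracs}: since $A+\ii B$ is invertible, setting $U=\mathfrak{S}(1;A,B)$ gives
\begin{equation*}
U-\1=-2(A+\ii B)^{-1}A,\qquad \ii(U+\1)=-2(A+\ii B)^{-1}B,
\end{equation*}
so $(U-\1,\ii(U+\1))=C(A,B)$ with $C=-2(A+\ii B)^{-1}$ invertible, i.e.\ your representative defines the same subspace $\cM(A,B)$ and hence the same boundary conditions. With that one-line remark inserted, your proof rests entirely on facts already established or quoted in Section \ref{sec:laplace}.
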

Alternative characterizations of such boundary conditions are given in \cite{KS8}, Remark 3.9
and \cite{KPS1}, proposition 2.4. Thus in the single vertex case all $\sk$-independent 
S-matrices are of the form 
\begin{equation}\label{kindeps}
S=\1-2P
\end{equation}
with $P$ being an orthogonal projector and then $S^{-1}=S^\dagger=S$ holds.
In combination with theorem 3.7 in \cite{KS9} 
$-\Delta_{A,B}\ge 0$ follows for such boundary conditions, see also Lemma \ref{lem:mathfracs}.

The boundary conditions actually fix the graph. More precisely,
given  finite intervals $I_i\;(i\in\cI)$ and half lines $I_e\;(e\in\cE)$, 
and functions $\psi=\{\psi_j\}_{j\in \cI\cup\cE}$ on them, generically the boundary condition 
\eqref{bdycond} given by the pair $(A,B)$ uniquely fixes the graph $\cG$ with a maximal 
set of vertices, such that the boundary conditions are {\it local}, see \cite{KS1,KS8}
for details.
 
For given $l\in\cE$ consider the following solution $\psi^{l}(\sk)$ of
the stationary Schr\"{o}dinger equation at energy $\sk^2>0$,
\begin{equation}\label{schroedinger}
-\Delta_{A,B}\psi^{l}(\,;\sk)=\sk^2\psi^{l}(\,;\sk)
\end{equation}
and of the form
\begin{equation}\label{psidef}
\psi^{l}_{j}(x;\sk)=\begin{cases} 
  \e^{-\ii\sk x}\delta_{jl} + S(\sk)_{jl} \e^{\ii\sk x} & \text{for}\;j\in\cE \\
&\\
                                  \alpha(\sk)_{jl} \e^{\ii\sk x}+
        \beta(\sk)_{jl} \e^{-\ii\sk x} & \text{for}\; j\in\cI. \end{cases}
\end{equation}
So intuitively we are looking at what happens to an incoming plane wave 
$\e^{-\ii\sk x}$ in channel $l$ when it moves through the graph. Observe that choosing the Laplacian
$-\Delta_{A,B}$ as Schr\"{o}dinger operator, quantum mechanically this means that we have free motion away 
from the vertices.  The vertices in turn act as \emph{beam splitters} in a way described by the boundary 
condition $(A,B)$.

The number $S(\sk)_{jl}$ for $j\neq l$ is the \emph{transmission amplitude}
from channel $l\in\cE$ to channel $j\in\cE$ and $S(\sk)_{ll}$ is the
\emph{reflection amplitude} in channel $l\in\cE$. So their absolute value squares may be
interpreted as transmission and reflection probabilities, respectively. 
The elements $S(\sk)_{jl}$ combine to form the scattering matrix 
$$
S(\sk)=S_{A,B}(\sk).
$$
The ``interior'' amplitudes $\alpha(\sk)_{jl}=\alpha_{A,B}(\sk)_{jl}$ 
and $\beta(\sk)_{jl}=\beta_{A,B}(\sk)_{jl}$ are
also of interest, since they describe how an incoming wave moves through a
graph before it is scattered into an outgoing channel. 

The condition that $\psi^{l}(\,;\sk)$ satisfies the boundary condition leads to the 
solution
\begin{equation}\label{defs}
\begin{pmatrix} S(\sk)\\
                      \alpha(\sk)\\
                    \beta(\sk)\end{pmatrix}
                      =-Z(\sk)^{-1}(A-\ii\sk B)
              \begin{pmatrix} \1\\
                               0\\
                               0 \end{pmatrix}
\end{equation}
with the matrices
\begin{align}\label{zdef}
Z(\sk)&=Z_{A,B}(\sk)\,= A X(\sk)+\ii\sk B Y(\sk)\\\nonumber
X(\sk)&=X(\sk;\underline{a})\quad=\begin{pmatrix}\1&0&0\\
                                  0&\1&\1\\
               0&\e^{\ii\sk\underline{a}}&\e^{-\ii\sk\underline{a}}
               \end{pmatrix}\\\nonumber
Y(\sk)&=Y(\sk;\underline{a})\quad=\begin{pmatrix}\1&0&0\\
                                  0&\1&-\1\\
               0&-\e^{\ii\sk\underline{a}}&\e^{-\ii\sk\underline{a}}
               \end{pmatrix}.
\end{align}
The diagonal $|\cI|\times |\cI|$ matrices $\e^{\pm \ii\sk\underline{a}}$
are given by
\begin{equation*}
\e^{\pm \ii\sk\underline{a}}_{\;\quad jk}=\e^{\pm \ii\sk a_{j}}\delta_{jk}\quad
                       \text{for}\quad j,k\in\;\cI.
\end{equation*}
By construction $Z(\sk;A,B,\underline{a})$ is entire in $\sk\in\C$. 
For Neumann boundary conditions the scattering 
is trivial, $S_{A=0,B=\1}(\sk)=\1$.

The $\psi^l(\,;\sk)$ are not in $L^2(\cG)$ , but rather improper eigenfunctions. 
Their main properties are collected in 
\begin{proposition}\label{prop:indep}
For fixed $\sk>0$ the $\psi^l(\,;\sk)$ are linearly 
independent. Any function $\psi$ on $\cG$ satisfying 
$-\Delta_{A,B}\psi=\sk^2\psi$ is a linear combination of these 
$\psi^l(\,;\sk)$, provided $\sk^2$ is not a discrete eigenvalue of 
$-\Delta_{A,B}$.
\end{proposition}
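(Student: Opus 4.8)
The plan is to treat the two assertions separately, reducing both to linear algebra governed by the matrix $Z(\sk)$ of \eqref{zdef}. For linear independence, suppose $\sum_{l\in\cE}c_l\psi^l(\,;\sk)=0$. Restricting this identity to a fixed external edge $j\in\cE$ and using \eqref{psidef}, the left-hand side reads $c_j\e^{-\ii\sk x}+\big(\sum_l c_l S(\sk)_{jl}\big)\e^{\ii\sk x}$. Since $\e^{-\ii\sk x}$ and $\e^{\ii\sk x}$ are linearly independent for $\sk>0$, the coefficient $c_j$ of the incoming wave must vanish; as $j$ was arbitrary, all $c_l=0$. The point is simply that $\psi^l$ is the unique member of the family carrying an incoming wave exactly in channel $l$.

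For completeness I would first record how the boundary values of a general solution are encoded by $Z(\sk)$. Writing an arbitrary solution of $-\Delta_{A,B}\psi=\sk^2\psi$ as $\psi_e(x)=u_e\e^{-\ii\sk x}+w_e\e^{\ii\sk x}$ on each $e\in\cE$ and $\psi_i(x)=\alpha_i\e^{\ii\sk x}+\beta_i\e^{-\ii\sk x}$ on each $i\in\cI$, a direct evaluation of \eqref{bdvpsi} gives $\underline{\psi}=X(\sk)(w,\alpha,\beta)^{\mathsf{T}}+(u,0,0)^{\mathsf{T}}$ and $\underline{\psi}'=\ii\sk\big(Y(\sk)(w,\alpha,\beta)^{\mathsf{T}}-(u,0,0)^{\mathsf{T}}\big)$. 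Hence, by \eqref{zdef}, the boundary condition \eqref{bdycond} is equivalent to
\begin{equation*}
Z(\sk)\begin{pmatrix} w\\\alpha\\\beta\end{pmatrix}=-(A-\ii\sk B)\begin{pmatrix} u\\0\\0\end{pmatrix}.
\end{equation*}
If $Z(\sk)$ is invertible, this determines $(w,\alpha,\beta)$ uniquely from the incoming data $u\in\C^{|\cE|}$, so the solution space is parametrized bijectively by $u$, has dimension $|\cE|$, and comparison with \eqref{defs} shows the solution attached to $u$ is exactly $\sum_l u_l\,\psi^l(\,;\sk)$. Together with their independence, the $\psi^l$ then form a basis.

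It therefore remains to show that $Z(\sk)$ is invertible whenever $\sk^2$ is not a discrete eigenvalue, and this is the step I expect to be the crux. I would argue the contrapositive: if $Z(\sk)$ is singular, choose $0\neq(w,\alpha,\beta)\in\Ker Z(\sk)$, which by the displayed equation with $u=0$ yields a nonzero solution $\psi$ of the boundary-value problem whose external components are purely outgoing, $\psi_e(x)=w_e\e^{\ii\sk x}$. The key is a flux identity: on each edge the Wronskian $W_j=\overline{\psi_j'}\psi_j-\overline{\psi_j}\psi_j'$ is constant in $x$ because $\sk^2$ is real, and Green's identity summed over all edges expresses $\sum_{e\in\cE}W_e$ through the Hermitian symplectic boundary form of $(\underline{\psi},\underline{\psi}')$ at the vertices, which vanishes since $(\underline{\psi},\underline{\psi}')\in\cM(A,B)$ is maximal isotropic. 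A short computation gives $W_e=-2\ii\sk(|w_e|^2-|u_e|^2)$, so with $u=0$ one gets $\sum_e|w_e|^2=0$, forcing $w=0$. Consequently $\psi$ is supported on the compact interior $\cG_{\mathrm{int}}$, lies in $L^2(\cG)$, and is a genuine eigenfunction, so $\sk^2$ is a discrete eigenvalue. The main obstacle is exactly this flux argument: the whole proof hinges on converting the self-adjointness of $-\Delta_{A,B}$ (vanishing of the vertex boundary form) into the statement that a purely outgoing zero mode of $Z(\sk)$ must have vanishing external part, which is what upgrades it to an honest $L^2$ eigenfunction and thereby excludes extra solutions away from the discrete spectrum.
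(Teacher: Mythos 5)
Your proof is correct, and its skeleton coincides with the paper's: both parts reduce to the invertibility of $Z(\sk)$. The linear independence argument via incoming waves is exactly the paper's (which merely calls it clear), and your parametrization of all solutions by the incoming data $u$ is the same reduction the paper performs in different packaging — the paper subtracts $\sum_{k\in\cE}v_k\psi^k(\,;\sk)$ to obtain a solution with no incoming waves, whose coefficient vector then lies in $\Ker Z(\sk)$. Where you genuinely diverge is at the step you correctly identified as the crux. The paper does not prove that $Z(\sk)$ is invertible away from eigenvalues: it gets this for free from the definition of $\Sigma^>$ as the zero set of $\det Z_{A,B}(\sk)$ combined with the cited Theorem \ref{theo:posev} (from [KS1] Thm.\ 3.1 and [KS9] Lemma 3.1), which asserts that $\sk\in\Sigma^>$ precisely when $\sk^2$ is a positive (embedded) eigenvalue. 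You instead prove the needed direction of that theorem from scratch: constancy of the edge Wronskians plus isotropy of $\cM(A,B)$ forces the outgoing amplitudes of a purely outgoing element of $\Ker Z(\sk)$ to vanish, leaving a nonzero $L^2$ eigenfunction supported in $\cG_{\mathrm{int}}$ — which is in essence how the last assertion of Theorem \ref{theo:posev} (eigenfunctions of positive eigenvalues vanish on all external edges) is proved in the cited literature. So your argument is self-contained where the paper's is citation-dependent; the price is length, the gain is that your proof simultaneously establishes the half of Theorem \ref{theo:posev} that the proposition actually needs, and your flux computation is verifiably sound (the internal-edge Wronskian terms cancel by constancy, leaving $\sum_{e\in\cE}|w_e|^2=\sum_{e\in\cE}|u_e|^2$ from isotropy).
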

The proof will be given in a moment. The next proposition will be play an important r\^ole in our 
construction of free quantum fields on the graph $\cG$. Set 
\begin{equation}\label{sigma>}
\Sigma^>=\Sigma^>_{A,B}=\{\sk >0\mid \det Z_{A,B}(\sk)=0\}.
\end{equation}
\begin{proposition}\label{prop:ortho} 
The improper eigenfunctions $\psi^l(\,;\sk)$ satisfy the the following orthogonality 
relations
\begin{equation}\label{ortho}
\langle \psi^l(\,;\sk), \psi^{l^\prime}(\,;\sk^\prime)\rangle
=2\pi\delta_{l,l^\prime}\delta(\sk-\sk^\prime)\qquad \sk,\sk^\prime
\in\R_+\setminus \Sigma^>.
\end{equation}
For any $\sk\in\R_+\setminus \Sigma^>$ they span the space associated to the absolutely 
continuous spectrum and so the 
multiplicity of the absolute continuous spectrum equals $|\cE|$.
In particular, if there are no discrete eigenvalues, then 
the $\psi^l(\,;\sk)$ form a
complete set of improper eigenfunctions of $-\Delta_{A,B}$ in $L^2(\cG)$.
\end{proposition}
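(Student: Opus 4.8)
The plan is to prove the orthogonality relation \eqref{ortho} by a Green's-identity (integration-by-parts) computation that isolates a single surviving boundary term at spatial infinity, and then to read off the spanning and completeness statements from the multiplicity count of Proposition \ref{prop:spectrum} together with the linear independence of Proposition \ref{prop:indep}. Concretely, I would first truncate every external edge to $[0,L]$ and form the cut-off scalar product $\langle\psi^l(\,;\sk),\psi^{l'}(\,;\sk')\rangle_L$. Since $\psi^l(\,;\sk)$ and $\psi^{l'}(\,;\sk')$ solve $-\psi''=\sk^2\psi$ and $-\phi''=\sk'^2\phi$ respectively, integrating $\overline{\psi}\,\phi''-\overline{\psi''}\,\phi$ edge by edge gives
\[
(\sk^2-\sk'^2)\,\langle\psi^l(\,;\sk),\psi^{l'}(\,;\sk')\rangle_L
=\sum_{j\in\cE\cup\cI}\Big[\overline{\psi^l_j}\,\partial_x\psi^{l'}_j-\overline{\partial_x\psi^l_j}\,\psi^{l'}_j\Big]_{0}^{L_j},
\]
with $L_j=a_j$ on internal edges and $L_j=L$ on external ones. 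The contributions at the finite endpoints ($x=0$ on every edge and $x=a_i$ on internal edges) assemble, with the inward-normal convention of \eqref{bdvpsi}, precisely into the canonical hermitian symplectic form paired on the boundary data of $\psi^l(\,;\sk)$ and $\psi^{l'}(\,;\sk')$. Both data lie in the maximal isotropic subspace $\cM(A,B)=\Ker(A,B)$ defining the self-adjoint operator $-\Delta_{A,B}$, so this form vanishes identically; here the interior amplitudes $\alpha,\beta$ enter only through \eqref{defs}, which is what makes \eqref{bdycond} hold, and the hypothesis $\sk,\sk'\notin\Sigma^>$ is needed merely so that $Z_{A,B}(\sk)$ is invertible and $\psi^l(\,;\sk)$ exists.

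Only the term at $x=L$ on the external edges then survives. Inserting the plane-wave form \eqref{psidef} for $j\in\cE$, summing over $e\in\cE$, and using $\sum_e\delta_{el}\delta_{el'}=\delta_{ll'}$ together with the unitarity $\sum_e\overline{S(\sk)_{el}}\,S(\sk)_{el'}=(S(\sk)^\dagger S(\sk))_{ll'}=\delta_{ll'}$ of the on-shell scattering matrix (cf.\ Lemma \ref{lem:mathfracs}), the quantity $\langle\psi^l,\psi^{l'}\rangle_L$ decomposes into pieces proportional to $\e^{\pm\ii(\sk+\sk')L}/(\sk+\sk')$ and pieces proportional to $\e^{\pm\ii(\sk-\sk')L}/(\sk-\sk')$. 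On $\R_+$ the denominator $\sk+\sk'$ is bounded away from zero, so the first group has a smooth coefficient and tends weakly to $0$ as $L\to\infty$ by Riemann--Lebesgue. In the second group I would add and subtract the coincidence value: the remainder $\big[(S(\sk)^\dagger S(\sk'))_{ll'}-\delta_{ll'}\big]/(\sk-\sk')$ extends smoothly across $\sk=\sk'$ (because $S$ is analytic for real positive argument) and again oscillates away, while the leading piece collapses to $2\delta_{ll'}\,\sin\!\big((\sk-\sk')L\big)/(\sk-\sk')$, which converges distributionally to $2\pi\,\delta_{ll'}\,\delta(\sk-\sk')$. Letting $L\to\infty$ yields \eqref{ortho}.

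For the remaining statements I would argue spectrally. Relation \eqref{ortho} says exactly that the generalized Fourier transform $(\fF\psi)_l(\sk)=(2\pi)^{-1/2}\langle\psi^l(\,;\sk),\psi\rangle$ is isometric into $\bigoplus_{l\in\cE}L^2(\R_+,d\sk)$ and intertwines $-\Delta_{A,B}$ with multiplication by $\sk^2$; thus $\fF$ maps into the absolutely continuous subspace and diagonalizes $-\Delta_{A,B}$ there. Since Proposition \ref{prop:spectrum} gives that the absolutely continuous spectrum is $[0,\infty)$ with multiplicity exactly $|\cE|$, while by Proposition \ref{prop:indep} we have produced exactly $|\cE|$ linearly independent generalized eigenfunctions for each $\sk$, the isometry must exhaust the absolutely continuous subspace; hence the $\psi^l(\,;\sk)$ span it and its multiplicity is $|\cE|$. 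Finally, recalling from the proof of Proposition \ref{prop:spectrum} that each $-\Delta_{A,B}$ is a finite-rank resolvent perturbation of the others and that the associated resolvent is meromorphic (all amplitudes are built from $Z_{A,B}(\sk)^{-1}$), singular continuous spectrum is excluded; so if there are in addition no discrete eigenvalues the absolutely continuous subspace is all of $L^2(\cG)$, and the $\psi^l(\,;\sk)$ form a complete set.

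I expect the main obstacle to be the distributional bookkeeping of the $x=L$ boundary term: one must verify, as an identity between distributions in the pair $(\sk,\sk')$ tested against Schwartz functions on $\R_+\times\R_+$, that every non-singular piece---both the $(\sk+\sk')$-oscillatory terms and the regularized remainder in the $(\sk-\sk')$ channel---really vanishes in the limit, so that no stray principal-value contribution accompanies $\delta(\sk-\sk')$. The unitarity and the analyticity of $S(\sk)$ away from $\Sigma^>$ are precisely what turn these remainders into legitimate smooth multipliers and justify the Riemann--Lebesgue step.
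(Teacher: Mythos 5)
Your proposal is correct and follows essentially the same route as the paper's own proof in Appendix \ref{app:orthoproof}: truncate the external edges, apply Green's identity so that the vertex terms vanish by the boundary conditions (your maximal-isotropy phrasing is the same fact), and then use unitarity of $S(\sk)$, its smoothness on $\R_+\setminus\Sigma^>$ (Corollary \ref{cor:sigma}) to control the $(\sk-\sk')$-difference quotient, the Riemann--Lebesgue lemma for the oscillatory remainders, and the $\sin\bigl(R(\sk-\sk')\bigr)/(\sk-\sk')$ kernel to produce $2\pi\delta_{ll'}\delta(\sk-\sk')$. The spanning and completeness statements are likewise deduced, as in the paper, from Propositions \ref{prop:spectrum} and \ref{prop:indep}, with your spectral-measure bookkeeping simply making explicit what the paper leaves terse.
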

That there are no discrete eigenvalues means that i) $-\Delta_{A,B}\ge 0$, ii) there are 
no positive eigenvalues and iii) zero is not an eigenvalue. 
The proof of \eqref{ortho} will be given in Appendix \ref{app:orthoproof}.
The remainder follows from the previous proposition.
Recalling the notational convention \eqref{scalarpconv}, \eqref{ortho} reads as
\begin{equation}\label{ortho1}
\int_\cG \overline{\psi^l(p;\sk)}\,\psi^{l^\prime}(p;\sk^\prime)\,dp=2\pi
\delta_{l,l^\prime}\delta(\sk-\sk^\prime).
\end{equation}
For the proof we will need a result concerning the existence of positive (= embedded) eigenvalues. 
\begin{theorem}(\cite{KS1}, Theorem 3.1,\cite{KS9}, Lemma 3.1)\label{theo:posev}
$-\Delta_{A,B}$ 
has a positive eigenvalue $E=\sk^2$ if and only if $\sk\in \Sigma^>$. The multiplicity 
$n(\sk)$ is finite. The set $\Sigma^>$ is discrete and has no finite 
accumulation point in $\R_+$.
Any eigenfunction to a positive eigenvalue is identically zero 
on any external edge.
\end{theorem}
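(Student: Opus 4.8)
The plan is to reduce the eigenvalue equation $-\Delta_{A,B}\psi=\sk^2\psi$ to the linear-algebra condition $\det Z_{A,B}(\sk)=0$ by passing to plane-wave amplitudes. For real $\sk^2>0$ every solution of $-\psi_j''=\sk^2\psi_j$ has the form $\psi_e(x)=p_e\e^{-\ii\sk x}+q_e\e^{\ii\sk x}$ on an external edge $e$ and $\psi_i(x)=\alpha_i\e^{\ii\sk x}+\beta_i\e^{-\ii\sk x}$ on an internal edge $i$. On the half-line $I_e$ neither exponential is square integrable, so $\psi\in L^2(\cG)$ forces $p_e=q_e=0$; this already establishes the last assertion, that an eigenfunction vanishes on every external edge. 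Writing $w=(q,\alpha,\beta)^{\mathrm{T}}$ and evaluating \eqref{bdvpsi}, a short computation gives $\underline{\psi}=X(\sk)\,w+(p,0,0)^{\mathrm{T}}$ and $\underline{\psi}'=\ii\sk\bigl(Y(\sk)\,w-(p,0,0)^{\mathrm{T}}\bigr)$, so that the boundary condition \eqref{bdycond} becomes $Z(\sk)\,w=-(A-\ii\sk B)(p,0,0)^{\mathrm{T}}$, in agreement with \eqref{defs}. For an eigenfunction $p=q=0$, whence $Z(\sk)\,w=0$ with $w=(0,\alpha,\beta)\neq 0$, and therefore $\det Z(\sk)=0$.

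For the converse I start from a nonzero $w\in\Ker Z(\sk)$ and build $\psi$ from $w$ with $p=0$; it solves the equation and satisfies \eqref{bdycond}, but a priori $\psi_e(x)=q_e\e^{\ii\sk x}$ need not be square integrable. The step I expect to be the crux is to show that $q=0$ automatically. I would obtain this from a flux identity: since $\sk^2$ is real, Green's identity $\int(\overline{\psi}\Delta\psi-\overline{\Delta\psi}\,\psi)=0$, applied on the graph truncated to length $R$ on each external edge, produces boundary terms that split into a part sitting at the vertices and a part at the truncation tips. The vertex part is exactly (minus) the hermitian symplectic form of $(\underline{\psi},\underline{\psi}')$ with itself and hence vanishes because $(\underline{\psi},\underline{\psi}')\in\cM(A,B)$ is isotropic. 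What remains is the tip flux $\sum_{e\in\cE}\bigl(\overline{\psi_e}\psi_e'-\overline{\psi_e'}\psi_e\bigr)|_{x=R}=2\ii\sk\sum_{e\in\cE}|q_e|^2$, which must therefore vanish, forcing $q=0$. Consequently $\psi$ vanishes on every external edge, lies in $L^2(\cG)$, and is nonzero because $\e^{\ii\sk x}$ and $\e^{-\ii\sk x}$ are linearly independent on $[0,a_i]$ while $(\alpha,\beta)\neq 0$; thus $\sk^2$ is an eigenvalue. This flux argument is the one genuinely analytic ingredient; the rest is bookkeeping.

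The same flux identity shows $\Ker Z(\sk)\subseteq\{w:q=0\}$ for every $\sk>0$, so reading off the interior amplitudes $(\alpha,\beta)$ is a linear isomorphism from the $\sk^2$-eigenspace onto $\Ker Z(\sk)$; since $Z(\sk)$ is a square matrix of size $|\cE|+2|\cI|$, the multiplicity $n(\sk)=\dim\Ker Z(\sk)\le|\cE|+2|\cI|$ is finite. For discreteness, recall that $Z(\sk)$ is entire in $\sk$, hence so is $\det Z(\sk)$. It cannot vanish identically: otherwise the equivalence just proved would make every $\sk^2\in(0,\infty)$ an eigenvalue of the self-adjoint operator $-\Delta_{A,B}$, whereas eigenvectors for distinct eigenvalues are orthogonal and the separable space $L^2(\cG)$ admits at most countably many mutually orthogonal directions. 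Thus $\det Z(\sk)\not\equiv 0$, and the zeros of a nonzero entire function are isolated with no finite accumulation point; in particular $\Sigma^>$ is discrete and has no finite accumulation point in $\R_+$.
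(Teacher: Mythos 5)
Your proof is correct, but note that the paper itself offers no proof of Theorem \ref{theo:posev}: the statement is imported from \cite{KS1} (Theorem 3.1) and \cite{KS9} (Lemma 3.1), so there is no internal proof to compare against. What can be compared is the machinery, and there your route coincides with what the paper's framework suggests: the reduction of the eigenvalue problem to $\det Z_{A,B}(\sk)=0$ via the amplitude vector $w=(q,\alpha,\beta)$ is exactly the computation the paper performs in the proof of Proposition \ref{prop:indep} (there run in the direction that a solution without incoming waves has amplitudes annihilated by $Z(\sk)$, cf. \eqref{posevbdy}), and Remark \ref{re:completeness} asserts, without details, that these arguments can be reversed to identify $\Sigma$ with the zero set of $\det Z(\sk)$. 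The genuinely analytic step you supply --- Green's identity on the truncated graph, with the vertex terms killed by the maximal isotropy of $\Ker(A,B)$ and the surviving tip flux $2\ii\sk\sum_{e\in\cE}|q_e|^2$ forcing $q=0$ --- is precisely the missing ingredient that makes the reversal work; it simultaneously yields the vanishing of eigenfunctions on external edges, the identification of the $\sk^2$-eigenspace with $\Ker Z(\sk)$ (hence finite multiplicity), and the inclusion $\Ker Z(\sk)\subseteq\{w:q=0\}$. Your separability argument for $\det Z\not\equiv 0$ (an uncountable orthogonal family cannot exist in $L^2(\cG)$) is a clean, soft alternative to any quantitative estimate on $Z(\sk)$. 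One small point worth tightening: on an external edge, concluding $p_e=q_e=0$ requires that no \emph{nonzero linear combination} $p_e\e^{-\ii\sk x}+q_e\e^{\ii\sk x}$ lies in $L^2([0,\infty))$, which is slightly more than the non-integrability of each exponential separately; it follows because $|p_e\e^{-\ii\sk x}+q_e\e^{\ii\sk x}|^2$ has mean value $|p_e|^2+|q_e|^2$ over a period, so its integral diverges unless both coefficients vanish.
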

For special boundary conditions, one can obtain many positive eigenvalues, 
just take for example Dirichlet or Neumann boundary conditions everywhere. 
On the other hand, there are also nontrivial boundary conditions, that is ones which do not decouple 
the external edges from the internal ones, and which give positive 
eigenvalues, see Example 3.2 in \cite{KS1} and Example 4.3 in \cite{KS4}. Also there are examples 
with \emph{standard boundary conditions} (cf. e.g. Example 4.5 in \cite{KS8} for the definition), 
for which there are positive eigenvalues \cite{Kos}.
\begin{corollary}\label{cor:sigma}
The quantities $S(\sk),\alpha(\sk)$ and $\beta(\sk)$ depend smoothly on 
$\sk\in \R_+\setminus \Sigma^>$.
\end{corollary}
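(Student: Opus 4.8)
The plan is to read the claim off directly from the explicit formula \eqref{defs}, together with the fact noted immediately after \eqref{zdef} that $Z(\sk)=Z_{A,B}(\sk)$ is an entire matrix-valued function of $\sk\in\C$. First I would observe that the right-hand side of \eqref{defs} is assembled from three ingredients: the inverse $Z(\sk)^{-1}$, the affine matrix $A-\ii\sk B$, and the constant block-projection $\begin{pmatrix}\1\\0\\0\end{pmatrix}$. The latter two are trivially entire in $\sk$ (the second being linear), so the entire weight of the argument rests on the regularity of $\sk\mapsto Z(\sk)^{-1}$.

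Next I would localize the problem on $\R_+$. By the definition \eqref{sigma>} of $\Sigma^>$ we have, for $\sk\in\R_+$, that $\det Z(\sk)\neq 0$ precisely when $\sk\notin\Sigma^>$; moreover, by Theorem \ref{theo:posev} the set $\Sigma^>$ is discrete and has no finite accumulation point in $\R_+$, so $\R_+\setminus\Sigma^>$ is an open subset on which $Z(\sk)$ is invertible. (The same discreteness shows $\det Z(\sk)\not\equiv 0$, so its zeros are isolated.) The key step is then the standard fact that matrix inversion is holomorphic wherever the matrix is invertible: by Cramer's rule each entry of $Z(\sk)^{-1}$ equals a cofactor of $Z(\sk)$ divided by $\det Z(\sk)$. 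Since the cofactors and the determinant are polynomials in the entries of the entire function $Z(\sk)$, they are themselves entire in $\sk$, and hence each entry of $Z(\sk)^{-1}$ is a quotient of entire functions whose denominator does not vanish on $\R_+\setminus\Sigma^>$.

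Finally I would assemble the pieces. On $\R_+\setminus\Sigma^>$ the map $\sk\mapsto Z(\sk)^{-1}$ is therefore real-analytic—indeed the restriction of a function holomorphic in a complex neighborhood of each point, since $\det Z$ is continuous and nonvanishing there—and multiplying by the entire factor $(A-\ii\sk B)\begin{pmatrix}\1\\0\\0\end{pmatrix}$ preserves this regularity. Hence $S(\sk)$, $\alpha(\sk)$ and $\beta(\sk)$ are real-analytic, and in particular $C^\infty$, on $\R_+\setminus\Sigma^>$, which is the assertion. I do not anticipate a genuine obstacle: the only points requiring care are that invertibility of $Z(\sk)$ is captured exactly by the complement of $\Sigma^>$, which is immediate from \eqref{sigma>}, and that the discreteness of $\Sigma^>$ supplied by Theorem \ref{theo:posev} both makes the domain of smoothness open and rules out $\det Z\equiv 0$.
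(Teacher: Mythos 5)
Your proposal is correct and takes essentially the same route as the paper: the paper's proof likewise observes that $Z_{A,B}(\sk)$ is entire, concludes that $Z_{A,B}(\sk)^{-1}$ is smooth on $\R_+\setminus\Sigma^>$, and reads the claim off the representation \eqref{defs}. Your write-up simply makes explicit the details the paper leaves implicit (Cramer's rule for the inverse, and openness of $\R_+\setminus\Sigma^>$).
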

\begin{proof}
$Z_{A,B}(\sk)$ is analytic in $\sk\in\C$, so 
$Z_{A,B}(\sk)^{-1}$ is smooth in $\sk\in \R_+\setminus \Sigma^>$ and the 
claim follows from the representation \eqref{defs}.
\end{proof}
For further reference we denote by $\psi^{\sk,\nu}$ for $\sk\in\Sigma^>$ and 
$1\le \nu\le n(\sk)$ an orthonormal basis of the eigenspace with eigenvalue $E=\sk^2>0$.
By what has just been proved, each such eigenfunction is necessarily of the form
\begin{equation}\label{posevform}
\psi^{\sk,\nu}_{j}(x)=\begin{cases}0 & \text{for}\;j\in\cE \\
u^{\sk,\nu}_{j} \e^{\ii\sk x}+
v^{\sk,\nu}_{j} \e^{-\ii\sk x} & \text{for}\; j\in\cI. \end{cases}
\end{equation}
The orthonormality condition for fixed $\sk$ is obviously
\begin{align}\label{posevform1}
\langle\psi^{\sk,\nu},\psi^{\sk,\nu^\prime}\rangle&=\delta_{\nu,\nu^\prime}
=\sum_{i\in\cI}\Big\{u^{\sk,\nu}_{i}\overline{u^{\sk,\nu^\prime}_{i}}a_i+
v^{\sk,\nu}_{i}\overline{v^{\sk,\nu^\prime}_{i}}a_i \\\nonumber
&\qquad\qquad\qquad
+\frac{1}{2\ii\sk}\left(\overline{v^{\sk,\nu}_{i}}u^{\sk,\nu^\prime}_{i}
\left(\e^{2\ii \sk a_i}-1\right)
-\overline{u^{\sk,\nu}_{i}}v^{\sk,\nu^\prime}_{i}
\left(\e^{-2\ii \sk a_i}-1\right)\right)\Big\},
\end{align}
a quadratic form in the $u$'s and $v$'s. Thus we obtain
\begin{corollary}\label{cor:degen}
The degeneracy $n(\sk)$ of any discrete eigenvalue $E=\sk^2 >0$, that is $\sk\in\Sigma^>$,
satisfies the bound
\begin{equation}\label{degenbound}
n(\sk)\le 2|\cI|.
\end{equation}
In particular $\Sigma^>$ is empty when $\cG$ is a single vertex graph.
\end{corollary}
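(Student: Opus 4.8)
The plan is to read the bound off directly from the explicit shape \eqref{posevform} of the eigenfunctions, using only the vanishing on external edges supplied by Theorem \ref{theo:posev}, and then to count complex dimensions.

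First I would note that, by Theorem \ref{theo:posev}, every element of the eigenspace of $-\Delta_{A,B}$ at the positive eigenvalue $E=\sk^2$ vanishes identically on each external edge and hence, on the internal edges, has the form recorded in \eqref{posevform}. Consequently such an eigenfunction is completely specified by the two families of coefficients $(u_i)_{i\in\cI}$ and $(v_i)_{i\in\cI}$. I would therefore introduce the map $T$ sending an eigenfunction $\psi$ to its coefficient vector $(u_i,v_i)_{i\in\cI}\in\C^{2|\cI|}$; it is manifestly $\C$-linear in $\psi$, since on the internal edges $\psi$ depends linearly on these coefficients while on the external edges it is zero.

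Next I would check that $T$ is injective. If $T\psi=0$, then all coefficients $u_i,v_i$ vanish, so $\psi$ is zero on every internal edge; combined with the vanishing on the external edges this forces $\psi=0$. Applying $T$ to the orthonormal basis $\psi^{\sk,\nu}$, $1\le\nu\le n(\sk)$, of the eigenspace thus produces $n(\sk)$ linearly independent vectors inside $\C^{2|\cI|}$, whence $n(\sk)\le 2|\cI|$, which is \eqref{degenbound}. For a single vertex graph one has $\cI=\emptyset$, so $|\cI|=0$ and the bound gives $n(\sk)=0$ for every $\sk>0$; that is, $-\Delta_{A,B}$ has no positive eigenvalues, and by Theorem \ref{theo:posev} this is precisely the statement $\Sigma^>=\emptyset$.

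There is no serious obstacle here: the entire argument is a dimension count, and its only substantive input is the external-edge vanishing of Theorem \ref{theo:posev}, which is what allows an eigenfunction to be recovered from its $2|\cI|$ interior coefficients. The one point demanding a line of care is the injectivity of $T$, namely that no nonzero eigenfunction can have all interior coefficients equal to zero; this again reduces to the same vanishing property. Note that the orthonormality relation \eqref{posevform1} plays no role in establishing the bound itself; it merely records the inner product structure on the $(u,v)$-coefficients.
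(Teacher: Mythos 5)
Your proof is correct and follows essentially the same route as the paper: the paper also reads the bound off from the form \eqref{posevform} (forced by the external-edge vanishing in Theorem \ref{theo:posev}), noting that the eigenfunctions are determined by their coefficient vectors in $\C^{2|\cI|}$, so that an orthonormal basis of the eigenspace yields at most $2|\cI|$ linearly independent vectors there. Your explicit formulation of the injective linear map $T$ and the observation that only linear independence (not the quadratic form \eqref{posevform1}) is needed merely makes the paper's ``Thus we obtain'' precise.
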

This result compares with Proposition \ref{prop:spectrum}.
We turn to a  
\underline{Proof of Proposition \ref{prop:indep}}.
Linear independence is clear due to the different occurrence of incoming waves in the different
$\psi^l(\,;\sk)$. Assume now that $\psi$ satisfies 
$-\Delta_{A,B}\psi=\sk^2\psi$ and the boundary conditions 
\eqref{bdycond}.The components are necessarily of the 
form $\psi_j(x)=u_j\e^{\ii \sk x}+v_j\e^{-\ii \sk x}$ for all $j\in\cE\cup\cI$. 
Set 
$\phi=\psi-\sum_{k\in\cE}v_k \psi^k(\,;\sk)$ such that $\phi$ also satisfies 
$-\Delta_{A,B}\phi=\sk^2\phi$ and the boundary conditions. We have to 
show that $\phi=0$. Observe that by construction the components are of the form
$$
\phi_j(x)=\begin{cases}\hat{s}_j e^{\ii \sk x},\quad& j\in\cE\\
\hat{u}_j e^{\ii \sk x} +\hat{v}_j\e^{-\ii \sk x},\quad& j\in\cI\end{cases}
$$
such that $\phi$ contains no incoming waves. Therefore the 
boundary conditions can be written in the form
\begin{equation}\label{posevbdy}
Z(\sk)\begin{pmatrix}\underline{s}(\sk)\\\underline{u}(\sk)
\\\underline{v}(\sk)\end{pmatrix}=0
\end{equation}
with 
$$
\underline{s}(\sk)=\{\hat{s}_k\}_{k\in\cE},\quad 
\underline{u}(\sk)=\{\hat{u}_j\}_{j\in\cI},
\quad \underline{v}(\sk)=\{\hat{v}_j\}_{j\in\cI}
$$
viewed as column vectors.
By assumption $\sk\notin\Sigma^>$, so $\hat{s}_k=\hat{u}_j=\hat{v}_j=0$ for all 
$k\in\cE, j\in\cI$, and $\phi$ indeed vanishes thus concluding the proof of Proposition \ref{prop:indep}. 

\begin{theorem}(\cite{KS1} Theorem 3.12, \cite{KS8} Corollary 3.16)
\label{theo:unitarity}
The scattering matrix is unitary for all $\sk>0$,
\begin{equation}\label{unitarity}
S(k)^\dagger=S(\sk)^{-1}.
\end{equation}
In addition the identity 
\begin{equation}\label{-ks}
S(-\sk)=S(\sk)^{-1}
\end{equation}
between meromorphic matrix valued functions in $\sk$ is valid. 
\end{theorem}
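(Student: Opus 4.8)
The plan is to establish the two assertions by independent elementary arguments: unitarity \eqref{unitarity} from the isotropy of the boundary data, and the reflection identity \eqref{-ks} from the completeness statement in Proposition \ref{prop:indep}.

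First I would prove \eqref{unitarity} by a flux (Wronskian) computation. The point is that every improper eigenfunction $\psi^l(\,;\sk)$ satisfies the boundary condition \eqref{bdycond}, so by \eqref{bdycondrewr1} its boundary data $\bigl(\underline{\psi^l},\underline{\psi^l}{}'\bigr)$ lie in $\cM(A,B)=\Ker(A,B)$. Since $\cM(A,B)$ is maximally isotropic for the canonical hermitian symplectic form, which on boundary data reads (up to an overall constant) $\langle\underline{\phi},\underline{\chi}{}'\rangle-\langle\underline{\phi}{}',\underline{\chi}\rangle$, this form must vanish on the pair $\psi^l(\,;\sk),\psi^{l'}(\,;\sk)$ for all $l,l'\in\cE$. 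I would then evaluate it edge by edge using the explicit plane-wave form \eqref{psidef}. For two same-energy solutions $a_-\e^{-\ii\sk x}+a_+\e^{\ii\sk x}$ and $c_-\e^{-\ii\sk x}+c_+\e^{\ii\sk x}$ on a single edge the local flux equals $2\ii\sk\,(\overline{a_+}c_+-\overline{a_-}c_-)$ and is constant in $x$.

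Carrying this out, the two contributions of each internal edge $i$ — one from its initial vertex, proportional to $\overline{\alpha_{il}}\alpha_{il'}-\overline{\beta_{il}}\beta_{il'}$, and one from its final vertex, the same expression with opposite sign, the phases $\e^{\pm\ii\sk a_i}$ cancelling for real $\sk$ — cancel exactly, which is just flux conservation along the edge. What survives is the sum over external edges, which equals $2\ii\sk\bigl((S(\sk)^\dagger S(\sk))_{ll'}-\delta_{ll'}\bigr)$. Isotropy forces this to vanish, and since $\sk>0$ we obtain $S(\sk)^\dagger S(\sk)=\1$; as $S(\sk)$ is a finite square matrix this gives $S(\sk)^\dagger=S(\sk)^{-1}$, proving \eqref{unitarity}. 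For \eqref{-ks} I would instead use Proposition \ref{prop:indep}. For $\sk\in\R_+\setminus\Sigma^>$ at which $S(-\sk)$ is defined, which by \eqref{defs} is all but isolated points, the function $\psi^m(\,;-\sk)$ again solves $-\Delta_{A,B}\psi=\sk^2\psi$ and satisfies the boundary conditions, so it is a linear combination $\psi^m(\,;-\sk)=\sum_{l\in\cE}c_{lm}\,\psi^l(\,;\sk)$. Reading off from \eqref{psidef} the coefficients of $\e^{\pm\ii\sk x}$ on each external edge gives two matrix relations: matching $\e^{\ii\sk x}$ yields $S(\sk)\,c=\1$, hence $c=S(\sk)^{-1}$, while matching $\e^{-\ii\sk x}$ yields $S(-\sk)=c$. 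Together these give $S(-\sk)=S(\sk)^{-1}$ on that set, and since both sides are meromorphic in $\sk$ by \eqref{defs} while the set has accumulation points, the identity extends to all of $\C$.

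The main obstacle is the bookkeeping in the flux computation of the first part: one must track the inward-normal convention at the final vertices in \eqref{bdvpsi} and the phase factors $\e^{\pm\ii\sk a_i}$ with care to see the internal-edge terms cancel and to isolate the combination $S(\sk)^\dagger S(\sk)$. I also have to be sure the sign and normalization convention of the symplectic form is immaterial, which it is, since only its vanishing is used. For the second part the only delicate point is the passage from the pointwise identity on $\R_+\setminus\Sigma^>$ to the meromorphic identity \eqref{-ks}, which is handled by analytic continuation once meromorphy is read off from \eqref{defs}.
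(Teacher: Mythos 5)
Your proposal is correct, but it is structured quite differently from the paper. The paper in fact offers no proof of \eqref{unitarity}: Theorem \ref{theo:unitarity} is quoted from \cite{KS1} and \cite{KS8}, and the only part reproved here is \eqref{-ks}, whose ``new proof'' is embedded in the proof of Lemma \ref{lem:hermanal}. There the author defines $\widehat{\psi}(\,;\sk)=\psi(\,;-\sk)S(\sk)$, checks that each $\widehat{\psi}^l(\,;\sk)$ solves the eigenvalue equation and the boundary conditions (first for real $\sk$, then for all $\sk$ by the identity theorem), invokes Proposition \ref{prop:indep} to write it as a combination of the $\psi^k(\,;\sk)$, and matches coefficients --- which is exactly your second argument, except that the paper multiplies through by $S(\sk)$ so as to obtain simultaneously the companion relations \eqref{-k} for $\alpha(\sk)$ and $\beta(\sk)$, whereas you extract only the external-edge information, which indeed suffices for \eqref{-ks} (your observation that $S(\sk)c=\1$ for finite square matrices already forces invertibility, so no circular appeal to unitarity is needed). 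Your first argument --- unitarity from maximal isotropy of $\cM(A,B)$ via the edge-wise constancy of the flux $2\ii\sk(\overline{a_+}c_+-\overline{a_-}c_-)$, cancellation of the two internal-edge boundary terms coming from the inward-normal convention in \eqref{bdvpsi}, and the surviving external sum $2\ii\sk\bigl((S(\sk)^\dagger S(\sk))_{ll^\prime}-\delta_{ll^\prime}\bigr)$ --- is sound and is genuinely additional content relative to this paper, which delegates that statement entirely to the references. What your route buys is self-containedness: both halves follow from material internal to the paper (the maximal-isotropy statement following \eqref{bdycondrewr1}, Proposition \ref{prop:indep}, and meromorphy via \eqref{defs}). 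What the paper's route buys is brevity plus the extra identities \eqref{-k} in the same stroke. One small point for your write-up: both of your arguments operate on $\R_+\setminus\Sigma^>$ (minus the isolated points where $S(-\sk)$ fails to exist), so the assertion ``for all $\sk>0$'' at points of $\Sigma^>$ requires the same continuity/identity-theorem extension that you already invoke for \eqref{-ks}.
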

There are analogous relations for $\alpha(\sk),\beta(\sk)$ in the form
\begin{lemma}\label{lem:hermanal}
The following identities for meromorphic matrix valued functions in $\sk\in\C$ hold
\begin{align}\label{-k}
\alpha(-\sk)&=\beta(\sk)S(-\sk)\\\nonumber
\beta(-\sk)&=\alpha(\sk)S(-\sk).
\end{align}
\end{lemma}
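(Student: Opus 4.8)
The plan is to use that, after the substitution $\sk\mapsto-\sk$, the function $\psi^l(\,;-\sk)$ is again a solution of the stationary Schr\"odinger equation \eqref{schroedinger} at the \emph{same} energy $\sk^2$ and satisfying the \emph{same} boundary conditions \eqref{bdycond}, and then to expand it in the complete family $\{\psi^m(\,;\sk)\}_{m\in\cE}$ supplied by Proposition \ref{prop:indep}. Comparing coefficients on the external edges will pin down the expansion coefficients, and comparing on the internal edges will then deliver \eqref{-k}.

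In detail, I would first note that by \eqref{defs} the function $\psi^l(\,;\sk)$ is defined and solves $-\Delta_{A,B}\psi^l(\,;\sk)=\sk^2\psi^l(\,;\sk)$ together with \eqref{bdycond} whenever $\det Z(\sk)\neq 0$. Since $(-\sk)^2=\sk^2$, the function $\psi^l(\,;-\sk)$, defined whenever $\det Z(-\sk)\neq 0$, solves the very same equation $-\Delta_{A,B}\psi=\sk^2\psi$ with the same boundary conditions. Because $Z$ is entire, both $\det Z(\sk)$ and $\det Z(-\sk)$ are entire and not identically zero (their zeros on $\R_+$ form the discrete set $\Sigma^>$ by Theorem \ref{theo:posev}), so for all $\sk>0$ outside a discrete set both families are defined and $\sk^2$ is not a discrete eigenvalue. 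For such $\sk$, Proposition \ref{prop:indep} yields coefficients $c_{ml}(\sk)$ with $\psi^l(\,;-\sk)=\sum_{m\in\cE}c_{ml}(\sk)\,\psi^m(\,;\sk)$.

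Next I would evaluate this identity edge by edge using \eqref{psidef}, reading the left-hand side at argument $-\sk$ and the right-hand side at argument $\sk$, and comparing the coefficients of the linearly independent exponentials $\e^{\pm\ii\sk x}$ (legitimate for $\sk\neq 0$). On an external edge $j\in\cE$ the $\e^{-\ii\sk x}$-terms give $c_{jl}(\sk)=S(-\sk)_{jl}$, i.e.\ $c(\sk)=S(-\sk)$, while the $\e^{\ii\sk x}$-terms reproduce $S(\sk)S(-\sk)=\1$, which is \eqref{-ks}. Inserting $c=S(-\sk)$ and turning to an internal edge $j\in\cI$, the $\e^{\ii\sk x}$-terms give $\beta(-\sk)_{jl}=(\alpha(\sk)S(-\sk))_{jl}$ and the $\e^{-\ii\sk x}$-terms give $\alpha(-\sk)_{jl}=(\beta(\sk)S(-\sk))_{jl}$; these are precisely the two identities in \eqref{-k}.

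Finally, these equalities hold so far only on a co-discrete subset of $\R_+$. Since $Z(\sk)$ is entire, formula \eqref{defs} exhibits the entries of $S(\sk),\alpha(\sk),\beta(\sk)$ as ratios of entire functions, hence as meromorphic functions of $\sk\in\C$; thus both sides of \eqref{-k} are meromorphic and agreement on a set with an accumulation point forces agreement as meromorphic functions on all of $\C$. I do not anticipate a real obstacle; the only delicate points are the bookkeeping of which exponential carries which coefficient after $\sk\mapsto-\sk$, and excising the discrete set where $Z(\pm\sk)$ is singular before invoking analytic continuation. A purely algebraic alternative would start from \eqref{defs} and track how the blocks $\e^{\pm\ii\sk\underline{a}}$ in $X(\sk),Y(\sk)$ are interchanged under $\sk\mapsto-\sk$, but the eigenfunction argument is cleaner and returns \eqref{-ks} as a by-product.
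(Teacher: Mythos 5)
Your proposal is correct and follows essentially the same route as the paper's own proof: the paper likewise observes that $\psi^l(\,;-\sk)$ is an eigenfunction at energy $\sk^2$ satisfying the same boundary conditions (justified there via the identity theorem for the meromorphic boundary values), expands it through Proposition \ref{prop:indep} in the family $\{\psi^m(\,;\sk)\}$, compares the exponential coefficients edge by edge, and extends to all of $\C$ by meromorphy. The only cosmetic difference is that you solve directly for the expansion coefficient matrix $c(\sk)=S(-\sk)$, whereas the paper first normalizes by setting $\widehat{\psi}(\,;\sk)=\psi(\,;-\sk)S(\sk)$ and then appeals to the uniqueness of the defining data $S(\sk),\alpha(\sk),\beta(\sk)$; both versions recover \eqref{-ks} as a by-product.
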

\begin{proof}
We will simultaneously also give a new proof of \eqref{-ks}. Arrange the 
components $\psi^l_j(\,;\sk)$ 
as a $(|\cE|+|\cI|)\times |\cE|$
matrix $\psi(\,;\sk)$, such that the components of $\psi^l(\,;\sk)$ form the $l^{th}$ 
column. In view of \eqref{psidef}, the claims \eqref{-ks} and \eqref{-k} 
combined are equivalent to the relation 
\begin{equation}\label{alphabeta4}
\psi(\,;-\sk)=\psi(\,;\sk)S(-\sk)
\end{equation}
as an identity of meromorphic matrix valued functions.
Here, by the meromorphic properties of $S(\sk),\alpha(\sk)$ and $\beta(\sk)$, we view
each $\psi^l(\,;\sk)$ as meromorphic in $\sk$, that is each component $\psi^l_j(x;\sk)$ is 
meromorphic in $\sk$. So if we {\it define} 
\begin{equation}\label{alphabeta5}
\widehat{\psi}(\,;\sk)=\psi(\,;-\sk)S(\sk)
\end{equation}
we have to show that 
\begin{equation}\label{alphabeta6}
\widehat{\psi}(\,;\sk)=\psi(\,;\sk)
\end{equation}
holds as an identity between meromorphic matrix valued functions. Now  
$-\Delta_{A,B}\psi^l(\,;\sk)=\sk^2\psi^l(\,;\sk)$ holds. Moreover
the boundary values $\underline{\psi^l(\,;\sk)}$ and $\underline{\psi^l(\,;\sk)}^\prime$ of
$\psi^l(\,;\sk)$, see \eqref{bdvpsi}, are also meromorphic. Since the boundary conditions are 
satisfied for all $\sk>0$, they also hold for all $\sk$ away from the poles by the identity 
theorem for analytic functions. Therefore they also hold for all 
$\psi^l(\,;-\sk)$ and hence also for all $\widehat{\psi}^l(\,;\sk)$.
Similarly $-\Delta_{A,B}\psi^l(\,;\sk)=\sk^2\psi^l(\,;\sk)$ implies
$-\Delta_{A,B}\psi^l(\,;-\sk)=\sk^2\psi^l(\,;-\sk)$ and therefore also
$-\Delta_{A,B}\widehat{\psi}^l(\,;\sk)=\sk^2\widehat{\psi}^l(\,;\sk)$.
Again by the identity theorem for meromorphic functions it suffices to prove 
\eqref{alphabeta6} for all $\sk\in\R_+\setminus \Sigma^>$. 
But by Proposition \ref{prop:indep} each 
$\widehat{\psi}^l(\,;\sk)$ is a linear combination of the $\psi^k(\,;\sk)$.
By construction
\begin{equation}\label{hatpsidef}
\widehat{\psi}^{l}_{j}(x;\sk)=\begin{cases} 
  \e^{-\ii\sk x}\delta_{jl} + 
{S(\sk)}_{jl} \e^{\ii\sk x} & \text{for}\;j\in\cE \\
&\\
                       \left(\alpha(-\sk)S(\sk)\right)_{jl} \e^{-\ii\sk x}+
\left(\beta(-\sk)S(\sk)\right)_{jl} \e^{\ii\sk x} 
& \text{for}\; j\in\cI. \end{cases}
\end{equation}
But the eigenfunctions $\psi^l(\,;\sk)$ and $\widehat{\psi}^{l}(\,;\sk)$ satisfy the same defining 
properties and so by the uniqueness of $S(\sk),\alpha(\sk)$ and $\beta(\sk)$ we infer
\eqref{alphabeta6}.
\end{proof}
\begin{remark}\label{reality}
Since $S(\sk)$ is meromorphic in $\sk$, its unitarity for positive $\sk$ 
extends to complex $\sk$ in the form of hermitian analyticity \cite{ ELOP,Olive}
\begin{equation}\label{hermanal}
S(\sk)^\dagger=S(\bar{\sk})^{-1}.
\end{equation}
Combined with \eqref{-k} this gives 
\begin{equation}\label{unitarity1}
S(\sk)^\dagger=S(-\bar{\sk}).
\end{equation}
In particular $S(\sk)$ is a hermitian matrix when $\sk$ is purely 
imaginary. Since each $\psi^l(\,;-\sk)$ satisfies 
$-\Delta_{A,B}\psi^l(\,;-\sk)=\sk^2\psi^l(\,;-\sk)$
and the boundary conditions $(A,B)$, it has to be a linear combination of the 
$\psi^{l^\prime}(\,;\sk)$ and so \eqref{alphabeta4} just provides the explicit form.
\end{remark}
We consider the behavior under complex conjugation. 
Observe that if $(A,B)$ has maximal rank and $AB^\dagger$ is hermitian, 
then the complex conjugate pair $(\bar{A},\bar{B})$ is also of maximal rank and $\bar{A}\bar{B}^\dagger$ 
is hermitian. So $(\bar{A},\bar{B})$ also gives rise to a Laplacian. 
The following lemma is trivial.
\begin{lemma}\cite{KS1}\label{lem:bdycomplex}
If $\psi$ satisfies the boundary condition $(A,B)$ then the complex conjugate 
wave function $\bar{\psi}$ satisfies the boundary condition $(\bar{A},\bar{B})$.

In particular, if $\psi$ is in the domain of $\Delta_{A,B}$, then 
$\bar{\psi}$ is in the domain of $\Delta_{\bar{A},\bar{B},\underline{a}}$ and
\begin{equation}\label{complexconj}
\overline{-\Delta_{A,B}\psi}
=-\Delta_{\bar{A},\bar{B},\underline{a}}\bar{\psi}
\end{equation}
holds. 
\end{lemma}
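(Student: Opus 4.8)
The plan is to use the fact that complex conjugation commutes with every operation that enters the definition of the boundary data and of the Laplacian, so that the lemma dissolves into two elementary observations. First I would record how the boundary-value vectors of \eqref{bdvpsi} transform under conjugation. Each entry of $\underline{\psi}$ and $\underline{\psi}'$ is produced from $\psi$ by evaluating a component, or its inward normal derivative, at a vertex; the derivatives enter only with the real coefficients $\pm 1$. Since conjugation commutes with point evaluation and with $d/dx$, and since $\overline{\pm\psi_i'(a_i)}=\pm\overline{\psi_i'}(a_i)$, the boundary vectors of $\bar\psi$ are exactly the entrywise conjugates of those of $\psi$:
\begin{equation*}
\underline{\bar\psi}=\overline{\underline{\psi}},\qquad \underline{\bar\psi}'=\overline{\underline{\psi}'}.
\end{equation*}

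Next I would conjugate the boundary condition \eqref{bdycond} itself. Starting from $A\underline{\psi}+B\underline{\psi}'=0$ and applying complex conjugation to both sides gives
\begin{equation*}
\bar A\,\overline{\underline{\psi}}+\bar B\,\overline{\underline{\psi}'}=\bar A\,\underline{\bar\psi}+\bar B\,\underline{\bar\psi}'=0,
\end{equation*}
which is precisely condition \eqref{bdycond} for $\bar\psi$ read off with the pair $(\bar A,\bar B)$. This establishes the first assertion. As already noted just before the statement, $(\bar A,\bar B)$ again has maximal rank and $\bar A\bar B^\dagger$ is hermitian, so $(\bar A,\bar B)$ genuinely defines a self-adjoint Laplacian and the claim is meaningful.

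For the domain and operator statements I would observe that the regularity conditions defining the $\cD_j$ are conjugation-invariant: $\overline{\psi_j}$ is absolutely continuous whenever $\psi_j$ is, with $(\overline{\psi_j})'=\overline{\psi_j'}$, and $\overline{\psi_j}$ is square integrable precisely when $\psi_j$ is. Combining this with the boundary computation above places $\bar\psi$ in the domain of $-\Delta_{\bar A,\bar B,\underline{a}}$. Finally, since $-\Delta_{A,B}$ acts componentwise as the real operator $-d^2/dx^2$, conjugation passes through it,
\begin{equation*}
\overline{\Big(-\frac{d^2}{dx^2}\psi_j\Big)}=-\frac{d^2}{dx^2}\,\overline{\psi_j},
\end{equation*}
which is exactly \eqref{complexconj}. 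There is no real obstacle here—consistent with the paper's remark that the lemma is trivial. The only point deserving a word of care is the bookkeeping of the $\pm$ signs in $\underline{\psi}'$ coming from the inward normal derivative, but these are real and hence invariant under conjugation, so the argument goes through verbatim.
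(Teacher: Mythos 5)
Your proof is correct and is exactly the ``trivial'' verification the paper has in mind: the paper offers no written proof (it declares the lemma trivial and cites \cite{KS1}), and your argument---conjugation commutes with the boundary-value maps \eqref{bdvpsi} because the inward-normal signs are real, conjugating \eqref{bdycond} yields the $(\bar A,\bar B)$ condition, and $-d^2/dx^2$ is a real operator so conjugation passes through it---is precisely the intended reasoning, with the domain-invariance point handled correctly as well.
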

This gives the following nice observation, whose proof we omit. Recall relation 
\eqref{nplusbar} in connection with Proposition \ref{prop:spectrum}.
\begin{corollary}\label{conjspectrum}
The spectra of the two Laplacians $\Delta_{A,B}$ and $\Delta_{\bar{A},\bar{B},\underline{a}}$ 
agree. Moreover, if $\psi$ is an (improper) eigenfunction of $-\Delta_{A,B}$, then 
$\bar{\psi}$ is an (improper) eigenfunction of $\Delta_{\bar{A},\bar{B},\underline{a}}$ for the 
same eigenvalue.
\end{corollary}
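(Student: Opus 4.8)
The plan is to realize both statements as immediate consequences of the fact that complex conjugation is an antiunitary involution on $L^2(\cG)$ which intertwines the two Laplacians. Denote by $C$ the map $\psi\mapsto\bar\psi$. It is antilinear, norm-preserving and satisfies $C^2=\1$, hence antiunitary. By Lemma \ref{lem:bdycomplex}, $C$ maps the domain of $-\Delta_{A,B}$ bijectively onto the domain of $-\Delta_{\bar A,\bar B,\underline{a}}$, and by \eqref{complexconj} it intertwines the two operators, $C(-\Delta_{A,B})=(-\Delta_{\bar A,\bar B,\underline{a}})\,C$ on that domain. Since the spectra of $\Delta$ and $-\Delta$ are related by $\lambda\mapsto-\lambda$ consistently for both boundary conditions, it suffices to compare the spectra of $-\Delta_{A,B}$ and $-\Delta_{\bar A,\bar B,\underline{a}}$.

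For the equality of spectra I would argue as follows. Both operators are self-adjoint, so their spectra are subsets of $\R$. Fix $\lambda\in\R$. Since $C$ is antilinear and $\lambda$ is real, $C(-\Delta_{A,B}-\lambda)C^{-1}=-\Delta_{\bar A,\bar B,\underline{a}}-\lambda$. Consequently $-\Delta_{A,B}-\lambda$ has a bounded everywhere-defined inverse if and only if $-\Delta_{\bar A,\bar B,\underline{a}}-\lambda$ does, the inverses being conjugate by $C$. Hence the two resolvent sets coincide on $\R$, and since both spectra are real the spectra themselves coincide. Consistency with the structural results is visible directly: the absolutely continuous part is $[0,\infty)$ for both by Proposition \ref{prop:spectrum}, while \eqref{nplusbar} shows that the bound $n_+(AB^\dagger)=n_+(\bar A\bar B^\dagger)$ on the number of negative eigenvalues is the same for both.

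For the eigenfunction statement I would treat proper and improper eigenfunctions in parallel, since both are governed by the same two ingredients in Lemma \ref{lem:bdycomplex}: the pointwise identity $\overline{-\psi''}=-\bar\psi''$ and the transfer of boundary conditions $(A,B)\to(\bar A,\bar B)$. If $-\Delta_{A,B}\psi=\lambda\psi$ with $\lambda$ real (as all eigenvalues here are, being either negative discrete eigenvalues or embedded eigenvalues at $\sk^2>0$ by Theorem \ref{theo:posev}), then conjugating yields $-\Delta_{\bar A,\bar B,\underline{a}}\bar\psi=\bar\lambda\bar\psi=\lambda\bar\psi$, while $\bar\psi$ still satisfies the conjugate boundary conditions by Lemma \ref{lem:bdycomplex}. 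Thus $\bar\psi$ is an eigenfunction of $-\Delta_{\bar A,\bar B,\underline{a}}$ for the same eigenvalue.

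The only point requiring care, and what I would flag as the main (if modest) obstacle, is that an improper eigenfunction is not an element of the operator domain, so the resolvent argument used for the spectra does not literally apply to it. This is precisely why I would invoke the first, pointwise part of Lemma \ref{lem:bdycomplex}, that is the differential equation together with the boundary values \eqref{bdvpsi}, rather than the domain-level statement \eqref{complexconj}, in the improper case: the generalized eigenvalue equation and the boundary conditions \eqref{bdycond} are preserved under $C$ term by term, independently of square-integrability, so the correspondence $\psi\leftrightarrow\bar\psi$ goes through unchanged.
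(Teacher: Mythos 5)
Your proof is correct and follows exactly the route the paper intends: the paper omits the proof but presents the corollary as a direct consequence of Lemma \ref{lem:bdycomplex}, i.e.\ of the fact that complex conjugation intertwines $-\Delta_{A,B}$ and $-\Delta_{\bar{A},\bar{B},\underline{a}}$, which is precisely your antiunitary-conjugation argument (with the resolvent identity $C(-\Delta_{A,B}-\lambda)C^{-1}=-\Delta_{\bar A,\bar B,\underline{a}}-\lambda$ filling in the spectral equality, and the pointwise form of the lemma handling improper eigenfunctions). Your care in distinguishing the domain-level statement \eqref{complexconj} from the pointwise statement for non-$L^2$ generalized eigenfunctions is exactly the right refinement, and your consistency remarks via \eqref{nplusbar} and Proposition \ref{prop:spectrum} mirror the paper's own parenthetical comment.
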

Let $^T$ denote transposition of a matrix. 
\begin{lemma}(\cite{KS4} Theorem 2.2)\label{lem:alphabeta}
The following identities between meromorphic matrix valued functions 
hold for arbitrary boundary conditions $(A,B)$ 
\begin{align}\label{alphabeta7}
S_{\bar{A},\bar{B}}(\sk)&=S_{A,B}(\sk)^T\\\nonumber
\alpha_{\bar{A},\bar{B}}(\sk)&
=\overline{\beta_{A,B}(\bar{\sk})}
\;S_{A,B}(\sk)^T\\\nonumber
\beta_{\bar{A},\bar{B}}(\sk)&
=\overline{\alpha_{A,B}(\bar{\sk})}
\;S_{A,B}(\sk)^T.
\end{align}
or equivalently
\begin{equation}\label{alphabeta71}
\psi_{\bar{A},\bar{B}}(\,;\sk)=\overline{\psi_{A,B}(\,;\bar{\sk})}S_{A,B}(\sk)^T.
\end{equation}
\end{lemma}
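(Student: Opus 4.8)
The plan is to follow the same template as the proof of Lemma~\ref{lem:hermanal}: exhibit the right-hand side of \eqref{alphabeta71} as the unique solution of the scattering problem attached to the conjugated boundary conditions $(\bar A,\bar B)$, and then read off the componentwise identities \eqref{alphabeta7}. The natural object to start from is $\overline{\psi_{A,B}(\,;\sk)}$ for real $\sk>0$. By Lemma~\ref{lem:bdycomplex} and \eqref{complexconj}, each column $\overline{\psi^l_{A,B}(\,;\sk)}$ solves $-\Delta_{\bar A,\bar B}u=\sk^2u$ and satisfies the $(\bar A,\bar B)$ boundary conditions, so it is a genuine (improper) solution of the scattering problem for $(\bar A,\bar B)$ and we may expand it in the corresponding eigenbasis.

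Concretely, I would fix $\sk\in\R_+\setminus\Sigma^>_{A,B}$; since the spectra of $-\Delta_{A,B}$ and $-\Delta_{\bar A,\bar B}$ coincide (Corollary~\ref{conjspectrum}), we also have $\sk\notin\Sigma^>_{\bar A,\bar B}$, so Proposition~\ref{prop:indep} applies to $(\bar A,\bar B)$ and each $\overline{\psi^l_{A,B}(\,;\sk)}$ is a linear combination of the $\psi^{l^\prime}_{\bar A,\bar B}(\,;\sk)$. The coefficients are pinned down by matching the incoming waves $\e^{-\ii\sk x}$ on the external edges: from \eqref{psidef} the incoming part of $\overline{\psi^l_{A,B}(\,;\sk)}$ in channel $j\in\cE$ is $\overline{S_{A,B}(\sk)_{jl}}\,\e^{-\ii\sk x}$, while that of $\psi^{l^\prime}_{\bar A,\bar B}(\,;\sk)$ is $\delta_{jl^\prime}\e^{-\ii\sk x}$. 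Reading off the expansion coefficients then yields the matrix identity $\overline{\psi_{A,B}(\,;\sk)}=\psi_{\bar A,\bar B}(\,;\sk)\,\overline{S_{A,B}(\sk)}$. Inverting and using unitarity \eqref{unitarity}, which for real $\sk$ gives $\overline{S_{A,B}(\sk)}^{-1}=S_{A,B}(\sk)^T$, produces \eqref{alphabeta71} on the positive real axis.

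To upgrade this to the stated identity between meromorphic functions I would invoke the identity theorem exactly as in Lemma~\ref{lem:hermanal}. Both $\psi_{\bar A,\bar B}(\,;\sk)$ and $\sk\mapsto\overline{\psi_{A,B}(\,;\bar\sk)}$ are meromorphic (the latter by Schwarz reflection, with poles the reflected poles of $\psi_{A,B}$), and on the real axis $\overline{\psi_{A,B}(\,;\bar\sk)}=\overline{\psi_{A,B}(\,;\sk)}$; the factor $\overline{S_{A,B}(\sk)}^{-1}=S_{A,B}(\sk)^T$ extends off the axis through hermitian analyticity \eqref{hermanal}. Hence the real-axis identity becomes $\psi_{\bar A,\bar B}(\,;\sk)=\overline{\psi_{A,B}(\,;\bar\sk)}\,S_{A,B}(\sk)^T$ for all $\sk$ away from poles, which is \eqref{alphabeta71}. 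Finally, \eqref{alphabeta7} follows by comparing $\e^{\pm\ii\sk x}$ coefficients edge by edge: the external edges give $S_{\bar A,\bar B}(\sk)=S_{A,B}(\sk)^T$, and the internal edges give the relations for $\alpha_{\bar A,\bar B}$ and $\beta_{\bar A,\bar B}$, where conjugation interchanges the slots of $\e^{\ii\sk x}$ and $\e^{-\ii\sk x}$ and thereby swaps the roles of $\alpha$ and $\beta$.

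I expect the main obstacle to be purely bookkeeping: keeping the three operations---complex conjugation of the amplitudes, the argument flip $\sk\mapsto\bar\sk$, and transposition---mutually consistent, so that the right factor comes out as the transpose $S_{A,B}(\sk)^T$ rather than $S_{A,B}(\sk)^\dagger$ or $\overline{S_{A,B}(\sk)}$, and correctly reinserting $\bar\sk$ when passing from the real-axis identity (where $\bar\sk=\sk$) to the meromorphic one. No analytic input beyond Proposition~\ref{prop:indep}, \eqref{complexconj}, and hermitian analyticity \eqref{hermanal} should be required.
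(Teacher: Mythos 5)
Your proposal is correct and takes essentially the same route as the paper's proof: the paper defines $\check{\psi}(\,;\sk)=\overline{\psi_{A,B}(\,;\bar{\sk})}\,\overline{S_{A,B}(\bar{\sk})^{-1}}$ and verifies via Lemma \ref{lem:bdycomplex}, Proposition \ref{prop:indep} and uniqueness of the defining form \eqref{psidef} that it coincides with $\psi_{\bar{A},\bar{B}}(\,;\sk)$, using unitarity to identify $\overline{S_{A,B}(\sk)^{-1}}=S_{A,B}(\sk)^T$ and the identity theorem for the meromorphic extension. Your variant (expanding $\overline{\psi_{A,B}(\,;\sk)}$ in the $(\bar{A},\bar{B})$-eigenbasis, pinning the coefficient matrix $\overline{S_{A,B}(\sk)}$ by matching incoming waves, then inverting) is the same argument read in the opposite direction, with identical ingredients.
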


\begin{proof}
We give an alternative proof along the lines used in the proof 
of Lemma \ref{lem:hermanal}. Indeed, with the notation used there, define
for complex $\sk$
\begin{equation}\label{alphabeta8}
\check{\psi}(\,;\sk)=\overline{\psi_{A,B}(\,;\bar{\sk})}\:\:
\overline{S_{A,B}(\bar{\sk})^{-1}}
\end{equation}
where we indicate the dependence on the boundary conditions. The aim is to show
\begin{equation}\label{alphabeta9}
\check{\psi}(\,;\sk)=\psi_{\bar{A},\bar{B}}(\,;\sk),
\end{equation}
from which \eqref{alphabeta7} and \eqref{alphabeta71} follow.
Again by the identity theorem for meromorphic functions it suffices to prove this 
relation for $\sk>0$, for which $\sk^2$ is not a discrete eigenvalue. For such $\sk$
by unitarity $\overline{S_{A,B}(\bar{\sk})^{-1}}=S_{A,B}(\sk)^T$ and hence for all 
$\sk\in\C$, again by the identity theorem. By Lemma \eqref{lem:bdycomplex} each 
$\check{\psi}^l(\,;\sk)$ is an improper eigenfunction of 
$-\Delta(\bar{A},\bar{B},\underline{a})$ with eigenvalue $\sk^2$ and hence must be a 
linear combination of the $\psi_{\bar{A},\bar{B}}^k(\,;\sk)$.
By construction, the components of 
$\check{\psi}^l(\,;\sk),\, \sk>0$ are of the form
\begin{align}\label{checkpsidef}
\check{\psi}^{l}_{j}(x;\sk)&=\begin{cases} 
  \e^{-\ii\sk x}\delta_{jl} + 
  \overline{S_{A,B}(\sk)^{-1}}_{jl}\e^{\ii\sk x} & \text{for}\;\;j\in\cE \\
&\\
  \overline{\left(\alpha_{A,B}(\sk)S_{A,B}(\sk)^{-1}\right)}_{jl} 
\e^{\ii\sk x}+
\overline{\left(\beta_{A,B}(\sk)S_{A,B}(\sk)^{-1}\right)}_{jl} \e^{-\ii\sk x} 
& \text{for}\;\; j\in\cI \end{cases}\\\nonumber
&\\\nonumber
&=\begin{cases} 
  \e^{-\ii\sk x}\delta_{jl} + S_{A,B}(\sk)_{lj}\e^{\ii\sk x} & \text{for}\;j\in\cE \\
&\\
\left(\overline{\alpha_{A,B}(\sk)}S_{A,B}(\sk)^T\right)_{jl} 
\e^{\ii\sk x}+
\left(\overline{\beta_{A,B}(\sk)}S_{A,B}(\sk)^T\right)_{jl} \e^{-\ii\sk x} 
& \text{for}\; j\in\cI. \end{cases}\nonumber
\end{align}
But $\psi^l_{\bar{A},\bar{B}}(\,;\sk)$ and 
$\check{\psi}^{l}(\,;\sk)$ satisfy the same defining 
properties and so by the uniqueness of 
$S_{\bar{A},\bar{B}}(\sk),\alpha_{\bar{A},\bar{B}}(\sk)$ and 
$\beta_{\bar{A},\bar{B}}(\sk)$ we infer \eqref{alphabeta9}.
\end{proof}
By Corollary \eqref{conjspectrum} we know that $\overline{\psi^l(\,;\sk)}=\overline{\psi^l_{A,B}(\,;\sk)}$ 
are eigenfunctions of $-\Delta_{\bar{A},\bar{B},\underline{a}}$ with eigenvalue $\sk^2$. Relation 
\eqref{alphabeta71} tells us that they span the eigenspace of $-\Delta_{\bar{A},\bar{B}}$ 
for that eigenvalue since the $\psi^l_{\bar{A},{B}}(\,;\sk)$ do. We shall make use of this observation
when we construct massive, free charged fields in Section \ref{subsec:complex}.

By definition the boundary conditions given by the pair $(A,B)$ are {\it real}
if an invertible $C$ exists such that the pair $(A^\prime,B^\prime)=(CA,CB)$ 
consists of real matrices $A^\prime$ and $B^\prime$. An equivalent condition is that there exists an 
invertible $C^\prime$ with $C^\prime A=\bar{A}$ and $C^\prime B=\bar{B}$, see \cite{KS4}. 
As a direct consequence of Lemmas \ref{lem:hermanal} and \ref{lem:alphabeta} 
we obtain the following two corollaries 
\begin{corollary}\label{cor:complex}
For arbitrary boundary conditions $(A,B)$, the relations
\begin{equation}\label{kcomplex}
\overline{\alpha_{A,B}(\bar{\sk})}=\alpha_{\bar{A},\bar{B}}(-\sk),\qquad 
\overline{\beta_{A,B}(\bar{\sk})}=\beta_{\bar{A},\bar{B}}(-\sk)
\end{equation}
hold as identities between matrix valued meromorphic functions in $\sk\in\C$.
\end{corollary}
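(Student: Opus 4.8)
The plan is to exploit that the complex-conjugate pair $(\bar A,\bar B)$ is itself a legitimate boundary condition, as already observed before Lemma \ref{lem:bdycomplex}. Hence Lemma \ref{lem:hermanal}, which holds for \emph{every} admissible pair, applies verbatim with $(A,B)$ replaced by $(\bar A,\bar B)$, giving
\begin{equation*}
\alpha_{\bar A,\bar B}(-\sk)=\beta_{\bar A,\bar B}(\sk)\,S_{\bar A,\bar B}(-\sk),\qquad
\beta_{\bar A,\bar B}(-\sk)=\alpha_{\bar A,\bar B}(\sk)\,S_{\bar A,\bar B}(-\sk).
\end{equation*}
These already isolate $\alpha_{\bar A,\bar B}(-\sk)$ and $\beta_{\bar A,\bar B}(-\sk)$, which are exactly the quantities on the right-hand sides of \eqref{kcomplex}, so it only remains to rewrite the factors on the right.

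The second step is to eliminate every barred amplitude in favour of a conjugated unbarred one by means of Lemma \ref{lem:alphabeta}. Its first identity, read at $-\sk$, gives $S_{\bar A,\bar B}(-\sk)=S_{A,B}(-\sk)^T$, while its second and third identities give $\alpha_{\bar A,\bar B}(\sk)=\overline{\beta_{A,B}(\bar\sk)}\,S_{A,B}(\sk)^T$ and $\beta_{\bar A,\bar B}(\sk)=\overline{\alpha_{A,B}(\bar\sk)}\,S_{A,B}(\sk)^T$. Substituting these into the two displayed relations yields
\begin{equation*}
\alpha_{\bar A,\bar B}(-\sk)=\overline{\alpha_{A,B}(\bar\sk)}\;S_{A,B}(\sk)^T S_{A,B}(-\sk)^T,
\end{equation*}
together with the companion identity obtained by replacing $\alpha$ by $\beta$ throughout. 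To finish I would collapse the trailing product of scattering matrices: keeping the matrix order correct, $S_{A,B}(\sk)^T S_{A,B}(-\sk)^T=\bigl(S_{A,B}(-\sk)S_{A,B}(\sk)\bigr)^T$, and this equals $\1$ by \eqref{-ks}, i.e. $S(-\sk)=S(\sk)^{-1}$. The right-hand sides then reduce to $\overline{\alpha_{A,B}(\bar\sk)}$ and $\overline{\beta_{A,B}(\bar\sk)}$, which is precisely \eqref{kcomplex}; since every ingredient is already phrased as an identity of meromorphic functions, no separate analytic-continuation step is needed.

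I expect the only delicate point to be the simultaneous bookkeeping of three operations acting on the arguments: the reflection $\sk\mapsto-\sk$ supplied by Lemma \ref{lem:hermanal}, the conjugation $\sk\mapsto\bar\sk$ together with entrywise complex conjugation supplied by Lemma \ref{lem:alphabeta}, and the transposition of the $S$-factors. Because $S$, $\alpha$, $\beta$ are matrix valued, a slipped transpose or a wrong argument would silently break the cancellation $S(\sk)^T S(-\sk)^T=\1$, so I would verify the matrix order and the precise argument of each $S_{A,B}$, $\alpha_{A,B}$, $\beta_{A,B}$ at every substitution rather than relying on scalar intuition.
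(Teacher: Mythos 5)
Your proof is correct and is exactly the derivation the paper intends: the paper presents Corollary \ref{cor:complex} as a ``direct consequence of Lemmas \ref{lem:hermanal} and \ref{lem:alphabeta}'' without spelling out the computation, and your argument—applying Lemma \ref{lem:hermanal} to the pair $(\bar{A},\bar{B})$, substituting via Lemma \ref{lem:alphabeta}, and collapsing $S_{A,B}(\sk)^T S_{A,B}(-\sk)^T=\bigl(S_{A,B}(-\sk)S_{A,B}(\sk)\bigr)^T=\1$ using \eqref{-ks}—is precisely that omitted computation, with the matrix order and arguments handled correctly.
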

\begin{corollary}\label{cor:real}
If the boundary conditions $(A,B)$ are real, then the relations
\begin{equation}\label{-kreal1}
\overline{S(\bar{\sk})}=S(-\sk),\quad
\overline{\beta(\bar{\sk})}=\alpha(\sk)S(-\sk),\quad
\overline{\alpha(\bar{\sk})}=\beta(\sk)S(-\sk)
\end{equation}
and hence 
\begin{equation}\label{-kreal2} 
\overline{\alpha(\bar{\sk})}=\alpha(-\sk),\qquad
\overline{\beta(\bar{\sk})}=\beta(-\sk)
\end{equation}
are valid as identities between matrix valued meromorphic functions in $\sk\in\C$.
\end{corollary}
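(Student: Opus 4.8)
The plan is to exploit the single extra input that distinguishes real boundary conditions, namely that $(A,B)$ and its complex conjugate $(\bar A,\bar B)$ define \emph{the same} maximal isotropic subspace. Reality provides an invertible $C'$ with $C'A=\bar A$ and $C'B=\bar B$, so $\cM(\bar A,\bar B)=\cM(A,B)$. Since the representation \eqref{defs} shows that $S$, $\alpha$ and $\beta$ are unchanged when $(A,B)$ is replaced by $(CA,CB)$ for any invertible $C$ (both $Z(\sk)$ and $A-\ii\sk B$ acquire a left factor $C$, which cancels), it follows that
\begin{equation*}
S_{\bar A,\bar B}(\sk)=S_{A,B}(\sk),\qquad \alpha_{\bar A,\bar B}(\sk)=\alpha_{A,B}(\sk),\qquad \beta_{\bar A,\bar B}(\sk)=\beta_{A,B}(\sk).
\end{equation*}
This identity is the bridge that turns the \emph{comparison} relations of Corollary \ref{cor:complex} and Lemma \ref{lem:alphabeta}, which relate the data of $(A,B)$ to those of $(\bar A,\bar B)$, into relations involving $(A,B)$ alone.

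Granting this, I would first read off \eqref{-kreal2} for free: substituting $\alpha_{\bar A,\bar B}=\alpha_{A,B}=\alpha$ and $\beta_{\bar A,\bar B}=\beta_{A,B}=\beta$ into the two identities \eqref{kcomplex} of Corollary \ref{cor:complex} yields exactly $\overline{\alpha(\bar\sk)}=\alpha(-\sk)$ and $\overline{\beta(\bar\sk)}=\beta(-\sk)$.

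Next I would establish the first relation of \eqref{-kreal1}. Feeding the reality identity into the first line of \eqref{alphabeta7} gives $S_{A,B}(\sk)^T=S_{\bar A,\bar B}(\sk)=S_{A,B}(\sk)$, i.e. $S(\sk)$ is symmetric. For a symmetric matrix $S(\sk)^\dagger=\overline{S(\sk)}$, so the relation \eqref{unitarity1} of Remark \ref{reality}, $S(\sk)^\dagger=S(-\bar\sk)$, becomes $\overline{S(\sk)}=S(-\bar\sk)$; replacing $\sk$ by $\bar\sk$ gives $\overline{S(\bar\sk)}=S(-\sk)$. The remaining two relations of \eqref{-kreal1} then follow by combining \eqref{-kreal2}, just proved, with \eqref{-k} of Lemma \ref{lem:hermanal}: indeed $\overline{\beta(\bar\sk)}=\beta(-\sk)=\alpha(\sk)S(-\sk)$ and $\overline{\alpha(\bar\sk)}=\alpha(-\sk)=\beta(\sk)S(-\sk)$, where the second equality in each chain is precisely \eqref{-k}.

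The computation is essentially pure bookkeeping, so there is no serious analytic obstacle: the inputs -- \eqref{kcomplex}, \eqref{alphabeta7}, \eqref{unitarity1} and \eqref{-k} -- are already identities between meromorphic functions, and I merely combine them algebraically. The only point that requires genuine care, and the one I would emphasise, is the first step: verifying that reality forces the scattering data of $(A,B)$ and $(\bar A,\bar B)$ to coincide, since it is this collapse of the two data sets that lets the conjugation symmetries close up into statements about a single Laplacian.
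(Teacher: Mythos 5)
Your proposal is correct and follows essentially the paper's own (implicit) route: the paper presents Corollary \ref{cor:real} as a direct consequence of Lemmas \ref{lem:hermanal} and \ref{lem:alphabeta}, and your argument is exactly that bookkeeping, with the key observation spelled out that reality ($\bar A = C'A$, $\bar B = C'B$) together with the invariance of \eqref{defs} under $(A,B)\mapsto(CA,CB)$ forces $S_{\bar A,\bar B}=S_{A,B}$, $\alpha_{\bar A,\bar B}=\alpha_{A,B}$, $\beta_{\bar A,\bar B}=\beta_{A,B}$. The only cosmetic difference is the order of deduction — you obtain \eqref{-kreal2} first from Corollary \ref{cor:complex} and then recover \eqref{-kreal1} via \eqref{-k} and \eqref{unitarity1}, whereas the paper's ``and hence'' runs the chain in the opposite direction — but since \eqref{-k} makes the two sets of relations equivalent, this is immaterial.
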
 
As a consequence of Lemma \ref{lem:bdycomplex} we directly obtain 
\begin{corollary}\label{cor:laplacereal2}
For real boundary conditions $(A,B)$ $\bar{\psi}$ is an eigenfunction of 
$-\Delta_{A,B}$ whenever  $\psi$ is. Therefore for a given eigenvalue, the associated eigenspace is 
spanned by real eigenfunctions.
\end{corollary}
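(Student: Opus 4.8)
The plan is to read off the claim from Lemma \ref{lem:bdycomplex} once I know that, for \emph{real} boundary conditions, conjugation does not change the operator. Recall that $(A,B)$ being real means there is an invertible $C^\prime$ with $C^\prime A=\bar A$ and $C^\prime B=\bar B$. Since the Laplacian depends only on the maximal isotropic subspace, and $\cM(C^\prime A,C^\prime B)=\cM(A,B)$ for every invertible $C^\prime$, the first step is to record
\begin{equation*}
-\Delta_{\bar A,\bar B,\underline a}=-\Delta_{A,B}.
\end{equation*}
This is the only place where reality enters, and it is really the heart of the matter: it converts the symmetry ``conjugation sends $(A,B)$ to $(\bar A,\bar B)$'' into an honest invariance of a single operator.

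Granting this, let $\psi$ be an eigenfunction, $-\Delta_{A,B}\psi=\lambda\psi$; since $-\Delta_{A,B}$ is self-adjoint, $\lambda\in\R$. By Lemma \ref{lem:bdycomplex}, $\bar\psi$ lies in the domain of $-\Delta_{\bar A,\bar B,\underline a}$ and
\begin{equation*}
-\Delta_{\bar A,\bar B,\underline a}\,\bar\psi=\overline{-\Delta_{A,B}\psi}=\overline{\lambda\psi}=\lambda\bar\psi.
\end{equation*}
Using the identity from the first step, the left side equals $-\Delta_{A,B}\bar\psi$, so $\bar\psi$ is again an eigenfunction of $-\Delta_{A,B}$ for the same eigenvalue $\lambda$. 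This gives the first assertion.

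The second assertion is then immediate: the eigenspace $E_\lambda$ is a linear subspace stable under complex conjugation, so for any $\psi\in E_\lambda$ both $\Re\psi=\tfrac12(\psi+\bar\psi)$ and $\Im\psi=\tfrac1{2\ii}(\psi-\bar\psi)$ lie in $E_\lambda$ and are real, while $\psi=\Re\psi+\ii\,\Im\psi$. Hence $E_\lambda$ is spanned by real eigenfunctions. No serious obstacle remains once the operator identity is established; the only point demanding care is the translation of reality into $-\Delta_{\bar A,\bar B}=-\Delta_{A,B}$ via the subspace characterization.
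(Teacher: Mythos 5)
Your proof is correct and takes essentially the same route the paper intends: the paper presents this corollary as a direct consequence of Lemma \ref{lem:bdycomplex}, with the key point being precisely the operator identity $-\Delta_{\bar{A},\bar{B},\underline{a}}=-\Delta_{A,B}$ for real boundary conditions (which the paper itself records in Remark \ref{re:timereversal}), exactly as you derive it from the maximal isotropic subspace characterization. Your remaining steps — reality of the eigenvalue by self-adjointness and spanning by $\Re\psi$ and $\Im\psi$ — are just the details the paper leaves implicit.
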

So if for real boundary conditions we choose the eigenfunctions $\psi^{\sk,\nu}$ 
to be real, then in the notation of
\eqref{posevform} the relations 
\begin{equation}\label{realcoeff}
u_j^{\sk,\nu}=\overline{v_j^{\sk,\nu}},\qquad\qquad\qquad \qquad\qquad\qquad
\quad\sk\in\Sigma^>
\end{equation}
are valid. Similarly, we can rewrite \eqref{-kreal1} as
\begin{corollary}\label{cor:laplacereal3}
If the boundary conditions are real, then the relation 
\begin{equation}\label{realimprop}
\overline{\psi(\,;\bar{\sk})}=\psi(\,;\sk)S(-\sk)
\end{equation}
is valid. 
\end{corollary}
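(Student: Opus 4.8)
The plan is to reduce \eqref{realimprop} to the already-established meromorphic identity \eqref{alphabeta4}, namely $\psi(\,;-\sk)=\psi(\,;\sk)S(-\sk)$, by first showing that for real boundary conditions the operation of complex conjugation combined with $\sk\mapsto\bar{\sk}$ acts on the improper eigenfunctions simply as the spectral reflection $\sk\mapsto-\sk$; that is, the intermediate identity
\begin{equation*}
\overline{\psi(\,;\bar{\sk})}=\psi(\,;-\sk)
\end{equation*}
holds as an identity of meromorphic matrix valued functions. Granting this, substituting \eqref{alphabeta4} on the right-hand side immediately yields \eqref{realimprop}.

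To prove the intermediate identity I would work column by column and verify it on each edge by comparing the explicit representation \eqref{psidef} of $\psi^l(\,;\sk)$ with its image under conjugation. Since the edge coordinate $x$ is real, conjugation sends $\e^{\pm\ii\bar{\sk}x}$ to $\e^{\mp\ii\sk x}$, so the two exponential modes on every edge are interchanged. On an external edge $j\in\cE$ the term $\e^{-\ii\bar{\sk}x}\delta_{jl}$ becomes $\e^{\ii\sk x}\delta_{jl}$ and the scattering term contributes $\overline{S(\bar{\sk})_{jl}}\,\e^{-\ii\sk x}$; invoking the first relation of \eqref{-kreal1}, $\overline{S(\bar{\sk})}=S(-\sk)$, this is precisely $\psi^l_j(x;-\sk)$. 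On an internal edge $j\in\cI$ the conjugated amplitudes are $\overline{\alpha(\bar{\sk})_{jl}}$ and $\overline{\beta(\bar{\sk})_{jl}}$, which by \eqref{-kreal2} equal $\alpha(-\sk)_{jl}$ and $\beta(-\sk)_{jl}$ respectively; together with the interchange of the exponentials this again reproduces \eqref{psidef} evaluated at $-\sk$. Thus $\overline{\psi(\,;\bar{\sk})}=\psi(\,;-\sk)$ holds on every edge, hence as matrix valued functions, and the corollary follows.

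There is essentially no analytic obstacle here: the three ingredients \eqref{-kreal1}, \eqref{-kreal2} and \eqref{alphabeta4} are all already available as identities of meromorphic functions, so the verification on real $x$ together with the bookkeeping of exponential factors is the entire content, and no separate appeal to the identity theorem is required beyond what is built into those cited relations. The only point demanding a little care is to match each conjugated coefficient with the correct reflected amplitude---in particular that on the external edges one uses the $S$-relation of \eqref{-kreal1} while on the internal edges one uses the simplified amplitude relations \eqref{-kreal2}---so that the two modes $\e^{\ii\sk x}$ and $\e^{-\ii\sk x}$ carry exactly the coefficients prescribed by \eqref{psidef} at argument $-\sk$.
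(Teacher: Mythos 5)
Your proposal is correct and takes essentially the same route as the paper, which presents the corollary simply as a componentwise rewriting of \eqref{-kreal1} via the explicit form \eqref{psidef}. Your factorization through the intermediate identity $\overline{\psi(\,;\bar{\sk})}=\psi(\,;-\sk)$ followed by an application of \eqref{alphabeta4} is the same bookkeeping computation, merely reassociated: the ingredients you invoke, \eqref{-kreal2} and \eqref{alphabeta4}, are themselves just \eqref{-kreal1} and \eqref{-ks} combined with \eqref{-k}, so no genuinely different idea enters.
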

Also \eqref{hermanal} and the first relation in \eqref{-kreal1} gives
\begin{lemma}(see \cite{KS1} Corollary 3.2, \cite{KS4} Theorem 2.2)\label{lem:reales}
If the boundary conditions $(A,B)$ are real, then $S(\sk)$ is 
a symmetric matrix and so for $\sk$ purely imaginary the matrix $S(\sk)$ is real due 
to \eqref{unitarity1}.
\end{lemma}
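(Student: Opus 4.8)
The plan is to obtain the symmetry of $S(\sk)$ directly from the invariance of the scattering data under left multiplication of the boundary-condition pair by an invertible matrix, combined with the transposition identity \eqref{alphabeta7} of Lemma \ref{lem:alphabeta}. First I would record the equivalent characterization of reality noted just above: the pair $(A,B)$ is real precisely when there is an invertible $C^\prime$ with $C^\prime A=\bar A$ and $C^\prime B=\bar B$. Reading off \eqref{defs}, one checks that $Z_{CA,CB}(\sk)=C\,Z_{A,B}(\sk)$ and $(CA-\ii\sk CB)=C(A-\ii\sk B)$, so the factors of $C$ cancel and $S_{CA,CB}(\sk)=S_{A,B}(\sk)$ for every invertible $C$; that is, the scattering matrix depends only on the maximal isotropic subspace $\cM(A,B)$. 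Applying this with $C=C^\prime$ gives $S_{\bar A,\bar B}(\sk)=S_{A,B}(\sk)$.

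Next I would invoke \eqref{alphabeta7}, which states $S_{\bar A,\bar B}(\sk)=S_{A,B}(\sk)^T$ for arbitrary boundary conditions. Combining the two identities yields $S_{A,B}(\sk)=S_{A,B}(\sk)^T$, i.e. $S(\sk)$ is symmetric. Equivalently, one could bypass the subspace argument and follow the route flagged in the text: from the first relation in \eqref{-kreal1}, namely $\overline{S(\bar\sk)}=S(-\sk)$, together with $S(-\sk)=S(\sk)^{-1}$ from \eqref{-ks} and hermitian analyticity \eqref{hermanal}, one arrives at $\overline{S(\sk)^\dagger}=S(\sk)$; since $\overline{S(\sk)^\dagger}=S(\sk)^T$, this again gives symmetry.

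For the reality statement when $\sk$ is purely imaginary, I would use that $\sk=\ii t$ with $t\in\R$ forces $\bar\sk=-\sk$, hence $-\bar\sk=\sk$. Feeding this into \eqref{unitarity1} gives $S(\sk)^\dagger=S(-\bar\sk)=S(\sk)$. On the other hand $S(\sk)^\dagger=\overline{S(\sk)}^T=\overline{S(\sk)^T}$, and by the symmetry just established $S(\sk)^T=S(\sk)$, so $S(\sk)^\dagger=\overline{S(\sk)}$. Comparing the two expressions for $S(\sk)^\dagger$ yields $\overline{S(\sk)}=S(\sk)$, which is exactly the asserted reality of $S(\sk)$ on the imaginary axis.

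The argument is short because the earlier identities do all the heavy lifting; the only point requiring genuine care is the verification that the physical $|\cE|\times|\cE|$ scattering matrix $S(\sk)$ — as opposed to the larger object $\mathfrak{S}(\sk;A,B)$ for which the invariance under $(A,B)\mapsto(CA,CB)$ was stated explicitly — is itself unchanged by this replacement. That is the step I would check most carefully, but it follows at once from the cancellation of $C$ in \eqref{defs}, so no serious obstacle remains.
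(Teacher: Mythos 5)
Your proposal is correct and takes essentially the same approach as the paper: symmetry comes from the transposition identity \eqref{alphabeta7} combined with the fact that real boundary conditions give $S_{\bar{A},\bar{B}}(\sk)=S_{A,B}(\sk)$ (the invariance under $(A,B)\mapsto(CA,CB)$ that you verify from \eqref{defs} is exactly the mechanism behind Corollary \ref{cor:real}, which the paper cites), and reality on the imaginary axis follows from \eqref{unitarity1} together with symmetry. Indeed, your ``alternative route'' via \eqref{hermanal}, \eqref{-ks} and the first relation of \eqref{-kreal1} is precisely the one-line derivation the paper itself indicates.
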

\begin{remark}\label{re:timereversal}
For arbitrary boundary conditions $(A,B)$ the equivalent exponentiated form of \eqref{complexconj} is 
\begin{equation}\label{timereversal}
\overline{e^{\ii\Delta_{A,B}t}\,\psi}
=\e^{-\ii\Delta_{\bar{A},\bar{B},\underline{a}}\,t}\,\overline{\psi}.
\end{equation}
If the boundary conditions $(A,B)$ are real and hence 
$\Delta_{A,B}=\Delta_{\bar{A},\bar{B},\underline{a}}$ holds, 
then \eqref{timereversal} is just the statement that {\rm time reversal invariance} holds. 
In the single vertex case this invariance combined with the hermiticity condition on the field (see below) 
has been used in \cite{BeMiSoI} to prove that $S(\sk)$ is then a symmetric matrix .
\end{remark}
Combined with \eqref{kindeps} we obtain
\begin{corollary}\label{cor:laplacereal4} 
For a single vertex graph all $\sk$-independent S-matrices resulting from {\rm real} boundary conditions 
are of the form \eqref{kindeps} where $P$ is a real, symmetric and idempotent matrix, $P^2=P$.
\end{corollary}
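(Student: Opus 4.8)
The plan is to combine the structural description \eqref{kindeps} of $\sk$-independent S-matrices with the reality statements of Lemma \ref{lem:reales}. Recall that for a single vertex graph every $\sk$-independent scattering matrix takes the form $S=\1-2P$ with $P$ an orthogonal projector, so that $P=P^\dagger=P^2$ is known a priori. Thus $P$ is already hermitian and idempotent, and the only thing left to establish under the reality hypothesis is that $P$ is in fact real; symmetry will then be automatic.

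First I would invoke Lemma \ref{lem:reales}: since the boundary conditions $(A,B)$ are real, $S(\sk)$ is symmetric, $S(\sk)=S(\sk)^T$, for every $\sk$, and for purely imaginary $\sk$ the matrix $S(\sk)$ is moreover real. Concretely, symmetry turns \eqref{unitarity1} into $\overline{S(\sk)}=S(-\bar{\sk})$, and at $\sk=\ii t$ with $t\in\R$ one has $-\bar{\sk}=\sk$, whence $\overline{S(\sk)}=S(\sk)$. The crucial simplification here is $\sk$-independence: because $S$ does not depend on $\sk$, the reality that holds only along the imaginary axis immediately propagates to the single constant matrix $S$. Hence $S$ is a real, symmetric matrix.

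It then remains to read off the properties of $P$ from $S=\1-2P$, i.e. $P=\frac{1}{2}(\1-S)$. Realness of $S$ gives realness of $P$, and combining realness with the already known hermiticity, $P=P^\dagger=\overline{P}^{\,T}=P^T$, shows that $P$ is symmetric; idempotency $P^2=P$ is inherited from \eqref{kindeps}. This yields exactly the asserted form. I do not expect a genuine obstacle in this argument, as it is essentially an assembly of Lemma \ref{lem:reales} and the $\sk$-independent normal form \eqref{kindeps}; the one point requiring a moment's care is precisely the passage from reality on the imaginary axis to reality everywhere, which is justified here only because $S$ is constant in $\sk$.
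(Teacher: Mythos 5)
Your proof is correct and follows essentially the same route as the paper, which obtains the corollary simply by combining Lemma \ref{lem:reales} (symmetry of $S(\sk)$ for real boundary conditions, and reality on the imaginary axis via \eqref{unitarity1}) with the normal form \eqref{kindeps}. Your explicit remark that $\sk$-independence is what transports reality from the imaginary axis to the constant matrix $S$ is precisely the point the paper leaves implicit, so the proposal is a faithful filling-in of the paper's argument.
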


\subsection{Negative eigenvalues of the Laplace operator and their eigenfunctions}
\label{sec:negev}
The operator $-\Delta_{A,B}$ may have negative eigenvalues.
We introduce the sets 
\begin{align}\label{sigma<}
\Sigma^{\le}&=\Sigma^<_{A,B}=\{\sk=\ii \kappa\mid\: \kappa\ge 0, \sk^2=-\kappa^2\quad 
\mbox{is an eigenvalue of}\quad -\Delta_{A,B}\}\\\nonumber
\Sigma^<&=\Sigma^<_{A,B}=\{\sk=\ii \kappa\mid\: \kappa>0, \sk^2=-\kappa^2\quad 
\mbox{is an eigenvalue of}\quad -\Delta_{A,B}\}
\end{align}
such that trivially $\Sigma^<\subseteq\Sigma^{\le}$ and let $\Sigma=\Sigma^{\le}\cup\Sigma^>$, the set of all
discrete eigenvalues. 
We will discuss zero as a possible eigenvalue 
separately in the next subsection \ref{subsec:zero}.
Since all Laplace operators $-\Delta_{A,B}$ for different $(A,B)$ 
are finite rank perturbations of each other and since 
the ones with Dirichlet and (or) Neumann boundary conditions are non-negative, 
$\Sigma^<$ is a finite set and the multiplicity of each eigenvalue is finite.
If $\sk^2=-\kappa^2<0$ is such an eigenvalue with multiplicity 
$n(\sk)$, there is a finite, orthonormal basis of eigenfunctions 
$\psi^{\sk,\nu},\;1\le \nu\le n(\sk)$. Written in local coordinates they are all necessarily of the form
\begin{equation}\label{ef}
\psi^{\sk,\nu}_j(x)=\begin{cases}s_j^{\sk,\nu}\e^{\ii\sk x}
\quad&\mbox{for}\quad j\in\cE\\
u_j^{\sk,\nu}\e^{\ii \sk x}+v_j^{\sk,\nu}
\e^{-\ii\sk x}\quad&\mbox{for}\quad j\in\cI.
\end{cases}
\end{equation}
The orthonormality condition for fixed $\sk\in\Sigma^<$ is easily calculated to be 
\begin{align}\label{efnormal}
\delta_{\nu,\nu^\prime}=\langle\psi^{\sk,\nu},\psi^{\sk,\nu^\prime}\rangle&=
-\frac{1}{2\ii\sk}\sum_{e\in\cE}\overline{s_j^{\sk,\nu}}s_j^{\sk,\nu^\prime}
+\sum_{i\in\cI}\Big\{u^{\sk,\nu}_{i}\overline{u^{\sk,\nu^\prime}_{i}}a_i+
v^{\sk,\nu}_{i}\overline{v^{\sk,\nu^\prime}_{i}}a_i \\\nonumber
&\qquad+\frac{1}{2\ii\sk}\left(\overline{v^{\sk,\nu}_{i}}u^{\sk,\nu^\prime}_{i}
\left(\e^{2\ii \sk a_i}-1\right)
-\overline{u^{\sk,\nu}_{i}}v^{\sk,\nu^\prime}_{i}
\left(\e^{-2\ii \sk a_i}-1\right)\right)\Big\}.
\end{align}
In analogy to Corollary \ref{cor:degen} we obtain 
\begin{corollary}\label{degen-}
The degeneracy $n(\sk)$ of any discrete eigenvalue $E=\sk^2 \;(\sk\in\Sigma^<)$ 
satisfies the bound
\begin{equation}\label{degenbound-}
n(\sk)\le |\cE|+2|\cI|.
\end{equation}
\end{corollary}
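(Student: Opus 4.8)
The plan is to establish the bound by a direct dimension count on the eigenspace, exactly paralleling the argument behind Corollary \ref{cor:degen}. The key observation is that every eigenfunction for the negative eigenvalue $E=\sk^2=-\kappa^2$ (with $\sk=\ii\kappa$, $\kappa>0$) is already constrained to the explicit form \eqref{ef}: on each edge $j$ the equation $-\Delta_{A,B}\psi=\sk^2\psi$ reduces to $\psi_j''=\kappa^2\psi_j$, whose solutions are spanned by $\e^{\ii\sk x}=\e^{-\kappa x}$ and $\e^{-\ii\sk x}=\e^{\kappa x}$. On an external edge $e\in\cE$ the domain is the half-line $[0,\infty)$, so square-integrability discards the growing exponential and leaves the single coefficient $s_e$; on an internal edge $i\in\cI$ the bounded interval $[0,a_i]$ admits both exponentials, contributing $u_i$ and $v_i$. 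This accounts for exactly $|\cE|+2|\cI|$ complex parameters.

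Next I would consider the linear map sending an eigenfunction $\psi$ to its coefficient vector $(\{s_e\}_{e\in\cE},\{u_i\}_{i\in\cI},\{v_i\}_{i\in\cI})\in\C^{|\cE|+2|\cI|}$. This map is injective: on every edge the pair $\e^{\kappa x},\e^{-\kappa x}$ is linearly independent because each edge has strictly positive length (the $a_i>0$ and the external edges are genuine half-lines), so the function is recovered from its coefficients. Hence the eigenspace embeds linearly into $\C^{|\cE|+2|\cI|}$ and its dimension $n(\sk)$ cannot exceed $|\cE|+2|\cI|$, which is precisely the asserted bound \eqref{degenbound-}. The orthonormality formula \eqref{efnormal} is not needed for this count; it merely records the induced inner product on the coefficient space.

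The only delicate point, and the sole difference from the positive-eigenvalue case, is the role of the external edges. For $\sk\in\Sigma^>$ Theorem \ref{theo:posev} forces the eigenfunction to vanish identically on every external edge, which removes the $|\cE|$ boundary coefficients and sharpens the bound to $2|\cI|$ as in Corollary \ref{cor:degen}. For $\sk\in\Sigma^<$ no such vanishing is available---a bound state may decay into the external edges with nonzero amplitude $s_e$---so all $|\cE|$ external coefficients must be retained, and one obtains only the weaker bound $|\cE|+2|\cI|$. Beyond verifying that square-integrability alone, rather than outright vanishing, governs the external contributions, the argument is elementary linear algebra and presents no genuine obstacle.
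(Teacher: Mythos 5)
Your proof is correct and takes essentially the same route as the paper: negative-eigenvalue eigenfunctions are forced into the form \eqref{ef} (square-integrability killing the growing exponential on external edges), so the eigenspace injects linearly into the coefficient space $\C^{|\cE|+2|\cI|}$, which gives the bound. The paper records this same parameter count via the orthonormality form \eqref{efnormal} ``in analogy to Corollary \ref{cor:degen}'', but the underlying dimension argument is identical to yours.
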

After a short calculation, the boundary condition can be brought into the form, 
compare 
\eqref{posevbdy},
\begin{equation}\label{efbdy}
Z(\sk=\ii\kappa)\begin{pmatrix}\underline{s}^{\sk=\ii\kappa,\nu}
\\\underline{u}^{\sk=\ii\kappa,\nu}
\\\underline{v}^{\sk=\ii\kappa,\nu}\end{pmatrix}=0.
\end{equation}
In case the boundary conditions are real, the $\psi^{\sk,\nu}$ may be chosen 
to be real, that is the coefficients $s_e^{\sk,\nu},u_j^{\sk,\nu}$ and $v_j^{\sk,\nu}$ are all real. 

Recall that there is a canonical Lebesgue measure $dp$ on $\cG$.   
$\delta(p,q)$ is the Dirac $\delta$-function on $\cG$ with the defining property
$$
\int_\cG \delta(p,q)f(q)dq=f(p). 
$$
\begin{remark}\label{re:completeness}
The arguments may also be reversed to show that $\Sigma$ equals 
the set of zeros of $\det Z(\sk)$ in the set $\{\sk\in\C\mid \Re \sk=0,\Im\, \sk>0\}\cup \R_+$ 
and that the $\sk^2$ with $\sk\in\Sigma$ form exactly the discrete spectrum.
As a result there is a completeness relation written as 
\begin{equation}\label{compl}
\frac{1}{2\pi}\sum_l\int_0^\infty d\sk\; \psi^l(p;\sk)\overline{\psi^l(q;\sk)}
+\sum_{\sk\in\Sigma,1\le \nu\le n(\sk)}\psi^{\sk,\nu}(p)
\overline{\psi^{\sk,\nu}(q)}
=\delta(p,q)\qquad p,q\in\cG.
\end{equation}
The normalization factor $1/2\pi$ is due to \eqref{ortho1}.
\end{remark}
\subsubsection{Bound states and poles of the S matrix in the single vertex case}\label{boundstatespoles}~~~\\
In the single vertex case one can actually say much more about bound states. In fact, we will see that 
they are completely encoded in the S-matrix.
Thus the negative eigenvalues are the poles of the scattering matrix and the corresponding 
eigenfunctions may be obtained from the residues of the poles. 
To explain this in detail let $\cG_n$ denote the single vertex graph with $n=|\cE|$ 
half-lines meeting at the single vertex $v$. 
We will label these half-lines from $1$ to $n$. 
The scattering matrix $S(\sk)$ now simply equals
$\mathfrak{S}(\sk)=-(A+\ii \sk B)^{-1}(A-\ii \sk B)$. As shown in \cite{KS8}, see 
relation (3.23) there, the $S(\sk)$ for 
different $\sk$ all commute and as a consequence there is a common spectral 
decomposition \cite{KPS3}, 
\begin{equation}\label{star1}
S(\sk)=\sum_{\kappa\in\fI}S^\kappa(\sk)= \sum_{\kappa\in\fI} \frac{\sk+\ii\kappa}{\sk-\ii\kappa}P^\kappa=
P^0-P^\infty+\sum_{\kappa\in\fI_0} \frac{\sk+\ii\kappa}{\sk-\ii\kappa}P^\kappa.
\end{equation}
$\fI=\fI(A,B)$ is a finite set of different real numbers, including possibly 
the values $\kappa=0,\infty$. Also $\fI_0$ is the subset, where these elements have been omitted. 
The $P$'s define a 
decomposition of unity of pairwise orthogonal projectors
\begin{equation}\label{star2}
\sum_{\kappa\in \fI}P^\kappa=\1_{n\times n}, \quad 
(P^\kappa)^\dagger=P^\kappa,\quad  P^\kappa P^{\kappa^\prime}
=P^\kappa\delta_{\kappa\kappa^\prime}. 
\end{equation}
Thus $P^\kappa$ is the orthogonal projection onto the eigenspace of $S(\sk)$ 
with eigenvalue $(\sk+\ii\kappa)/(\sk-\ii\kappa)$ (equal to $1$ for $\kappa=0$ and equal to 
$-1$ for $\kappa=\infty$). The multiplicities are $0\le n_S(\ii\kappa)=\tr P^\kappa$.

$S(\sk)$ is $\sk$-independent if and only if 
$\fI=\{0,\infty\}$, that is $\fI_0=\emptyset$. Then $P$ in \eqref{kindeps} is just $P^\infty$ and  
$P^0=\1-P^\infty$. Moreover $S(\sk)$ is invertible if and only if $\sk\notin \ii\fI_0\cup-\ii\fI_0$. 
\begin{lemma}\label{lem:realproj}
If the boundary conditions $(A,B)$ are real, then the $P^\kappa$ are 
real, symmetric matrices.
\end{lemma}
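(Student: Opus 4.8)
The matrices $P^\kappa$ are already hermitian by \eqref{star2}, and a hermitian matrix that is real is automatically symmetric, equal to its own transpose. Hence the whole content of the lemma is the reality $\overline{P^\kappa}=P^\kappa$, and the plan is to extract this from the reality relation for the $S$-matrix together with the uniqueness of the spectral decomposition \eqref{star1}.

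First I would fix an arbitrary $\sk>0$. At such $\sk$ the matrix $S(\sk)=\mathfrak{S}(\sk)$ is unitary (Lemma \ref{lem:mathfracs}), hence normal, and \eqref{star1} exhibits its spectral decomposition with eigenvalues $(\sk+\ii\kappa)/(\sk-\ii\kappa)$, $\kappa\in\fI$; these are pairwise distinct because $(\sk+\ii\kappa)(\sk-\ii\kappa')-(\sk+\ii\kappa')(\sk-\ii\kappa)=2\ii\sk(\kappa-\kappa')\neq 0$ for $\sk>0$, $\kappa\neq\kappa'$, so their reciprocals appearing below are distinct as well. Since the boundary conditions are real, the first relation in \eqref{-kreal1} (Corollary \ref{cor:real}) gives $\overline{S(\sk)}=S(-\sk)$ for real $\sk$. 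I would then write out both sides using the $\sk$-independent decomposition \eqref{star1}:
\begin{equation*}
\overline{S(\sk)}=\sum_{\kappa\in\fI}\frac{\sk-\ii\kappa}{\sk+\ii\kappa}\,\overline{P^\kappa},
\qquad
S(-\sk)=\sum_{\kappa\in\fI}\frac{\sk-\ii\kappa}{\sk+\ii\kappa}\,P^\kappa,
\end{equation*}
the first using $\sk,\kappa\in\R$, the second by substituting $-\sk$ in \eqref{star1} and clearing signs.

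The conclusion then follows from uniqueness of the spectral decomposition. Taking complex conjugates in \eqref{star2} shows that $\{\overline{P^\kappa}\}_{\kappa\in\fI}$ is again a resolution of the identity into pairwise orthogonal hermitian projectors. Thus the two displayed expressions are both orthogonal spectral decompositions of one and the same normal matrix $\overline{S(\sk)}=S(-\sk)$, attaching to each index $\kappa$ the same eigenvalue $(\sk-\ii\kappa)/(\sk+\ii\kappa)$. Since these eigenvalues are pairwise distinct, the spectral projector onto each is unique, forcing $\overline{P^\kappa}=P^\kappa$ for every $\kappa\in\fI$. This proves reality, and together with hermiticity from \eqref{star2} the $P^\kappa$ are real symmetric.

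I do not anticipate a genuine obstacle; the only delicate point is the bookkeeping behind the projector matching, namely the distinctness of the eigenvalues at the chosen $\sk$ and the fact (immediate from conjugating \eqref{star2}) that $\{\overline{P^\kappa}\}$ is again an orthogonal resolution of the identity. An equally short alternative replaces \eqref{-kreal1} by Lemma \ref{lem:reales}: there $S(\sk)$ is symmetric, so transposing \eqref{star1} and invoking the same uniqueness yields $(P^\kappa)^T=P^\kappa$ directly, after which hermiticity from \eqref{star2} again forces reality. I would present the argument via \eqref{-kreal1} as primary, since it produces reality in a single step.
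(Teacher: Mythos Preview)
Your argument is correct. You use the reality relation $\overline{S(\sk)}=S(-\sk)$ at a real $\sk>0$ and then the uniqueness of the spectral decomposition of the normal matrix $S(-\sk)$ to match $\overline{P^\kappa}$ with $P^\kappa$ eigenvalue by eigenvalue; the distinctness check and the observation that conjugating \eqref{star2} again yields an orthogonal resolution of the identity are exactly the two points needed to make this clean.

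The paper proceeds differently. It evaluates $S$ on the imaginary axis, where by Lemma \ref{lem:reales} the matrix $S(\ii\tau)$ is real and symmetric, and then extracts each $P^\kappa$ with $\kappa\in\fI_0$ via the residue-type limit
\[
P^\kappa=\lim_{\tau\to\kappa}\frac{\tau-\kappa}{\tau+\kappa}\,S(\ii\tau),
\]
which is manifestly a limit of real symmetric matrices. The remaining projectors $P^0$ and $P^\infty$ are then obtained by writing $P^0\pm P^\infty$ as real symmetric combinations of $S(\ii\tau)$ and the already-treated $P^\kappa$. Your route is more uniform, handling all $\kappa\in\fI$ (including $0$ and $\infty$) in one stroke via spectral uniqueness; the paper's route is more constructive, giving an explicit formula for each projector, but at the cost of a separate case analysis for $P^0$ and $P^\infty$. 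Your alternative via transposition and Lemma \ref{lem:reales} is essentially the spectral-uniqueness version of the paper's input, and would also work.
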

\begin{proof}
Due to the representation
\begin{equation*}\label{prep1}
P^\kappa=\lim_{\tau\rightarrow \kappa} \frac{\tau-\kappa}{\tau+\kappa}S(\ii\tau),\quad \kappa\neq 0,\infty,
\end{equation*}
each $P^\kappa$ with $\kappa\in\fI_0$ is real and symmetric by Lemma \ref{lem:reales}. So 
by the same lemma and with $\tau>\max_{\kappa\in\fI} \kappa$
\begin{equation*}\label{prep2}
\begin{aligned}
P^{0}-P^{\infty}&=S(\ii \tau)-\sum_{\kappa\in\fI_0 } \frac{\tau+\kappa}{\tau-\kappa}P^\kappa\\
P^{0}+P^{\infty}&=\1_{n\times n}-\sum_{\kappa\in\fI_0} P^\kappa
\end{aligned}
\end{equation*}
are real and symmetric and so are both $ P^{0}$ and $P^{\infty}$.
\end{proof}
Our next aim is to determine the eigenfunctions $\psi^{\sk,\nu}$ out of these data.
We will conform to our previous notation and show 
$\Sigma=\Sigma^<=\{\ii\kappa\mid 0<\kappa\in\fI_0 \}$, such that the 
negative eigenvalues of $-\Delta_{A,B}$ are of the form $-\kappa^2$. Given $\kappa$, 
there are orthonormal unit vectors 
$\underline{s}^{\ii\kappa, \nu}\, (1\le\nu\le n(\ii\kappa))$ in $\C^n$ which span the 
eigenspace of $P^\kappa$ for the eigenvalue $1\;(=\Ran P^\kappa )$, 
\begin{equation}\label{sortho}
P^{\kappa^\prime}\underline{s}^{\ii\kappa, \nu}=\delta_{\kappa^\prime,\kappa}\underline{s}^{\ii\kappa, \nu}
\end{equation}
and hence
\begin{equation}\label{seigen}
S(\sk)\underline{s}^{\ii\kappa, \nu}=\frac{\sk+\ii\kappa}{\sk-\ii\kappa}\underline{s}^{\ii\kappa, \nu}.
\end{equation} 
Observe that the entire set of the $\underline{s}^{\ii\kappa, \nu}$ is automatically orthonormal
by \eqref{star2} and \eqref{sortho}
\begin{equation}\label{snorm}
\sum_{j=1}^n\overline{s^{\ii\kappa, \nu}_j}s^{\ii\kappa^\prime, \nu^\prime}_j
=\delta_{\kappa,\kappa^\prime}\delta_{\nu,\nu^\prime}.
\end{equation}
When the boundary conditions and hence also the projectors are real by the previous 
lemma, these eigenvectors may then be chosen to be real. 
We define the family of functions 
$\psi^{\ii\kappa, \nu}$ in $L^2(\cG_n)$ in terms of its components as
\begin{equation}\label{psimunu}
\psi^{\ii\kappa, \nu}_j(x)=s^{\ii\kappa, \nu}_j\sqrt{2\kappa}\e^{-\kappa x}\qquad 1\le j\le n, 
1\le \nu\le n(\ii\kappa).
\end{equation}
The orthonormality of this set follows from the orthonormality \eqref{snorm} of the 
$\underline{s}^{\ii\kappa, \nu}$. 
The next result shows how the bound states are encoded in the scattering matrix. 
\begin{proposition}\label{prop:starev}
Let $\cG$ be a single vertex graph. For $0<\kappa<\infty$ appearing in the spectral 
decomposition \eqref{star1} the $\psi^{\ii\kappa, \nu}$ as defined by \eqref{psimunu} are normalized 
eigenfunctions of $-\Delta_{A,B}$ 
with eigenvalue $-\kappa^2$ satisfying the boundary conditions. Conversely, if $-\kappa^2$ 
with $0<\kappa <\infty$ is an eigenvalue, then $S(\sk)$ has a pole at $\sk=\ii \kappa$.
In particular the multiplicity of each such eigenvalue is $n_S(\ii\kappa)$ and the number $n_b$ 
of bound states (counting multiplicities) equals
$$
n_b=\sum_{0<\kappa \in\fI_0}n_S(\ii\kappa).
$$
\end{proposition}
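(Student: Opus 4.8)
The plan is to translate everything, for the single vertex graph $\cG_n$ (which has no internal edges, $\cI=\emptyset$), into a purely algebraic statement about the vertex matrices $A,B$, and then to read it off the spectral decomposition \eqref{star1} of $S(\sk)$. Since there are no internal edges, any candidate bound state at $\sk=\ii\kappa$, $\kappa>0$, has components $\psi_j(x)=s_j\e^{-\kappa x}$ on the $j$-th half-line, so that $\underline{\psi}=\underline{s}$ and $\underline{\psi}'=-\kappa\underline{s}$ for the coefficient vector $\underline{s}=(s_j)_{j=1}^n\in\C^n$; hence the boundary condition \eqref{bdycond} reduces to
\begin{equation*}
(A-\kappa B)\underline{s}=0 .
\end{equation*}
The heart of the matter is the equivalence, for $\underline{s}\in\C^n$ and $\kappa>0$,
\begin{equation*}
(A-\kappa B)\underline{s}=0\quad\Longleftrightarrow\quad S(\sk)\,\underline{s}=\frac{\sk+\ii\kappa}{\sk-\ii\kappa}\,\underline{s}\quad\Longleftrightarrow\quad \underline{s}\in\Ran P^\kappa ,
\end{equation*}
which I would establish first; everything else is bookkeeping.

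To prove the equivalence I would use $S(\sk)=-(A+\ii\sk B)^{-1}(A-\ii\sk B)$ in the form $(A+\ii\sk B)S(\sk)=-(A-\ii\sk B)$, read as an identity of meromorphic functions. For the forward direction, $A\underline{s}=\kappa B\underline{s}$ gives $(A\pm\ii\sk B)\underline{s}=(\kappa\pm\ii\sk)B\underline{s}$, so that for generic $\sk$ (where $A+\ii\sk B$ is invertible and $\kappa+\ii\sk\neq0$) one may write $B\underline{s}=(\kappa+\ii\sk)^{-1}(A+\ii\sk B)\underline{s}$ and substitute to obtain
\begin{equation*}
S(\sk)\,\underline{s}=-\frac{\kappa-\ii\sk}{\kappa+\ii\sk}\,\underline{s}=\frac{\sk+\ii\kappa}{\sk-\ii\kappa}\,\underline{s}.
\end{equation*}
Conversely, applying $A+\ii\sk B$ to $S(\sk)\underline{s}=\frac{\sk+\ii\kappa}{\sk-\ii\kappa}\underline{s}$ and clearing the denominator yields the polynomial identity $(\sk+\ii\kappa)(A+\ii\sk B)\underline{s}=-(\sk-\ii\kappa)(A-\ii\sk B)\underline{s}$, whose expansion collapses to $2\sk(A-\kappa B)\underline{s}=0$ and hence to $(A-\kappa B)\underline{s}=0$. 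Comparison with the distinct spectral branches in \eqref{star1} identifies the eigenvalue $\frac{\sk+\ii\kappa}{\sk-\ii\kappa}$ with the $\kappa$-th one, so $P^{\kappa'}\underline{s}=0$ for $\kappa'\neq\kappa$ and $\underline{s}\in\Ran P^\kappa$.

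Given the equivalence, the first assertion is immediate: the $\psi^{\ii\kappa,\nu}$ of \eqref{psimunu} solve $-\Delta_{A,B}\psi=-\kappa^2\psi$ edge-wise, they satisfy the boundary conditions because $\underline{s}^{\ii\kappa,\nu}\in\Ran P^\kappa$ together with \eqref{seigen} forces $(A-\kappa B)\underline{s}^{\ii\kappa,\nu}=0$, and their orthonormality was already recorded as a consequence of \eqref{snorm}. For the converse and the multiplicity, I would observe that $\psi\mapsto\underline{s}$ is a linear isomorphism of the eigenspace of $-\kappa^2$ onto $\Ker(A-\kappa B)$, which by the equivalence equals $\Ran P^\kappa$; hence $-\kappa^2$ is an eigenvalue iff $P^\kappa\neq0$, i.e. iff $\kappa\in\fI_0$, and its multiplicity is $\dim\Ran P^\kappa=\tr P^\kappa=n_S(\ii\kappa)$. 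Since $S(\sk)\underline{s}=\frac{\sk+\ii\kappa}{\sk-\ii\kappa}\underline{s}$ with $\underline{s}\neq0$ blows up as $\sk\to\ii\kappa$, $S(\sk)$ genuinely has a pole there. Summing over the positive $\kappa\in\fI_0$ yields $n_b=\sum_{0<\kappa\in\fI_0}n_S(\ii\kappa)$.

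The step I expect to be the main obstacle is not any single calculation but the clean handling of the equivalence as a meromorphic identity: one must make sure that the vanishing of $\det(A+\ii\sk B)$ at $\sk=\ii\kappa$ produces a \emph{non-removable} pole of $S$, and that $\Ker(A-\kappa B)$ has exactly dimension $\tr P^\kappa$. Both facts are delivered directly by the eigenvector relation $S(\sk)\underline{s}=\frac{\sk+\ii\kappa}{\sk-\ii\kappa}\underline{s}$ — which exhibits the pole through the blow-up of its eigenvalue and pins the multiplicity to the rank of $P^\kappa$ — so that no residue computation is needed.
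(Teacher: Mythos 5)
Your proposal is correct, and it is organized around exactly the same pivot as the paper's proof: the equivalence $(A-\kappa B)\,\underline{s}=0 \Leftrightarrow S(\sk)\,\underline{s}=\frac{\sk+\ii\kappa}{\sk-\ii\kappa}\,\underline{s} \Leftrightarrow \underline{s}\in\Ran P^\kappa$, from which the boundary-condition verification, the multiplicity count $n_S(\ii\kappa)=\tr P^\kappa$, and the pole (via blow-up of the eigenvalue, with no residue computation) all follow. Where you differ is in how that equivalence is established. The paper invokes the result of \cite{KS2} (see also Proposition 3.7 of \cite{KS8}) that the pair $(A,B)$ may be replaced by the equivalent pair
\begin{equation*}
A(\sk)=-\tfrac{1}{2}\left(S(\sk)-\1\right),\qquad B(\sk)=\tfrac{1}{2\ii\sk}\left(S(\sk)+\1\right),
\end{equation*}
valid for any $\sk$ in the domain of analyticity and invertibility of $S$; once this substitution is made, both the direct statement and the converse reduce to a one-line computation using \eqref{seigen}, exactly as in \eqref{spole}. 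You instead work directly from the single-vertex formula $S(\sk)=-(A+\ii\sk B)^{-1}(A-\ii\sk B)$ of \eqref{mathfracs}, multiplying through by $A+\ii\sk B$ and collapsing the resulting polynomial identity to $2\sk(A-\kappa B)\underline{s}=0$, then identifying the eigenspace with $\Ran P^\kappa$ through the injectivity, for fixed generic $\sk$, of $\kappa\mapsto(\sk+\ii\kappa)/(\sk-\ii\kappa)$ applied to the spectral decomposition \eqref{star1}. What your route buys is self-containedness: no appeal to the equivalent-boundary-conditions lemma, only the definition of $\mathfrak{S}$ and the decomposition \eqref{star1}, at the price of slightly more matrix algebra; the paper's route is shorter given the quoted machinery. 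You are also somewhat more careful than the paper on two points it leaves implicit: that $\psi\mapsto\underline{s}$ is a linear isomorphism of the eigenspace onto $\Ker(A-\kappa B)$ (which pins the multiplicity), and that the pole at $\sk=\ii\kappa$ is genuinely non-removable.
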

By this lemma there are at most $n=|\cE|$ bound states when $\cG$ is a single vertex graph. It is easy 
to construct examples where this upper bound actually is also obtained. 
\begin{proof}
As for the boundary values of $\psi^{\ii\kappa, \nu}$ and its derivative we have 
$$
\underline{\psi}^{\ii\kappa, \nu}=\sqrt{2\kappa}\,\underline{s}^{\ii\kappa, \nu},\quad 
\underline{\psi}^{\ii\kappa,\nu\;\prime}
=-\kappa \sqrt{2\kappa}\,\underline{s}^{\ii\kappa, \nu}
=-\kappa\, \underline{\psi}^{\ii\kappa, \nu}
$$
and we have to show that 
$$
A\underline{\psi}^{\ii\kappa, \nu}+B{\underline{\psi}^{\ii\kappa, \nu}}^\prime=
\sqrt{2\kappa}\,(A-\kappa B)\underline{s}^{\ii\kappa, \nu}=0.
$$
As established in \cite{KS2}, see also \cite{KS8} Proposition 3.7, we may instead of 
$(A,B)$ equivalently use the pair $(A(\sk),B(\sk))$, where 
$$
A(\sk)=-\frac{1}{2}\left(S(\sk)-\1_{n\times n}\right),\qquad  B(\sk)
=\frac{1}{2\ii\sk}
\left(S(\sk)+\1_{n\times n}\right)
$$
and where $\sk>0$ is arbitrary. Actually by the proof given there, 
$\sk$ may be chosen arbitrary in the domain of analyticity of 
$S(\sk)$ and for which $S(\sk)$ is invertible. By 
\begin{equation}\label{sdet}
\det S(\sk)=(-1)^{n_S(\ii\infty)}
\prod_{\kappa\in\fI_0}\left(\frac{\sk+\ii\kappa}{\sk-\ii\kappa}\right)^{n_S(\ii\kappa)}
\end{equation}
this is the case if 
$\sk$ is chosen outside the set $\ii\fI_0\cup-\ii\fI_0$. In a moment we shall have occasion 
to make use of this observation. A trivial calculation using \eqref{seigen} gives  
$$
\begin{aligned}
(A(\sk)-\kappa B(\sk))\underline{s}^{\kappa, \nu}&=
\frac{1}{2}\left(\left(-1-\frac{\kappa}{\ii\sk}\right)S(\sk)+
\left(1-\frac{\kappa}{\ii\sk} \right)\1_{n\times n}\right)\underline{s}^{\kappa, \nu}\\
&=\frac{1}{2}\left(\left(-1-\frac{\kappa}{\ii\sk}\right)\frac{\sk+\ii \kappa}
{\sk-\ii \kappa} + \left(1-\frac{\kappa}{\ii\sk} \right)\right)
\underline{s}^{\kappa, \nu}=0.
\end{aligned}
$$
As for the converse let $\psi\neq 0$ be an eigenfunction of $-\Delta_{A,B}\psi$ with eigenvalue 
$-\kappa_0^2$ with $\kappa_0>0$, $-\Delta_{A,B}\psi=-\kappa^2_0\psi$. Then $\psi$ is
necessarily of the form $\psi_j(x)=c_j\exp-\kappa_0 x$. Let $0\neq \underline{c}\in\C^n$ denote the column 
vector with components $c_j$. Since $\psi$ satisfies the boundary conditions, the relation   
\begin{equation}\label{kappa0}
(A-\kappa_0 B)\underline{c}=0
\end{equation}
holds or equivalently by the above remarks 
$$
(A(\sk)-\kappa_0 B(\sk))\underline{c}=0,\qquad \sk\notin\ii\fI\cup-\ii\fI
$$
which when written out gives
\begin{equation}\label{spole}
S(\sk)\underline{c}=\frac{\sk+\ii\kappa_0}{\sk-\ii\kappa_0}\underline{c}.
\end{equation}
Thus $S(\sk)$ has a pole at $\sk=\ii\kappa_0\in\ii\fI$ and $P^{\kappa_0}\underline{c}=\underline{c}$
For a single vertex graph $Z(\sk)$ as defined by \eqref{zdef} equals $A+\ii\sk B$. 
Observe that we used only \eqref{kappa0} to establish \eqref{spole}.
Therefore the condition $\det (A+\ii\sk_0 B)=0$ for $\sk_0\neq 0$, cf. Remark \ref{re:completeness}, 
is equivalent to $\sk_0$ being a pole for $S(\sk)$. 
Moreover the discrete spectrum with its multiplicities 
is given in terms of the scattering matrix as
\begin{equation}\label{sigma1}
\Sigma=\{\ii\kappa\,|\,0<\kappa\in\fI_0\},\qquad n(\ii\kappa)=n_S(\ii\kappa),\quad 
0<\kappa\in\fI_0.
\end{equation}
\end{proof}
As a further, related consequence of this proposition the relation 
\begin{equation}\label{diracprep}
P^\kappa_{jl}=\sum_{1\le \nu\le n(\ii\kappa)} s_j^{\ii\kappa,\nu}\overline{s_l^{\ii\kappa,\nu}}
\end{equation}
holds for the matrix elements of $P^\kappa$.

As for the r\^ole of $P^0$ and $P^\infty$ we have 
\begin{lemma}\label{lem:pp}
The relations 
\begin{equation}\label{pp}
\ker A=\Ran P^0,\qquad \ker B=\Ran P^\infty
\end{equation}
hold, so in particular $AP^0=0,\,BP^\infty=0$.
\end{lemma}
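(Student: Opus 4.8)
The plan is to reduce everything to the spectral decomposition \eqref{star1}--\eqref{star2} by replacing $(A,B)$ with the equivalent pair $A(\sk)=-\tfrac12\bigl(S(\sk)-\1_{n\times n}\bigr)$, $B(\sk)=\tfrac1{2\ii\sk}\bigl(S(\sk)+\1_{n\times n}\bigr)$ that already appears in the proof of Proposition \ref{prop:starev}. Since this pair defines the same maximal isotropic subspace as $(A,B)$, there is an invertible $C(\sk)$ with $A(\sk)=C(\sk)A$ and $B(\sk)=C(\sk)B$. Left multiplication by an invertible matrix leaves the kernel unchanged, so
\[\ker A=\ker A(\sk)=\ker\bigl(S(\sk)-\1_{n\times n}\bigr),\qquad \ker B=\ker B(\sk)=\ker\bigl(S(\sk)+\1_{n\times n}\bigr)\]
for every admissible $\sk$. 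This turns the claim into a statement about the two spectral subspaces of $S(\sk)$ attached to the eigenvalues $+1$ and $-1$.

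Next I would insert \eqref{star1} and use $\1_{n\times n}=\sum_{\kappa\in\fI}P^\kappa$ from \eqref{star2} to compute
\[S(\sk)-\1_{n\times n}=-2P^\infty+\sum_{\kappa\in\fI_0}\frac{2\ii\kappa}{\sk-\ii\kappa}P^\kappa,\qquad S(\sk)+\1_{n\times n}=2P^0+\sum_{\kappa\in\fI_0}\frac{2\sk}{\sk-\ii\kappa}P^\kappa.\]
For real $\sk>0$ the scalar coefficient in front of each $P^\kappa$ with $\kappa\in\fI_0$ is nonzero, as is the coefficient $-2$ of $P^\infty$ in the first identity and the coefficient $2$ of $P^0$ in the second. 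Because the $P^\kappa$ are mutually orthogonal projectors summing to the identity, a vector is annihilated by the first (respectively second) matrix exactly when it lies in $\Ran P^0$ (respectively $\Ran P^\infty$). Hence $\ker\bigl(S(\sk)-\1_{n\times n}\bigr)=\Ran P^0$ and $\ker\bigl(S(\sk)+\1_{n\times n}\bigr)=\Ran P^\infty$, which combined with the first paragraph yields $\ker A=\Ran P^0$ and $\ker B=\Ran P^\infty$, i.e. \eqref{pp}. The supplementary relations $AP^0=0$ and $BP^\infty=0$ are then immediate, since the range of $P^0$ (respectively $P^\infty$) is contained in $\ker A$ (respectively $\ker B$).

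The one step requiring care, and the main obstacle, is the legitimacy of passing to $(A(\sk),B(\sk))$: this representation is valid only when $S(\sk)$ is analytic and invertible, i.e. for $\sk\notin\ii\fI_0\cup-\ii\fI_0$ as read off from \eqref{sdet}, exactly the range of $\sk$ invoked in Proposition \ref{prop:starev}. One should also confirm that no eigenvalue $(\sk+\ii\kappa)/(\sk-\ii\kappa)$ with $\kappa\in\fI_0$ coincides with $+1$ or $-1$; this fails only at $\sk=0$ or $\sk=\infty$, so any fixed real $\sk>0$ works, and the resulting identities are manifestly independent of that choice, as they must be.
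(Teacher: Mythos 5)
Your proof is correct and takes essentially the same route as the paper: both replace $(A,B)$ by the equivalent pair $(A(\sk),B(\sk))$ from the proof of Proposition \ref{prop:starev}, so that $\ker A=\ker(S(\sk)-\1)$ and $\ker B=\ker(S(\sk)+\1)$, and then identify these kernels with $\Ran P^0$ and $\Ran P^\infty$ via the spectral decomposition \eqref{star1}. The paper states this in a single line; your explicit computation of the coefficients $-2$, $2$ and $2\ii\kappa/(\sk-\ii\kappa)$, $2\sk/(\sk-\ii\kappa)$ simply fills in the step the paper leaves implicit.
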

\begin{proof}
By our previous discussion $\ker A=\ker A(\sk)=\ker (S(\sk)-\1)$ and 
$\ker B=\ker B(\sk)=\ker (S(\sk)+\1)$ for $\sk\notin\ii\fI\cup-\ii\fI$.
\end{proof}
The known relation $\ker A \perp \ker B=0$, see Lemma 3.4 in \cite{KS8}, is of course compatible 
with this result. Consider any piecewise constant function $\psi$, 
that is a function which is constant on each edge. 
Then $\psi$ is completely determined by its boundary values $\underline{\psi}$. Moreover, if 
$\underline{\psi}\in\ker A$, then $\psi$ satisfies the boundary condition \eqref{bdycond}.

\subsection{Zero as an eigenvalue}\label{subsec:zero}
In this subsection we establish necessary and sufficient conditions for $-\Delta_{A,B}$ to 
have $0$ as an eigenvalue. Let $\psi\neq 0$ be such a square integrable eigenfunction, $-\Delta_{A,B}\psi=0$.
Then necessarily $\psi_e(x)=0$ for all $e\in\cE$ while $\psi_i(x)=\gamma_i+\delta_i x$ for $i\in\cI$ and 
some $\gamma_i,\delta_i\in\C$, not all vanishing.
So with $\underline{\gamma}=\{\gamma_i\}_{i\in\cI},\underline{\delta}=\{\gamma_i\}_{i\in\cI}\in\C^{|\cI|}$, 
viewed as column vectors and with the notation \eqref{bdvpsi}
$$
\underline{\psi} = \begin{pmatrix} \underline{0}\\
                                   \underline{\gamma} \\
                                   \underline{\gamma}+T_{\underline{a}}\underline{\delta} \\
                                     \end{pmatrix}
                  =U_{\underline{a}}\begin{pmatrix}\underline{\gamma}\\\underline{\delta}\end{pmatrix},\quad
\underline{\psi}' = \begin{pmatrix} \;\;\underline{0}\\
                                   \;\;\underline{\delta}\\
                                   -\underline{\delta}\\
                                       \end{pmatrix}=
V\begin{pmatrix}\underline{\gamma}\\\underline{\delta}\end{pmatrix}
$$
with the diagonal matrix $T_{\underline{a}}=\diag\{a_i\}_{i\in\cI}$ and 
$$
U_{\underline{a}}=\begin{pmatrix}0&0\\\1&0\\\1&T_{\underline{a}}\end{pmatrix},\qquad 
V=\begin{pmatrix}0&\;0\\0&\;\1\\0&-\1\end{pmatrix}.
$$
So the boundary condition \eqref{bdycondrewr} takes the form 
\begin{equation}\label{bdyzeroev}
(A,B)\begin{pmatrix}\underline{\psi}\\\underline{\psi}'\end{pmatrix}=(AU_{\underline{a}},BV)
\begin{pmatrix}\underline{\gamma}\\\underline{\delta}\end{pmatrix}=0.
\end{equation}

Using the condition \eqref{bdycondrewr1} and viewing 
$$
\begin{pmatrix}U_{\underline{a}}\\V\end{pmatrix}
$$
as a $2(|\cE|+2|\cI|)\times 2|\cI|$ matrix 
we arrive at the 
\begin{proposition}\label{prop:zeroev}
The following relation is valid  
\begin{equation}\label{dimev0}
\dim \Ker \Delta_{A,B}= \dim \left(\Ker (A,B)\cap 
\Ran\begin{pmatrix}U_{\underline{a}}\\V \end{pmatrix}\right)=\dim\Ker (AU_{\underline{a}},BV).
\end{equation}
\end{proposition}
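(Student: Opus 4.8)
The plan is to exhibit two explicit linear isomorphisms that identify all three spaces in \eqref{dimev0}, so that their dimensions automatically coincide. The first identifies $\Ker\Delta_{A,B}$ with the parameter kernel $\Ker(AU_{\underline{a}}+BV)\subseteq\C^{2|\cI|}$ (this is the matrix written $(AU_{\underline{a}},BV)$ in \eqref{bdyzeroev}, which acts on $\begin{pmatrix}\underline{\gamma}\\\underline{\delta}\end{pmatrix}$ as the sum $AU_{\underline{a}}+BV$); the second transports this parameter kernel, via the stacked matrix $\begin{pmatrix}U_{\underline{a}}\\V\end{pmatrix}$, onto the middle space $\Ker(A,B)\cap\Ran\begin{pmatrix}U_{\underline{a}}\\V\end{pmatrix}$.

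First I would prove the last equality. If $-\Delta_{A,B}\psi=0$ then $-\psi_j''=0$ on every edge, so each $\psi_j$ is affine. On an external half-line an affine function is square integrable only if it vanishes identically, hence $\psi_e\equiv 0$ for all $e\in\cE$; on each internal edge $\psi_i(x)=\gamma_i+\delta_i x$ automatically lies in $L^2([0,a_i])$. Consequently such a $\psi$ is determined without loss by the data $(\underline{\gamma},\underline{\delta})\in\C^{2|\cI|}$, and $\psi\mapsto(\underline{\gamma},\underline{\delta})$ is linear and injective. As recorded above, its boundary values are $\underline{\psi}=U_{\underline{a}}\begin{pmatrix}\underline{\gamma}\\\underline{\delta}\end{pmatrix}$ and $\underline{\psi}'=V\begin{pmatrix}\underline{\gamma}\\\underline{\delta}\end{pmatrix}$, so $\psi$ additionally satisfies the boundary condition \eqref{bdycond} exactly when \eqref{bdyzeroev} holds, i.e. when $\begin{pmatrix}\underline{\gamma}\\\underline{\delta}\end{pmatrix}\in\Ker(AU_{\underline{a}}+BV)$. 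Thus the correspondence restricts to a linear isomorphism $\Ker\Delta_{A,B}\cong\Ker(AU_{\underline{a}}+BV)$, which gives the last equality of \eqref{dimev0}.

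Next I would establish the middle equality by studying $\begin{pmatrix}U_{\underline{a}}\\V\end{pmatrix}\colon\C^{2|\cI|}\to\C^{2(|\cE|+2|\cI|)}$. The decisive observation is that this map is injective: if $U_{\underline{a}}\begin{pmatrix}\underline{\gamma}\\\underline{\delta}\end{pmatrix}=0$ and $V\begin{pmatrix}\underline{\gamma}\\\underline{\delta}\end{pmatrix}=0$, then the block form of $V$ forces $\underline{\delta}=0$, and the middle block of $U_{\underline{a}}$ then forces $\underline{\gamma}=0$. Hence $\begin{pmatrix}U_{\underline{a}}\\V\end{pmatrix}$ is an isomorphism of $\C^{2|\cI|}$ onto its range $\Ran\begin{pmatrix}U_{\underline{a}}\\V\end{pmatrix}$. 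Under it, a vector $\begin{pmatrix}\underline{\gamma}\\\underline{\delta}\end{pmatrix}$ lies in $\Ker(AU_{\underline{a}}+BV)$ — equivalently $(A,B)\begin{pmatrix}U_{\underline{a}}\\V\end{pmatrix}\begin{pmatrix}\underline{\gamma}\\\underline{\delta}\end{pmatrix}=0$ — precisely when its image lies in $\Ker(A,B)$, hence in $\Ker(A,B)\cap\Ran\begin{pmatrix}U_{\underline{a}}\\V\end{pmatrix}$. So $\begin{pmatrix}U_{\underline{a}}\\V\end{pmatrix}$ restricts to an isomorphism $\Ker(AU_{\underline{a}}+BV)\cong\Ker(A,B)\cap\Ran\begin{pmatrix}U_{\underline{a}}\\V\end{pmatrix}$, and chaining the two isomorphisms delivers all equalities in \eqref{dimev0}.

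There is no deep obstacle here; the argument is bookkeeping of boundary data. The two points that must be handled with care are the use of square integrability to kill the external components (which is what reduces the parameter count to $2|\cI|$ and explains why $\cE$ drops out of the final answer), and the injectivity of $\begin{pmatrix}U_{\underline{a}}\\V\end{pmatrix}$, which guarantees that translating between the $(\underline{\gamma},\underline{\delta})$-description and the boundary-data description preserves dimension. Both reduce to the elementary block computations indicated above.
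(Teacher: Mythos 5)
Your proof is correct and follows essentially the same route as the paper: the paper's argument is exactly the reduction of a zero eigenfunction to the data $(\underline{\gamma},\underline{\delta})\in\C^{2|\cI|}$ (with external components killed by square integrability) followed by the rewriting \eqref{bdyzeroev} of the boundary condition, and the injectivity of $\begin{pmatrix}U_{\underline{a}}\\V\end{pmatrix}$ that you verify is the same fact the paper records immediately after the proposition via $\Ker\begin{pmatrix}U_{\underline{a}}\\V\end{pmatrix}=0$. You have only made explicit the two linear isomorphisms that the paper leaves implicit, which is a faithful elaboration rather than a different method.
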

Observe that $\dim \Ker (A,B)=|\cE|+2|\cI|$ while 
$\dim\Ran\begin{pmatrix}U_{\underline{a}}\\V \end{pmatrix}=2|\cI|$ since 
$\Ker\begin{pmatrix}U_{\underline{a}}\\V \end{pmatrix}=0$.
Generically subspaces of these dimensions have trivial intersection in a space of dimension equal 
to $2(|\cE|+2|\cI|)$, that is they are transversal . This property 
remains valid even if one space, namely $\Ker (A,B)$, is required to be maximal isotropic.
As a consequence, for generic boundary conditions $(A,B)$ we conclude that 
$-\Delta_{A,B}$ does not have zero as an eigenvalue.
\begin{example}
Consider the interval $[0,a]$ with Robin boundary conditions at both ends
\begin{equation}\label{intervalrobin}
\cos\tau_0\psi(0)+\sin\tau_0\psi^\prime(0)=0,\qquad\cos\tau_1\psi(a)-\sin\tau_1\psi^\prime(a)=0.
\end{equation}
Then 
$$
(AU_{\underline{a}},BV)=\begin{pmatrix}\cos\tau_0&\sin\tau_0\\\cos\tau_1&a\cos\tau_1-\sin\tau_1 \end{pmatrix}
$$
has non-trivial kernel if and only if $a-\tan\tau_0-\tan\tau_1=0$ or $\cos\tau_0=\cos\tau_1=0$
( Neumann boundary conditions).
\end{example}

\vspace{0.3cm}
\subsection{Walk representation of the amplitudes $S(\sk),\alpha(\sk)$ and $\beta(\sk)$}
In this section we will provide an expansion of the amplitudes 
$S(\sk),\alpha(\sk)$ and $\beta(\sk)$ in terms of walks on the graph. We will use this result to give the 
proof of Theorem \ref{theo:mfinite} in Appendix \ref{app:KGsupport}. For the scattering matrix 
$S(\sk)$ such an expansion was already established in \cite{KS8}. The extension 
to $\alpha(\sk)$ and $\beta(\sk)$ is similar and goes as follows. For the 
convenience of the reader we recall those parts of the  notion of a walk as 
introduced in \cite{KS8} and extended in \cite{KS9} and which are relevant for our 
purpose.
A nontrivial walk $\bw$
on the graph $\cG$ from $j^\prime\in\cE\cup\cI$ to $j\in\cE\cup\cI$ is an
ordered sequence formed out of edges and vertices
\begin{equation}\label{walk:def}
\{j,v_0,j_1,v_1,\ldots,j_n,v_n,j^\prime\}
\end{equation}
such that
\begin{itemize}
\item[(i)]{$j_1,\ldots,j_n\in\cI$;}
\item[(ii)]{the vertices $v_0\in V$ and $v_n\in V$ satisfy $v_0\in\partial(j)$,
$v_0\in\partial(j_1)$, $v_n\in\partial(j^\prime)$, and
$v_n\in\partial(j_n)$;}
\item[(iii)]{for any $k\in\{1,\ldots,n-1\}$ the vertex $v_k\in V$ satisfies 
$v_k\in\partial(j_k)$ and $v_k\in\partial(j_{k+1})$;}
\item[(iv)]{$v_k=v_{k+1}$ for some $k\in\{0,\ldots,n-1\}$ if and only if $j_k$ is a 
tadpole.}
\end{itemize}
When $j,j^\prime\in\cE$ this definition is equivalent to that given in
\cite{KS8}.

The number $n$ is the \emph{combinatorial length} $|\bw|_{\mathrm{comb}}$
and the number
\begin{equation*}
|\bw|=\sum_{k=1}^n a_{j_k} >0
\end{equation*}
is the \emph{metric length} of the walk $\bw$.

A \emph{trivial} walk on the graph $\cG$ from $j^\prime\in\cE\cup\cI$ to
$j\in\cE\cup\cI$ is a triple $\{j,v,j^\prime\}$ such that
$v\in\partial(j)$ and $v\in\partial(j^\prime)$. Otherwise the walk is
called nontrivial. In particular, if $\partial(j)=\{v_0,v_1\}$, then
$\{j,v_0,j\}$ and $\{j,v_1,j\}$ are trivial walks, whereas
$\{j,v_0,j,v_1,j\}$ and $\{j,v_1,j,v_0,j\}$ are nontrivial walks of
combinatorial length $1$ and of metric length $a_j$. Both the combinatorial
and metric length of a trivial walk are zero.

We will say that the walk \eqref{walk:def} enters the final edge $j$ through the final 
vertex $v_0=v_0(\bw)$ and leaves the initial edge $j^\prime$ through the initial vertex 
$v_n=v_n(\bw)$. A trivial walk $\{j,v,j^\prime\}$ enters $j$ and leaves 
$j^\prime$ through the same vertex $v$.
Assume that the edges $j,j^\prime\in\cE\cup\cI$ are not tadpoles.
The following distance relation holds for a point $p$ in $\cG$ with local coordinate 
$(j,x)$ and the final  and initial vertices of a walk of the form \eqref{walk:def}
\begin{equation}\label{dini}
d(p, v_0(\bw)):= \begin{cases} x & \text{if}\quad p\cong (j,x),\quad 
v_0(\bw)=\partial^-(j),\\ a_j - x & \text{if}\quad p\cong (j,x),\quad 
v_0(\bw)=\partial^+(j),
\end{cases}
\end{equation}
and similarly
\begin{equation}\label{dtermi}
d(q, v_n(\bw)) := \begin{cases} x^\prime & \text{if}\quad q
\cong (j^\prime,x^\prime),\quad  v_n(\bw)=\partial^-(j^\prime),\\ 
a_{j^\prime} - x^\prime & \text{if}\quad q
\cong (j^\prime,x^\prime),\quad v_n(\bw)=\partial^+(j^\prime).
\end{cases}
\end{equation}
The \emph{score} $\underline{n}(\bw)$ of a walk $\bw$ is the set
$\{n_i(\bw)\}_{i\in\cI}$ with $n_i(\bw)\geq 0$ being the number of times
the walk $\bw$ traverses the internal edge $i\in\cI$ such that 
\begin{equation*}
|\bw| = \sum_{i\in\cI} a_i n_i(\bw)
\end{equation*}
holds. Let $\cW_{j,j^\prime}$, $j,j^\prime\in\cE\cup\cI$ be the (infinite if $\cI
\neq \emptyset$) set of all walks $\bw$ on $\cG$ from $j^\prime$ to $j$. 
Obviously we have the
\begin{lemma}\label{lem:distance}
If $p\cong (j,x)$ and $q\cong (j^\prime,x^\prime)$ and if the edges 
$j,j^\prime$ are not tadpoles, then the distance between 
$p$ and $q$ satisfies 
\begin{equation}\label{distance}
d(p,q)\le
\inf_{\bw\in \cW_{j,j^\prime}}(d(p, v_0(\bw)) +|\bw|+
d(q, v_n(\bw)))
\end{equation}
with equality if $ j\neq j^\prime$.
\end{lemma}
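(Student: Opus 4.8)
The plan is to prove the inequality \eqref{distance} and the equality (for $j\neq j'$) separately. The inequality is the easy half: for each walk $\bw\in\cW_{j,j'}$ I would exhibit an explicit continuous path in $\cG$ from $p$ to $q$ whose length is exactly the right-hand quantity, so that $d(p,q)$, being an infimum over all paths, is bounded above by it. Writing $\bw=\{j,v_0,j_1,v_1,\ldots,j_n,v_n,j'\}$, I concatenate three pieces: the segment of the edge $j$ from $p\cong(j,x)$ to the vertex $v_0$, which by \eqref{dini} has length $d(p,v_0(\bw))$ and is unambiguously defined precisely because $j$ is not a tadpole; then the internal edges $j_1,\ldots,j_n$ traversed in full, contributing $\sum_{k=1}^n a_{j_k}=|\bw|$; and finally the segment of $j'$ from $v_n$ to $q\cong(j',x')$ of length $d(q,v_n(\bw))$ by \eqref{dtermi}. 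The adjacencies $v_0\in\partial(j)\cap\partial(j_1)$, $v_k\in\partial(j_k)\cap\partial(j_{k+1})$, and $v_n\in\partial(j_n)\cap\partial(j')$ from conditions (ii)--(iii) are exactly what is needed for the pieces to join continuously, so this is an admissible path and $d(p,q)\le d(p,v_0(\bw))+|\bw|+d(q,v_n(\bw))$. Taking the infimum over $\bw$ gives \eqref{distance}.

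For the reverse inequality when $j\neq j'$, I would take an \emph{arbitrary} path $\gamma$ from $p$ to $q$ and extract a walk from it. Since $p$ lies in the interior of $j$, the path must first leave $j$ through one of its two endpoints (distinct, as $j$ is not a tadpole); let $v_0\in\partial(j)$ be that first vertex. Before reaching $v_0$ the path is confined to the open edge $j$, so its initial portion has length at least $d(p,v_0)$ as in \eqref{dini}. Symmetrically, because $q$ lies in the interior of the different edge $j'$, the path reaches $q$ only after entering $j'$ through an endpoint; letting $v_n\in\partial(j')$ be the last vertex visited before $q$, the terminal portion has length at least $d(q,v_n)$ by \eqref{dtermi}. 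The middle portion joins $v_0$ to $v_n$, hence has length at least the graph distance $d(v_0,v_n)$; and since external edges meet the graph at a single vertex, any vertex-to-vertex connection may be replaced by one running through internal edges only, realized by a walk in $\cW_{j,j'}$ with these endpoints, so $d(v_0,v_n)$ equals the minimal $|\bw|$ among such walks. Thus $\mathrm{length}(\gamma)\ge d(p,v_0)+d(v_0,v_n)+d(q,v_n)\ge \inf_{\bw}\bigl(d(p,v_0(\bw))+|\bw|+d(q,v_n(\bw))\bigr)$, and taking the infimum over $\gamma$ yields the matching lower bound, hence equality.

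The step I expect to be the main obstacle is this lower bound, namely making rigorous that the middle portion of a possibly wild path $\gamma$ contributes at least the metric length of an \emph{honest} walk. One must argue that excursions into external edges and any partial back-and-forth motion along edges can only increase the length, while the surviving vertex-to-vertex route through internal edges is a genuine walk whose edge sequence satisfies axioms (i)--(iv)---in particular that tadpole traversals are bookkept correctly by (iv). The hypothesis $j\neq j'$ is essential here, and it also explains why equality fails otherwise: when $j=j'$ the direct along-edge segment from $p$ to $q$ has length $|x-x'|$ and need not pass through any vertex, so it is captured by no walk and can be strictly shorter than the right-hand side of \eqref{distance}, leaving only the inequality.
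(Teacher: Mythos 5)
Your proof is correct. The paper offers no proof at all---Lemma \ref{lem:distance} is introduced with ``Obviously we have the''---and your two-step argument (concatenating the edge segment from $p$ to $v_0(\bw)$, the full traversals of the internal edges of the walk, and the segment from $v_n(\bw)$ to $q$ for the upper bound, then extracting the first and last vertices of an arbitrary path and reducing its middle portion to an internal-edge walk for the matching lower bound when $j\neq j^\prime$) is precisely the standard reasoning the paper leaves implicit, including the correct explanation of why only the inequality survives when $j=j^\prime$.
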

Observe that with this notation $d(p, v_0(\bw))\le a_j$ and $d(q, v_n(\bw))\le a_{j^\prime}$.

Using relation (3.33) in \cite{KS8}, relation \eqref{defs} may be rewritten as 
\begin{equation}\label{alternative}
\begin{pmatrix}S(\sk)\\\alpha(\sk)\\
\e^{-\ii\sk\underline{a}}\beta(\sk)\end{pmatrix}=
(\1- \mathfrak{S}(\sk)T(\sk))^{-1}\mathfrak{S}(\sk)\begin{pmatrix} \1_{n\times n} \\
0_{m\times n} \\ 0_{m\times n}
\end{pmatrix}.
\end{equation}
For the sake of clarity we have indicated the type of matrices with $n=|\cE|,m=|\cI|$.
Also $\mathfrak{S}(\sk)=\mathfrak{S}(\sk;A,B)$, see \eqref{mathfracs}, and 
\begin{align*}
T(\sk)=T(\sk,\underline{a})&=\begin{pmatrix}0&0&0\\
                                  0&0&\e^{\ii\sk\underline{a}}\\
               0&\e^{\ii\sk\underline{a}}&0
               \end{pmatrix}.
\end{align*}
So $S(\sk)$ alone is obtained as
\begin{equation}\label{srep}  
S(\sk)=\begin{pmatrix} \1_{n\times n} & 0_{n\times m} & 0_{n\times m} \end{pmatrix}
(\1- \mathfrak{S}(\sk)T(\sk))^{-1}\mathfrak{S}(\sk) 
\begin{pmatrix} \1_{n\times n} \\
0_{m\times n} \\ 0_{m\times n}
\end{pmatrix}.
\end{equation}
Alternative ways of obtaining $S(\sk)$ out of the single vertex scattering matrices $\mathfrak{S}(\sk)$ 
and the metric structure of $\cG$ are given in \cite{Harmer,KSS,KS4, Ragoucy}.
Analogous relations for the amplitudes $\alpha(\sk)$ and $\beta(\sk)$ are 
\begin{align}\label{abrep} 
\alpha(\sk)&=\begin{pmatrix} 
0_{m\times n} & 1_{m\times m} & 0_{m\times m} \end{pmatrix}
(\1- \mathfrak{S}(\sk)T(\sk))^{-1}\mathfrak{S}(\sk) 
\begin{pmatrix} \1_{n\times n} \\
0_{m\times n} \\ 0_{m\times n}
\end{pmatrix}\\\nonumber
\beta(\sk)&=\begin{pmatrix} 
0_{m\times n} & 0_{m\times m} & \e^{\ii\sk\underline{a}} \end{pmatrix}
(\1- \mathfrak{S}(\sk)T(\sk))^{-1}\mathfrak{S}(\sk) 
\begin{pmatrix} \1_{n\times n} \\
0_{m\times n} \\ 0_{m\times n}
\end{pmatrix}.
\end{align}
As a consequence of relation \eqref{srep} the expansion
\begin{equation}\label{walks}
S(\sk)_{ee^\prime}=\sum_{\bw\in \cW_{ee^\prime}}S(\bw;\sk)_{ee^\prime}
\e^{\ii\sk|\bw|}
\end{equation}
with
\begin{equation}\label{swalks}
S(\bw;\sk)_{ee^\prime}=\prod_{l=1}^k S(v_l;\sk)_{i_{l}i_{l-1}}
\end{equation}
is valid.
 $S(v;\sk)$ is the single vertex scattering matrix obtained from the boundary 
conditions at the vertex $v$. Also this matrix is indexed by those edges having 
$v$ in their boundary, that is by the edges in the star graph $\cS(v)$. For this we have to assume 
that there are no {\it tadpoles},
that is edges whose endpoints are the same vertex.
For the details on the expansion \eqref{walks}, see \cite{KS8}. 
But then by the same arguments we also 
obtain similar expansions for the amplitudes $\alpha(\sk)$ and $\beta(\sk)$.
Indeed, for $i\in\cI$ and $e\in\cE$ let $\cW_{ie}^\pm$ be the set of walks in $\cW_{ie}$ 
such that $v_0(\bw)=\partial^\pm(i)$. $\cW_{ie}^-$ and $\cW_{ie}^+$ are disjoint and 
$\cW_{ie}=\cW_{ie}^-\cup\cW_{ie}^+$.
Then \eqref{abrep} implies
\begin{align}\label{absumrep} 
\alpha(\sk)_{ie}&=\sum_{\bw\in \cW_{ie}^-}S(\bw;\sk)_{ie}
\e^{\ii\sk|\bw|}\\\nonumber
\beta(\sk)_{ie}&=\sum_{\bw\in \cW_{ie}^+}S(\bw;\sk)_{ie}
\e^{\ii\sk(a_i+|\bw|)}
\end{align}
with otherwise the same notation as in \eqref{swalks}.

\section{Classical solutions of the Klein-Gordon and the wave equation}\label{sec:clKG}


\subsection{Existence and uniqueness of solutions}\label{subsec:uniq}

Fix boundary conditions $(A,B)$ and introduce the D'Alembert wave operator 
$$
\Box_{A,B}=\frac{\partial^2}{\partial t^2}-\Delta_{A,B}.
$$
For given mass $m>0$, by definition the Klein-Gordon operator is $\Box_{A,B}+m^2$, which we will discuss 
first.
\subsubsection{The Klein-Gordon equation}
Our first discussion for the construction of solutions is close to the familiar one in the 
relativistic case.
Namely, assume $m>0$ to be such that $-\Delta_{A,B}+m^2> 0$. Then actually there is $c>0$ such that 
$-\Delta_{A,B}+m^2>c^2\1$ holds. Indeed, with $\varepsilon_{A,B}=\inf \spec -\Delta_{A,B}\le 0$ the relation 
$\varepsilon_{A,B}+m^2>0$ is valid and so the choice $c=1/2(\varepsilon_{A,B}+m^2)$ does the job.
We introduce the self-adjoint energy operator 
\begin{equation}\label{energyop}
h=h_{A,B,m^2}=\sqrt{-\Delta_{A,B}+m^2}.
\end{equation}
By what has 
just been said
$h>c\1$, so $h$ has a bounded inverse, $0<h^{-1}<c^{-1}\1$.  
For any $f\in L^2(\cG)$ define 
\begin{equation}\label{relscalarpr3}
f^{(\pm)}(p,t)=(\e^{\mp\ii h\,t}f)(p)
\end{equation}
which satisfy 
\begin{equation}\label{relscalarpr31}
\pm\ii\frac{\partial}{\partial t}f^{(\pm)}(p,t)=hf^{(\pm)}(p,t)
\end{equation}
provided $f\in \cD(h)$. 
Moreover both $f^{(\pm)}(p,t)$ satisfy the Klein-Gordon equation
\begin{equation}\label{relscalarpr4}
\left(\Box_{A,B}+m^2\right)f^{(\pm)}(p,t)=0
\end{equation}
provided the stronger initial condition $f\in \cD(-\Delta_{A,B})$ is valid. 
Indeed, since $\cD(-\Delta_{A,B})$ is left 
invariant under $\exp(\mp\ii th)$ ($h$ and $-\Delta_{A,B}$ trivially commute), 
the functions $f^{(\pm)}(p,t)$ are in $\cD(-\Delta_{A,B})$ for all times. 
Due to the choice of the sign in \eqref{relscalarpr3}, $f^{(+)}(p,t)$ is called a positive energy 
solution and $f^{(-)}(p,t)$ a negative energy solution of the Klein-Gordon equation with initial condition 
$f^{(\pm)}(p,t=0)=f(p)$.

For any $g\in L^2(\cG)$, let $g^{(\pm)}(p,t)$ be defined similarly to $f^{(\pm)}(p,t)$. 
If in addition $g\in\cD(-\Delta_{A,B})$, an easy calculation shows that 
\begin{align}\label{relscalarpr5}
\pm \ii(f^{(\pm)}(\cdot,t),\mathop{\partial_t}^{\leftrightarrow}g^{(\pm)}(\cdot,t))_\cG&
=\langle f,g\rangle_\cG\\
(f^{(\pm)}(\cdot,t),\mathop{\partial_t}^{\leftrightarrow}g_\mp(\cdot,t))_\cG&=0.
\end{align} 
holds for all $t$. In the standard context for the Klein-Gordon equation in Minkowski space this result  
is well known, see e.g. \cite{Schweber}, sec. 3b. In particular, the last relation is read as an 
orthogonality relation between positive and negative energy solutions.

We can use these observations to solve the initial problem for the 
hyperbolic differential equation defined by the operator $\Box_{A,B}+m^2$ within the 
$L^2$ context. Indeed, for given $f,\dot{f}$ with 
$f\in \cD(-\Delta_{A,B})=\cD(h^2)$ and $\dot{f}\in \cD(h)$
we will provide  a solution $f(p,t)$ to the Klein-Gordon equation satisfying the initial conditions
\begin{equation}\label{initialcond}
f(p,t=0)=f(p),\qquad \partial_tf(p,t=0)=\dot{f}(p).
\end{equation}
Following standard notation, we call the pair $(f,\dot{f})$ \emph{Cauchy data} for the 
Klein-Gordon equation. In fact with the choice
$$
f^{(\pm)}=\frac{1}{2}\left(f\pm\ii h^{-1}\dot{f}\right) \in \cD(-\Delta_{A,B})
$$
the function 
\begin{equation}\label{KGsol}
f(p,t)=(\e^{-\ii h\,t}f^{(+)})(p)+(\e^{\ii h\,t}f^{(-)})(p)
\end{equation}
solves the initial condition \eqref{initialcond} and satisfies the Klein-Gordon equation.
We make the convention to say that $f(p,t)$ is a solution for all times if for all 
$t\;f(\,,t)\in \cD(-\Delta_{A,B})$
holds,  $f(\,,t)$ is twice differentiable w.r.t. $t$ in the strong topology 
in $L^2(\cG)$ and $\partial_tf(\,,t)\in \cD(h)$ and 
finally if $f(p,t)$ satisfies the Klein-Gordon equation. Similarly we speak of a solution for small 
times if these properties only hold when $|t|<\varepsilon$ for some $\varepsilon>0$.
Obviously $f(p,t)$ as given by \eqref{KGsol} is a solution for all times. 

In standard contexts there is the well known uniqueness of solutions of hyperbolic differential 
equations for given Cauchy data. The standard proof uses energy conservation, see e.g. 
\cite{Evans,Taylor, TaylorI}. In the present context we have 
\begin{proposition}\label{prop:cauchyunique}
For given boundary conditions $(A,B)$ let $m>0$ be such that $-\Delta_{A,B}+m^2>0$.
Set $h=\sqrt{-\Delta_{A,B}+m^2}$ and let Cauchy data $(f,\dot{f})$ be given with 
$f\in \cD(-\Delta_{A,B})$ and $\dot{f}\in \cD(h)$. Then the solution 
for small times exists, is unique, therefore extendable to all times and of the form \eqref{KGsol}. 
\end{proposition}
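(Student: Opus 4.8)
The plan is to reduce everything to a single uniqueness statement, since existence for all times is already in hand: the function \eqref{KGsol}, built from $f^{(\pm)}=\tfrac12(f\pm\ii h^{-1}\dot f)\in\cD(-\Delta_{A,B})$, was observed above to solve the Klein-Gordon equation with the prescribed Cauchy data for every $t$. Restricting it to $|t|<\varepsilon$ gives existence for small times. Once I know that a small-time solution with given Cauchy data is unique, the remaining assertions follow at once: the global solution \eqref{KGsol} is in particular a small-time solution with the same Cauchy data, so by uniqueness it coincides with any given small-time solution on their common interval; this simultaneously identifies every small-time solution with \eqref{KGsol} and extends it to all of $\R$.

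For uniqueness I would let $f_1,f_2$ be two small-time solutions sharing the Cauchy data $(f,\dot f)$ and set $u=f_1-f_2$. By linearity $u(\cdot,t)\in\cD(-\Delta_{A,B})=\cD(h^2)$ for all $t$, $u$ is twice strongly differentiable in $L^2(\cG)$, $\partial_t u(\cdot,t)\in\cD(h)$, and, using \eqref{relscalarpr4} together with $h^2=-\Delta_{A,B}+m^2$, the function $u$ solves the homogeneous problem
\begin{equation*}
\partial_t^2 u(\cdot,t)=-h^2 u(\cdot,t),\qquad u(\cdot,0)=0,\quad \partial_t u(\cdot,0)=0 .
\end{equation*}
The goal is to show $u\equiv 0$ for $|t|<\varepsilon$. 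The guiding idea is the classical energy identity: with $E(t)=\|\partial_t u(\cdot,t)\|^2+\|h\,u(\cdot,t)\|^2$ (the $L^2(\cG)$ norms) one formally finds, using self-adjointness of $h$ and $\partial_t^2u=-h^2u$,
\begin{equation*}
\frac{d}{dt}E(t)=2\Re\langle\partial_t u,\partial_t^2 u\rangle+2\Re\langle h\,\partial_t u,h\,u\rangle
=2\Re\langle\partial_t u,(-h^2+h^2)u\rangle=0,
\end{equation*}
so that $E\equiv E(0)=0$ and hence, because $h>c\1$ is invertible, $u\equiv 0$.

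The main obstacle is that this computation is not licensed by the stated regularity: since $h$ is unbounded, differentiating $\|h\,u(\cdot,t)\|^2$ would require $t\mapsto h\,u(\cdot,t)$ to be $C^1$ with derivative $h\,\partial_t u(\cdot,t)$, which does not follow from $u(\cdot,t)\in\cD(h^2)$ and $\partial_t u(\cdot,t)\in\cD(h)$ alone. I would bypass this by a spectral cut-off. Let $P_N$ be the spectral projection of $h$ associated with $[0,N]$ and put $u_N(\cdot,t)=P_N u(\cdot,t)$. Being bounded, $P_N$ commutes with strong $t$-differentiation, and being a function of $h$ it commutes with $h$ and $h^2$; hence $u_N$ solves the same homogeneous problem $\partial_t^2 u_N=-h^2u_N$ with zero Cauchy data, but now $h$ is bounded on $\Ran P_N$. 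On that subspace the energy computation above is fully rigorous (equivalently, one may invoke uniqueness for a linear second-order ODE with bounded coefficient), giving $E_N\equiv 0$ and therefore $u_N\equiv 0$ for every $N$. Since $\spec h\subseteq[c,\infty)$ one has $P_N\to\1$ strongly as $N\to\infty$, whence $u(\cdot,t)=\lim_{N\to\infty}u_N(\cdot,t)=0$ for all $|t|<\varepsilon$. This establishes uniqueness, and with it, as explained above, extendability and the representation \eqref{KGsol}.
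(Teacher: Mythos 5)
Your proof is correct, and its core is the same as the paper's: energy conservation for the difference $u$ of two small-time solutions, i.e.\ constancy of $E(t)=\|\partial_t u(\cdot,t)\|^2+\|h\,u(\cdot,t)\|^2$, followed by the observation that $h>c\1$ forces $u\equiv 0$; existence and the passage from small-time uniqueness to the global statement are handled the same way in both arguments. The difference is in execution, and it is a real one: the paper differentiates $E(g(\cdot,t))$ directly in \eqref{energytindep}, cancelling terms via $\partial_t^2 g=-( -\Delta_{A,B}+m^2)g$ and self-adjointness, without justifying the step $\tfrac{d}{dt}\|h g(\cdot,t)\|^2=\langle h\partial_t g,hg\rangle+\langle hg,h\partial_t g\rangle$. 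As you point out, this requires $t\mapsto h\,g(\cdot,t)$ to be strongly $C^1$ with derivative $h\,\partial_t g(\cdot,t)$, which does not follow from the regularity built into the paper's notion of solution ($g(\cdot,t)\in\cD(h^2)$, $\partial_t g(\cdot,t)\in\cD(h)$, twice strong differentiability), since $h$ is unbounded and does not commute with strong limits of difference quotients. Your spectral cut-off repairs exactly this: $P_N$ commutes with $h$, $h^2$ and with strong $t$-differentiation, $hP_N$ is bounded, so the energy identity is fully rigorous for $u_N=P_Nu$, and $P_N\to\1$ strongly recovers $u=0$. In short, yours is the paper's argument made watertight under the stated hypotheses; what the paper's version buys is brevity, what yours buys is that the formal energy computation is actually licensed.
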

\begin{proof}
For any solution $g(p,t)$ (for small times) we introduce the energy form 
\begin{equation}\label{energy}
0\le E(g(\,,t))=\langle \partial_tg(\,,t),\partial_tg(\,,t)\rangle_\cG
+\langle hg(\,,t),hg(\,,t)\rangle_\cG.
\end{equation}
Since the scalar product $\langle\,,\,\rangle_\cG$ on $L^2(\cG)$ is positive definite and since $h>c\1>0$, 
for given $t$ $E(g(\,,t))=0$ holds if and only if $g(\,,t)=\partial_tg(\,,t)=0$. 
Also $E(g(\,,t))$ is conserved
\begin{align}\label{energytindep}
\frac{d}{dt}E(g(\,,t))&=\langle \partial_t^2g(\,,t),\partial_tg(\,,t)\rangle_\cG+
\langle \partial_tg(\,,t),\partial_t^2g(\,,t)\rangle_\cG\\\nonumber&\quad +
\langle h\partial_tg(\,,t),hg(\,,t)\rangle_\cG+
\langle hg(\,,t),h\partial_tg(\,,t)\rangle_\cG\\\nonumber
&=-\langle(-\Delta_{A,B}+m^2)g(\,,t),\partial_tg(\,,t)\rangle_\cG 
-\langle \partial_tg(\,,t),(-\Delta_{A,B}+m^2)g(\,,t)\rangle_\cG \\\nonumber
&\quad +\langle \partial_tg(\,,t),(-\Delta_{A,B}+m^2)g(\,,t)\rangle_\cG +
\langle(-\Delta_{A,B}+m^2)g(\,,t), \partial_tg(\,,t)\rangle_\cG \\\nonumber
&=0.
\end{align}
We use this as follows. Let $f_1(p,t)$ and $f_2(p,t)$ be two solutions for small times 
for the same Cauchy data $(f,\dot{f})$ and set $g=f_1-f_2$. By assumption and linearity $g(p,t)$ is also 
a solution for small times.
Moreover $g$ has vanishing Cauchy data, $g(\,,t=0)=\partial_tg(\,,t=0)=0$, which implies 
$E(g(\,,t=0))=0$. But this in turn implies $E(g(\,,t))=0$ for all small $t$ by \eqref{energytindep} 
and therefore $g(\,,t)=\partial_tg(\,,t)=0$ for all small $t$.  
\end{proof}
Concerning the existence of solutions for given initial data the positivity condition
$-\Delta_{A,B}+m^2>0$ may actually be dropped at the price of stronger domain conditions. 
To see this, we use operator calculus in combination with 
the spectral theorem to rewrite the solution \eqref{KGsol} to the Klein-Gordon equation as
\begin{equation}\label{KGsolnew} 
f(\cdot,t)=\cos ht\; f+\frac{\sin ht}{h}\;\dot{f} 
\end{equation}
where both $\cos ht$ and $\sin ht/h$ are bounded self-adjoint operators for all 
real $t$. The solutions at different times $s$ and $t$ are then related by 
\begin{equation}\label{KGsolnew1}
f(\cdot,t)=\frac{\sin h(t-s)}{h}\mathop{\partial_s}^{\leftrightarrow}f(\cdot,s)
\end{equation}
and we observe that this last relation indeed makes sense without the positivity condition 
$-\Delta_{A,B}+m^2> 0$. More precisely, for any boundary condition 
$(A,B)$ and mass $m>0$ introduce the Klein-Gordon kernel
\begin{equation}\label{KGsolnew2}
G_{A,B,m^2}(t)=\frac{\sin\sqrt{-\Delta_{A,B}+m^2}\,t}{\sqrt{-\Delta_{A,B}+m^2}}
\end{equation}
which is well defined by operator calculus. In fact, for fixed $t$ and $m\ge 0$ the functions
\begin{equation}\label{fanal}
z\quad \mapsto\quad \frac{\sin\sqrt{z+m^2}\,t}{\sqrt{z+m^2}},
\quad z\quad \mapsto\quad \cos\sqrt{z+m^2}\,t
\end{equation}
are entire in $z\in\C$ and bounded and real on the real axis. So both $G_{A,B,m^2}(t)$ and 
$\partial_tG_{A,B,m^2}(t)$ are bounded self-adjoint operators for all $t$ and all $m\ge 0$. 
In order to avoid extra superfluous discussion for the case $m=0$ we also make the convention 
\begin{equation}\label{mconv}
\frac{\sin\sqrt{\sk^2+m^2}\,t}{\sqrt{\sk^2+m^2}}\Big|_{m=0}=\frac{\sin \sk t}{\sk}.
\end{equation}
To sum up, 
\begin{equation}\label{KGsolnew3}
f(\cdot,t)=\partial_tG_{A,B,m^2}(t)\;f+G_{A,B,m^2}(t)\;\dot{f} 
\end{equation}
is well defined for all $t$. It satisfies the Klein-Gordon equation and solves the initial problem 
if both $f$ and $\dot{f}$ are in $\cD(-\Delta_{A,B})$. Also \eqref{KGsolnew3}extends to 
\begin{equation}\label{KGsolnew4}
f(\cdot,t)=G_{A,B,m^2}(t-s)\mathop{\partial_s}^{\leftrightarrow}f(s), 
\end{equation}
valid for all $t$ and $s$. It generalizes \eqref{KGsolnew1}. So far, we have not been able to prove 
uniqueness of the solution in this general case, namely when $-\Delta_{A,B}+m^2$ is not necessarily 
a positive operator.


\subsubsection{The wave equation}
We turn to a discussion of the wave operator $\Box_{A,B}$. Consider any boundary condition $(A,B)$.
By the discussion in the previous subsection
\begin{equation}\label{waveeq}
f(\cdot,t)=\partial_tG_{A,B,m^2=0}(t)\;f+G_{A,B,m^2=0}(t)\;\dot{f} 
\end{equation}
is a solution of the wave equation $\Box_{A,B}f(p,t)=0$ for given Cauchy data 
$f,\dot{f}\in\cD(-\Delta_{A,B})$. Concerning uniqueness, there is a result analogous to the one for 
the Klein-Gordon equation, see Proposition \ref{prop:cauchyunique}, given as 
\begin{proposition}\label{prop:wavecauchyunique}
Let the boundary conditions $(A,B)$ be such that $-\Delta_{A,B}$ is non-negative and has no zero eigenvalue.
Then the solution \eqref{waveeq} is the unique solution to the wave equation.
\end{proposition}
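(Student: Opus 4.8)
The plan is to run the energy-conservation argument of Proposition \ref{prop:cauchyunique} with $m=0$, paying attention to the single place where strict positivity of the energy operator was used. Since by hypothesis $-\Delta_{A,B}\ge 0$, the operator $h=\sqrt{-\Delta_{A,B}}$ is a well-defined non-negative self-adjoint operator, and for any solution $g(\cdot,t)$ of the wave equation (which by our convention satisfies $g(\cdot,t)\in\cD(-\Delta_{A,B})\subseteq\cD(h)$ for all $t$) the energy form
\[
E(g(\cdot,t))=\langle\partial_tg(\cdot,t),\partial_tg(\cdot,t)\rangle_\cG+\langle hg(\cdot,t),hg(\cdot,t)\rangle_\cG
\]
is well defined and non-negative, as in \eqref{energy}. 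The computation \eqref{energytindep} showing $\tfrac{d}{dt}E(g(\cdot,t))=0$ used only that $h^2=-\Delta_{A,B}$ is self-adjoint, so it carries over verbatim with $m^2$ replaced by $0$; hence $E(g(\cdot,t))$ is again conserved in $t$.

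First I would take two solutions $f_1,f_2$ of the wave equation with the same Cauchy data $(f,\dot f)$ and set $g=f_1-f_2$, a solution with vanishing Cauchy data, so that $E(g(\cdot,0))=0$ and therefore $E(g(\cdot,t))=0$ for all $t$ by conservation. Because $\langle\,,\,\rangle_\cG$ is positive definite, this forces both $\partial_tg(\cdot,t)=0$ and $hg(\cdot,t)=0$ for every $t$.

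Here lies the one genuine difference from the Klein-Gordon case and the main obstacle. For $m>0$ one had $h>c\1>0$, so $hg=0$ immediately gave $g=0$; for $m=0$ the operator $h$ is merely non-negative and its spectrum reaches down to $0$ (indeed $0\in\spec(-\Delta_{A,B})=[0,\infty)$ by Proposition \ref{prop:spectrum}), so the energy form is only positive semi-definite and $hg=0$ no longer yields $g=0$ directly. The way around this is the elementary identity $\Ker h=\Ker(-\Delta_{A,B})$, valid for any non-negative self-adjoint operator: on $\cD(-\Delta_{A,B})$ one has $\|h\psi\|^2=\langle-\Delta_{A,B}\psi,\psi\rangle$, so $h\psi=0$ is equivalent to $-\Delta_{A,B}\psi=0$. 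Thus $hg(\cdot,t)=0$ means $g(\cdot,t)\in\Ker(-\Delta_{A,B})$, i.e. $g(\cdot,t)$ would be a zero-eigenfunction of $-\Delta_{A,B}$.

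It is precisely at this point that the standing hypothesis that $-\Delta_{A,B}$ has \emph{no zero eigenvalue} — and not merely $-\Delta_{A,B}\ge 0$ — enters: this kernel is then trivial, so $g(\cdot,t)=0$ for all $t$, which is exactly the asserted uniqueness. I expect the only subtlety worth stating carefully to be this replacement of the strict-positivity step by the kernel identity together with the eigenvalue hypothesis; everything else is a transcription of Proposition \ref{prop:cauchyunique} with $m=0$.
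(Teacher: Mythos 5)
Your proof is correct and follows essentially the same route as the paper: the paper's proof also runs the energy-conservation argument of Proposition \ref{prop:cauchyunique} with $h=\sqrt{-\Delta_{A,B}}\ge 0$ and invokes the hypothesis to conclude that $hg=0$ implies $g=0$. Your only addition is to spell out explicitly the kernel identity $\Ker h=\Ker(-\Delta_{A,B})$ behind that step, which the paper leaves implicit.
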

In terms of the boundary conditions $(A,B)$ Proposition \ref{prop:spectrum} gives a sufficient 
condition for the absence of negative eigenvalues, that 
is $n_+(AB^\dagger)=0$, while Proposition \eqref{prop:zeroev} provides necessary and sufficient 
conditions for the absence of zero as an eigenvalue.
\begin{proof}
Again we use the energy function \eqref{energytindep}, now with the choice 
$h=\sqrt{-\Delta_{A,B}}\ge 0$. By assumption $hg=0$ implies $g=0$. So
again for given $t$ $E(g(\,,t))=0$ holds if and only if $g(\,,t)=\partial_tg(\,,t)=0$. The proof now 
proceeds as the one for Proposition \ref{prop:cauchyunique}.
\end{proof}

\subsection{Finite propagation speed}\label{susec:finite speed}

In this subsection we will assume the boundary conditions $(A,B)$ to be such that 
$\Sigma^>_{A,B}$ is empty and that zero is not an eigenvalue of $-\Delta_{A,B}$. 
The aim is to analyze support properties of the integral kernel of the operator $G_{A,B,m^2}(t)$. 
When $m>0$ we set $\omega(\sk)=\sqrt{\sk^2+m^2}$.
The completeness relation \eqref{compl} gives
\begin{align}\label{Gkernel} 
G_{A,B,m^2}(t)(p,q)&=\frac{1}{4\pi}\sum_l\int_{-\infty}^\infty d\sk\; \psi^l(p;\sk)\overline{\psi^l(q;\sk)}
\frac{\sin\omega(\sk)\,t}{\omega(\sk)}\\\nonumber
&\qquad+\sum_{\sk\in\Sigma,1\le \nu\le n(\sk)}\psi^{\sk,\nu}(p)
\overline{\psi^{\sk,\nu}(q)}\frac{\sin\omega(\sk)\,t}{\omega(\sk)}.
\end{align}
By our convention \eqref{mconv}, when $m=0$ this simplifies to 
\begin{align}\label{Gkernelm} 
G_{A,B,m^2=0}(t)(p,q)&=\frac{1}{4\pi}\sum_l\int_{-\infty}^\infty d\sk\; \psi^l(p;\sk)\overline{\psi^l(q;\sk)}
\frac{\sin\sk t}{\sk}\\\nonumber
&\qquad+\sum_{\sk\in\Sigma,1\le \nu\le n(\sk)}\psi^{\sk,\nu}(p)
\overline{\psi^{\sk,\nu}(q)}\frac{\sin\sk\,t}{\sk}.
\end{align}
Observe that due to the self-adjointness of $-\Delta_{A,B}$ and as is obvious from \eqref{Gkernel} and 
\eqref{Gkernelm}, the relation 
\begin{equation}\label{sa}
\overline{G_{A,B,m^2}(t)(p,q)}=G_{A,B,m^2}(t)(q,p)
\end{equation}
holds for all $m\ge 0$. In addition, due to \eqref{complexconj} the relation  
\begin{equation}\label{gbar}
\overline{G_{A,B,m^2}(t)(p,q)}=G_{\bar{A},\bar{B},m^2}(t)(q,p)
\end{equation}
is valid. As a consequence, for real boundary conditions $G_{A,B,m^2}(t)(q,p)$ is real.

We define the space of events to be $\R\times \cG$ and write an event as $(t,p)$. 
By definition two events $(t,p)$ and $(s,q)$ are 
\emph{space like separated} if $d(p,q)>|t-s|$.
\begin{theorem}\label{theo:mfinite} 
Assume one of the following two conditions is satisfied.
\begin{itemize}
\item{$\cG$ is a single vertex graph ($\cI=\emptyset$),}
\item{$\cG$ is arbitrary and $-\Delta_{A,B}$ has no discrete eigenvalues.}
\end{itemize}
Then for any $m\ge 0$ the integral kernel $G_{A,B,m^2}(t-s)(p,q)$ vanishes whenever $(t,p)$ and $(s,q)$
are space like separated and if in addition at least one of the two points $p$ and $q$ is in $\cG_{ext}$.
\end{theorem}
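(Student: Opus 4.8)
The plan is to substitute the explicit improper eigenfunctions \eqref{psidef} and the bound states \eqref{psimunu} into the spectral representations \eqref{Gkernel}--\eqref{Gkernelm} and to read the support in $t$ off the phases carried by the $\sk$-integrand. The analytic backbone is the elementary one-dimensional fact that, for each fixed $t$,
\[
\Pi(\phi;t):=\frac{1}{2\pi}\int_{-\infty}^\infty \e^{\ii\sk\phi}\,\frac{\sin\omega(\sk)\,t}{\omega(\sk)}\,d\sk,\qquad \omega(\sk)=\sqrt{\sk^2+m^2},
\]
is supported in $\{|\phi|\le|t|\}$: it is the (retarded minus advanced) fundamental solution of the $1{+}1$-dimensional Klein--Gordon operator, equal for $m=0$ to a constant multiple of the indicator of $\{|\phi|<|t|\}$. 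It therefore suffices to show that, when at least one of $p,q$ is exterior, $\sum_l\psi^l(p;\sk)\overline{\psi^l(q;\sk)}$ together with the discrete part decomposes into terms $c(\sk)\,\e^{\ii\sk\phi}$ with $|\phi|\ge d(p,q)$ and with coefficients $c(\sk)$ tame enough to be integrated term by term; then $d(p,q)>|t|$ makes every contribution vanish against $\Pi(\phi;t)$.

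Take $p\cong(e,x)$ with $e\in\cE$, so that by \eqref{psidef} the relevant component of $\psi^l(p;\sk)$ is $\e^{-\ii\sk x}\delta_{el}+S(\sk)_{el}\e^{\ii\sk x}$, a clean incoming wave plus an $S$-weighted outgoing wave. Since $\partial(e)$ is a single vertex, the distance from $p$ to the vertex through which any walk enters $e$ is exactly the coordinate $x$; at an interior point $d(p,v_0(\bw))$ would be only one of $x,a_j-x$ and no distinguished incoming wave would be available, which is exactly why the theorem insists on an exterior point. For exterior $q\cong(e',y)$ the expansion over the incident channel $l$ yields the incoming--incoming term $\delta_{ee'}\e^{\ii\sk(y-x)}$, the singly scattered terms $S(\sk)_{ee'}\e^{\ii\sk(x+y)}$ and $\overline{S(\sk)_{e'e}}\e^{-\ii\sk(x+y)}$, and the doubly scattered term $\sum_l S(\sk)_{el}\overline{S(\sk)_{e'l}}\,\e^{\ii\sk(x-y)}=\delta_{ee'}\e^{\ii\sk(x-y)}$, collapsed by the unitarity $S(\sk)S(\sk)^\dagger=\1$. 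Inserting the walk expansion \eqref{walks}--\eqref{swalks} into the singly scattered terms produces phases $x+|\bw|+y\ge d(p,q)$ by Lemma \ref{lem:distance} reinforced by $d(p,q)\ge\pdist(e,e')$, while the collapsed diagonal terms carry phases of modulus $|x-y|=d(p,q)$; so every phase has modulus $\ge d(p,q)$ and $\Pi(\phi;t)$ kills each term as soon as $d(p,q)>|t|$.

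It remains to dispose of the coefficients $c(\sk)$ under the two hypotheses. If $-\Delta_{A,B}$ has no discrete eigenvalues, the discrete sums are absent and, since $\Sigma^>$ is empty and there are no negative eigenvalues, the amplitudes $S,\alpha,\beta$ are analytic and pole-free in the closed upper half-plane; the walk series then converges there and may be integrated against $\Pi$ term by term (the convergence is addressed below). If instead $\cI=\emptyset$ the walks are trivial, $S(\sk)=\mathfrak{S}(\sk)$ and the only phases are $\pm(x+y)$ and $\pm(x-y)$, but $S(\sk)$ may now carry poles at $\sk=\ii\kappa$, $0<\kappa\in\fI_0$, encoding bound states. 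Here I would invoke the rational form \eqref{star1}, split $S(\sk)=(\1-2P^\infty)+\sum_{\kappa\in\fI_0}\frac{2\ii\kappa}{\sk-\ii\kappa}P^\kappa$, treat the $\sk$-independent piece by $\Pi$ as above, and evaluate each pole term by closing the contour in the half-plane where $\e^{\pm\ii\sk(x+y)}\frac{\sin\omega t}{\omega}$ decays, which is legitimate because $x+y\ge d(p,q)>|t|$. The residues at $\sk=\ii\kappa$ and, from the complex-conjugate factor, at $\sk=-\ii\kappa$ reproduce through \eqref{diracprep} and \eqref{psimunu} the $\kappa$-term of the discrete sum in \eqref{Gkernel} up to sign, so that the pole contribution and the bound state cancel exactly.

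The main obstacle is the exterior--interior case. When $q\cong(i,y)$ is interior, the factors $\overline{\alpha(\sk)_{ie}},\overline{\beta(\sk)_{ie}}$ expanded by \eqref{absumrep} again give phases of modulus $\ge d(p,q)$, but the two cross terms $\sum_l S(\sk)_{el}\overline{\alpha(\sk)_{il}}$ and $\sum_l S(\sk)_{el}\overline{\beta(\sk)_{il}}$ no longer collapse to a single Kronecker delta; their double walk series contains pairs whose net phase can fall below $d(p,q)$, and one must show that the unitarity of the individual single-vertex matrices $S(v;\sk)$ forces exactly these pairs to cancel. This resummation, together with the interchange of the infinite walk sum with the $\sk$-integration --- which requires controlling the growth of the products $\prod_l S(v_l;\sk)$ on the shifted contour and the uniform convergence of the series there --- is the technical heart of the proof; the residue bookkeeping of the single-vertex case, where the factors of $2$ from the phases $\pm(x+y)$ and from the normalization $\sqrt{2\kappa}$ in \eqref{psimunu} must be matched, is the remaining delicate point.
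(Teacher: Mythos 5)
Most of your proposal coincides with the paper's own proof: pass to local coordinates, isolate the $1{+}1$-dimensional commutator function $\Delta(t-s,x-y;m)$ via unitarity, expand the remaining amplitudes in walks so that every phase has modulus at least $d(p,q)$, and eliminate the $\sk$-dependent coefficients by deforming the (full-line) $\sk$-integral into the upper half-plane, with the residues crossed at $\sk=\ii\kappa$ exactly compensating the bound-state terms in the single vertex case. Your exterior--exterior computation reproduces \eqref{starcommutator} and the first line of \eqref{gencommutator}, and your splitting $S(\sk)=(\1-2P^\infty)+\sum_{0<\kappa\in\fI_0}\frac{2\ii\kappa}{\sk-\ii\kappa}P^\kappa$ is an equivalent packaging of the paper's residue/compensation argument (though note that in the full-line representation only the poles at $+\ii\kappa$ are crossed; your ``residues at $\sk=-\ii\kappa$ from the complex-conjugate factor'' belong to a half-line representation on which contour shifting is not available).

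The genuine gap is the exterior--interior case, which you explicitly leave open as ``the technical heart''; since the theorem allows exactly one of $p,q$ to be interior, this is not a dispensable corner of the statement. Moreover, the resolution is not the double-walk resummation you anticipate, and no cancellation between pairs of walks forced by unitarity of the $S(v;\sk)$ is needed. The cross terms collapse algebraically before any walk expansion is made: for real $\sk$ unitarity gives $\overline{S(\sk)_{jl}}=S(-\sk)_{lj}$, so by Lemma \ref{lem:hermanal}, i.e. \eqref{-k},
$\sum_l\alpha(\sk)_{il}\overline{S(\sk)_{jl}}=\bigl(\alpha(\sk)S(-\sk)\bigr)_{ij}=\beta(-\sk)_{ij}$ and
$\sum_l\beta(\sk)_{il}\overline{S(\sk)_{jl}}=\bigl(\beta(\sk)S(-\sk)\bigr)_{ij}=\alpha(-\sk)_{ij}$.
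Since $\sin(\omega(\sk)(t-s))/\omega(\sk)$ is even in $\sk$ and the spectral integral \eqref{Gkernel} runs over the whole real line, the substitution $\sk\to-\sk$ folds these two cross terms onto the direct terms $\alpha(\sk)_{ij}\e^{\ii\sk(x+y)}$ and $\beta(\sk)_{ij}\e^{\ii\sk(y-x)}$, merely doubling them; this yields the middle case of \eqref{gencommutator}. After that the argument is the same as for two exterior points: the walk expansions \eqref{absumrep} give phases $x+|\bw|+y$ for $\alpha$ and $(a_i-x)+|\bw|+y$ for $\beta$, both $\ge d(p,q)$ by \eqref{dini}, and each $S(\bw;\sk)$ is analytic and polynomially bounded in the upper half-plane because $\Sigma=\emptyset$ forces $AB^\dagger\le 0$ (Lemma \ref{lem:mathfracs}) and hence $A(v)B(v)^\dagger\le 0$ for every vertex (Lemma \ref{lem:decomp}); the contour shift then kills each summand. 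Two further debts in your write-up are settled in the paper in this same vertex-by-vertex way: the upper-half-plane regularity is never asserted for the global amplitudes $S,\alpha,\beta$ (only for the single-vertex matrices entering the walk coefficients), and the interchange of the infinite walk sum with the $\sk$-integration is justified by appeal to Proposition 5.6 of \cite{KS8}.
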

So far we have not been able to remove the restriction that $p$ or $q$ must lie in in $\cG_{ext}$.
As a particular case we obtain 
\begin{corollary}\label{pdist}
$G_{A,B,m^2}(t)(p,q)$ vanishes for all $p\in I_e\subset\cG_{ext}$ and all $q\in I_{e^\prime}\subset\cG_{ext}$
whenever $t>0$ is smaller than the passage distance, $t<pdist(e,e^\prime)$.
\end{corollary}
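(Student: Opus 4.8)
The plan is to read off Corollary \ref{pdist} directly from Theorem \ref{theo:mfinite}, the only extra ingredient being the lower bound on distances between points lying on distinct external edges that was recorded together with the metric structure in Section \ref{sec:laplace}. Throughout I keep the standing hypotheses of Theorem \ref{theo:mfinite}, namely that either $\cG$ is a single vertex graph or $-\Delta_{A,B}$ has no discrete eigenvalues; these are inherited by the corollary and are what make the theorem applicable.

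First I would specialize the theorem by taking $s=0$, so that the kernel in question is $G_{A,B,m^2}(t-0)(p,q)=G_{A,B,m^2}(t)(p,q)$, associated with the two events $(t,p)$ and $(0,q)$. Next I recall that for $p\in I_e$ and $q\in I_{e^\prime}$ the distance function of $(\cG,\underline{a})$ satisfies $d(p,q)\ge \pdist(e,e^\prime)$. Combining this with the hypothesis $t<\pdist(e,e^\prime)$ gives
\begin{equation*}
|t-0|=t<\pdist(e,e^\prime)\le d(p,q),
\end{equation*}
so that $d(p,q)>|t-0|$ and the two events are space like separated in the sense of the definition preceding Theorem \ref{theo:mfinite}. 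Since both $p$ and $q$ are assumed to lie in $\cG_{ext}$, the supplementary hypothesis of the theorem — that at least one of the two space components be in $\cG_{ext}$ — holds trivially. Theorem \ref{theo:mfinite} therefore applies and yields $G_{A,B,m^2}(t)(p,q)=0$, which is the claim.

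Accordingly the entire analytic content is carried by Theorem \ref{theo:mfinite}, whose proof is deferred to Appendix \ref{app:KGsupport}; that proof, resting on the walk expansions \eqref{walks} and \eqref{absumrep} of the amplitudes together with the analytic structure of the S matrix, is the genuine obstacle, whereas the present deduction is purely formal. I would add one clarifying remark: the corollary has content only when $\partial(e)\neq\partial(e^\prime)$, for otherwise $\pdist(e,e^\prime)=0$ and the range $0<t<\pdist(e,e^\prime)$ is empty. In particular it says nothing for a single vertex graph, where all external edges share the one vertex, so that its real interest lies in the second alternative, an arbitrary graph whose Laplacian has no discrete eigenvalues.
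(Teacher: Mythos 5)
Your proposal is correct and is essentially the paper's own (implicit) argument: the paper states Corollary \ref{pdist} as an immediate particular case of Theorem \ref{theo:mfinite}, exactly via the observation that $d(p,q)\ge\pdist(e,e^\prime)>|t-0|$ makes the events $(t,p)$ and $(0,q)$ space like separated with both points in $\cG_{\mathrm{ext}}$. Your closing remark that the statement is vacuous when $\partial(e)=\partial(e^\prime)$ (in particular for single vertex graphs) is a correct and worthwhile clarification, though not part of the paper's reasoning.
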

For the free fields to be constructed in the next section this implies local commutativity 
(with the above restriction). We reformulate finite propagation speed in a 
more familiar form. For any closed subset $\cO$ of $\cG$ and any $0<d$ define 
$$
\cO^d=\{p\in\cG\,|\, \min_{q\in\cO}d(p,q)\le d\},
$$
the closed set of points in $\cG$ with distance less or equal to $d$ from $\cO$. 
\begin{corollary}\label{theo:causalsupport}
Under the conditions of the theorem the following holds for the solution 
of the Klein-Gordon equation (or the wave equation) for given Cauchy data $(f,\dot{f})$.
\begin{itemize}
\item{If $f$ and $\dot{f}$ both  
have support in $\cO\subset\cG_{\mathrm{ext}}$, then $f(\cdot,t)$ has support in $\cO^{|t|}$ for all $t$.}
\item{If $f$ and $\dot{f}$ both  
have support in $\cO$, then $\supp f(\cdot,t)\cap \cG_{\mathrm{ext}}\subset \cO^{|t|}$ for all $t$.}
\end{itemize}
\end{corollary}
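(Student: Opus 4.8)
The plan is to read the support of the solution directly off the kernel representation \eqref{KGsolnew3} together with the vanishing statement of Theorem \ref{theo:mfinite}. Writing the solution at time $t$ as
\begin{equation*}
f(p,t)=\int_\cG \partial_tG_{A,B,m^2}(t)(p,q)\,f(q)\,dq+\int_\cG G_{A,B,m^2}(t)(p,q)\,\dot{f}(q)\,dq,
\end{equation*}
the integrands are supported in $\supp f\cup\supp\dot{f}\subseteq\cO$. Hence to prove $f(p,t)=0$ for a given event $(t,p)$ it suffices to show that the two kernels $G_{A,B,m^2}(t)(p,q)$ and $\partial_tG_{A,B,m^2}(t)(p,q)$ vanish for every $q\in\cO$. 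Since $f$ and $\dot{f}$ lie in $\cD(-\Delta_{A,B})$, the solution $f(\cdot,t)$ is continuous on $\cG$, so vanishing on an open set immediately identifies the support.

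First I would dispose of the time derivative. For fixed spatial arguments $p,q$ with at least one of them in $\cG_{\mathrm{ext}}$, Theorem \ref{theo:mfinite} asserts that $\tau\mapsto G_{A,B,m^2}(\tau)(p,q)$ vanishes on the whole open interval $|\tau|<d(p,q)$, not merely at a single value of $\tau$, because the space-like condition $d(p,q)>|\tau|$ is open in $\tau$. As this map is smooth in $\tau$ — it is the $\tau$-dependent spectral integral \eqref{Gkernel}, whose symbols \eqref{fanal} are entire and bounded on the real axis — its $\tau$-derivative vanishes on the same interval. Consequently both $G_{A,B,m^2}(t)(p,q)$ and $\partial_tG_{A,B,m^2}(t)(p,q)$ vanish under exactly the same hypothesis: $|t|<d(p,q)$ and one of $p,q$ in $\cG_{\mathrm{ext}}$.

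With this in hand both assertions are immediate. For the first, assume $\supp f,\supp\dot{f}\subseteq\cO\subseteq\cG_{\mathrm{ext}}$ and take $p\notin\cO^{|t|}$, that is $\min_{q\in\cO}d(p,q)>|t|$. Then every $q\in\cO$ satisfies $d(p,q)>|t-0|$, so $(t,p)$ and $(0,q)$ are space-like separated, and moreover $q\in\cG_{\mathrm{ext}}$, so the hypothesis of Theorem \ref{theo:mfinite} is met; both kernels therefore vanish for all relevant $q$ and $f(p,t)=0$. Since $\cG\setminus\cO^{|t|}$ is open this gives $\supp f(\cdot,t)\subseteq\cO^{|t|}$. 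For the second assertion $\cO$ is arbitrary, so I instead restrict to $p\in\cG_{\mathrm{ext}}$ with $p\notin\cO^{|t|}$; again $d(p,q)>|t|$ for every $q\in\cO$, but now it is the point $p$ itself that supplies the required membership in $\cG_{\mathrm{ext}}$, and the same computation yields $\supp f(\cdot,t)\cap\cG_{\mathrm{ext}}\subseteq\cO^{|t|}$. The wave-equation case is covered verbatim using the convention \eqref{mconv}.

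The argument is conceptually a one-line consequence of Theorem \ref{theo:mfinite}, so the only points demanding care are bookkeeping. One must be entitled to evaluate the bounded operators $G_{A,B,m^2}(t)$ and $\partial_tG_{A,B,m^2}(t)=\cos\sqrt{-\Delta_{A,B}+m^2}\,t$ through their integral kernels pointwise in $p$; this is legitimate because the improper eigenfunctions entering \eqref{Gkernel} are bounded and continuous, so the kernels are genuine locally integrable functions. The step I expect to need the most attention is the interchange of $t$-differentiation with the spectral integral used to pass from the vanishing of $G_{A,B,m^2}$ to that of $\partial_tG_{A,B,m^2}$, since it is precisely this which reduces the derivative term to the same finite-propagation-speed statement rather than forcing a separate argument.
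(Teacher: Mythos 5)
Your proof is correct and takes essentially the route the paper intends: the corollary is read off from the solution representation \eqref{KGsolnew3} together with the kernel-vanishing statement of Theorem \ref{theo:mfinite}, which is exactly why the paper states it without further argument. Your key supporting observation --- that $\partial_tG_{A,B,m^2}(t)(p,q)$ vanishes wherever $G_{A,B,m^2}(\tau)(p,q)$ vanishes, because the space-like condition $d(p,q)>|\tau|$ is open in $\tau$ --- is precisely the detail needed to dispose of the $\cos\bigl(\sqrt{-\Delta_{A,B}+m^2}\,t\bigr)$ term, and it is sound.
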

In particular if both $f$ and $\dot{f}$ have support on the external edge $I_e$, then $f(\cdot,t)$ 
vanishes on any external edge $I_{e^\prime}\,(e^\prime\neq e)$ as long as $|t|<\pdist(e,e^\prime)$.

\section{Free Quantum Fields on Metric Graphs}\label{sec:relnew}
In this section we will construct free fields on the graph $\cG$. The reader is supposed 
to be familiar with the basic concepts of \emph{second quantization}, see, e.g. 
\cite{IZ,Jost,Schweber, Weinberg}.
Also from now on we will assume that the boundary conditions $(A,B)$ 
are chosen in such a way that there are no positive (or zero) eigenvalues of $-\Delta_{A,B}$, 
that is $\Sigma^>=\emptyset$ and $\Sigma=\Sigma^<$, so bound states are still allowed. As a trivial 
consequence of this assumption, the graph has to have 
at least one external edge, $\cE\neq \emptyset$, since otherwise the entire spectrum is discrete and 
there are positive eigenvalues. Finally we will assume that $m>0$ is chosen 
such that $-\Delta_{A,B}+m^2>0$.
\subsection{Creation and annihilation operators and the RT-algebra}
We introduce the creation and annihilation operators \footnote{We stick to the 
standard notational convention in QFT and use $^\star$ to denote the adjoint (only) in this case.}
\begin{align}\label{anncrea}
a^l(\sk),\;a^{l}(\sk)^\star,&\quad \sk>0\\\nonumber
a^{\sk,\nu},\;a^{\sk,\nu\,\star},&\quad \sk\in\Sigma,\;1\le \nu\le n(\sk)
\end{align}
satisfying the commutation relations
\begin{equation}\label{commrel}
\left[a^{l}(\sk),a^{l^\prime}(\sk^\prime)^\star\right]
=2\pi \delta_{ll^\prime}\delta(\sk-\sk^\prime),\qquad
\left[a^{\sk,\nu},a^{\sk^\prime,\nu^\prime\,\star}\right]
=\delta_{\sk,\sk^\prime}\delta_{\nu,\nu^\prime}
\end{equation}
while all other commutators vanish. These operators act in the bosonic Fock space $\fF(\cH_1)$ 
with $\cH_1=L^2(\cG)$ as the choice of the 1-particle space, that is 
\begin{align}
\fF(\cH_1)&=\C\oplus \cH_1\oplus\cdots \oplus\cH_n\oplus\cdots\\
       \cH_n&=\underbrace{\cH_1\mathop{\otimes}_{s}\cH_1\mathop{\otimes}_{s}\cH_1
\mathop{\otimes}_{s}\cH_1}_n,
\end{align}
such that $\cH_n$ is the $n$-particle space. $\mathop{\otimes}_{s}$ denotes the symmetric 
tensor product. $a^{l}(\sk)^\star$ has the interpretation of a creation of a particle with 
wave function $\psi^l(\,;\sk)$, while $a^{\sk,\nu\,\star}$ is the creation operator of a particle with 
(bound state) wave function $\psi^{\sk,\nu}$. The normalization in \eqref{commrel} is chosen in 
accordance with \eqref{ortho}, \eqref{posevform} and \eqref{efnormal}.
For reasons which will become clear in a moment, 
we elaborate on this. 
By the completeness relation \eqref{compl} any wave 
function $f\in L^2(\cG)$ has a \emph{Fourier type} expansion of the form
\begin{equation}\label{fexpans}
f(p)=\sum_{l\in\cE}\int_{0}^\infty d\sk \widetilde{f}_l(\sk)\psi^l(p;\sk)
+\sum_{\sk\in\Sigma, 1\le \nu\le n(\sk)}\widetilde{f}^\nu(\sk)\psi^{\sk,\nu}(p)
\end{equation}
with expansion coefficients given as 
\begin{equation}\label{fexpans1}
\widetilde{f}_l(\sk)=\frac{1}{\sqrt{2\pi}}\int_{p\in \cG}\overline{\psi^l(p;\sk)}f(p)dp,\qquad 
\widetilde{f}^\nu(\sk)=\int_{p\in \cG}\overline{\psi^{\sk,\nu}(p)}f(p)dp
\end{equation}
such that the \emph{Parseval equality} holds in the form
\begin{equation}\label{fexpans2}
\langle f,f\rangle_\cG=\sum_{l\in\cE}\int_{0}^\infty d\sk|\widetilde{f}_l(\sk)|^2\;+
\sum_{\sk\in\Sigma, 1\le \nu\le n(\sk)}|\widetilde{f}^\nu(\sk)|^2
\end{equation}
holds thus establishing an isometry of Hilbert spaces
$$
L^2(\cG)\cong L^2([0,\infty),d\sk)\,\oplus\, \C^{N_\Sigma}
$$
where 
$$
N_\Sigma=\sum_{\sk\in\Sigma=\Sigma^<}n(\sk)\le |\cE|+2|\cI|
$$
is the total number of bound states, counting multiplicities.
With this notation the creation operator for a particle with an arbitrary wave function $f$ is of the form
\begin{equation}\label{fexpans3}
a^\star(f)=\sum_{l\in\cE}\int_{0}^\infty d\sk \widetilde{f}_l(\sk)a^l(\sk)^\star
+\sum_{\sk\in\Sigma, 1\le \nu\le n(\sk)}\widetilde{f}^\nu(\sk)a^{\sk,\nu\,\star}
\end{equation}
and correspondingly its adjoint $a(f)$ is the annihilation operator for the wave function $f$.

The (self-adjoint) number operator, the second quantization of the identity operator on the one-particle 
space, is 
\begin{equation}\label{secnumber}
{\bf N}=\frac{1}{2\pi}\sum_{l\in \cE}\int_0^\infty d\sk\; 
a^{l}(\sk)^\star a^{l}(\sk)+\sum_{\sk\in\Sigma, 1\le \nu\le n(\sk)}
a^{\sk,\nu\;\star} a^{\sk,\nu}.
\end{equation}
We define $h=\sqrt{-\Delta_{A,B}+m^2}$, see the discussion in Section  \ref{subsec:uniq}, 
to be the one-particle Hamilton operator, so its second quantization is the self-adjoint operator 
\begin{equation}\label{secham}
{\bf H}=\frac{1}{2\pi}\sum_{l\in \cE}\int_0^\infty d\sk\; 
\omega(\sk) \,a^{l}(\sk)^\star a^{l}(\sk)+\sum_{\sk\in\Sigma, 1\le \nu\le n(\sk)}\omega(\sk)
a^{\sk,\nu\;\star} a^{\sk,\nu}
\end{equation}
and we observe that $\omega(\sk)$ is positive for all $\sk\in\Sigma^<=\Sigma$ by the choice of $m>0$.

The operator
\begin{equation}\label{momrel}
{\bf P}=\frac{1}{2\pi}\sum_{l\in \cE}\int_0^\infty d\sk\; 
\sk \,a^{l}(\sk)^\star a^{l}(\sk)
\end{equation}
can be given the interpretation of the sum of the absolute value of the momenta of all particles in a 
state of the Fock space which does not contain particles with bound state wave functions.
Stated more abstractly, let $P_{ac}$ be the orthogonal projector onto the subspace of $L^2(\cG)$ 
corresponding to the absolutely continuous spectrum of $-\Delta_{A,B}$. 
Then $\bf P$ is the second quantization of the 1-particle operator 
$\sqrt{-\Delta_{A,B}P_{ac}}$. That there is no proper momentum operator in the familiar sense 
has of course to do with the fact that the configuration space is a graph. So the notion of translations
in space and with the momentum operator as infinitesimal generator does not make sense. 
But what remains is some kind of absolute value of momentum reminiscent of the conservation of the 
absolute value of the momentum of a (classical) particle under elastic scattering.
Both ${\bf N}$ and ${\bf P}$ commute with $H_0$ and are therefore conserved under time evolution. 

With these preparatory remarks we are now in the position to provide an explicit construction of 
\emph{RT (reflection-transmission)-algebras} 
\cite{CMRS,MRSI,MRSII}. The main observation is that $\sk$ in $a^{l}(\sk)$ and $a^{l}(\sk)^\star$ is 
positive. So we are \emph{free to define} creation and annihilation 
operators also for negative $\sk$. Indeed, we may set
\begin{align}\label{extension1}
a^{l}(-\sk)&=\sum_{l^\prime\in\cE}S(\sk)_{l\,l^\prime}a^{l^\prime}(\sk)\\\nonumber
a^{l}(-\sk)^\star&=\sum_{l^\prime\in\cE}S(-\sk)_{l^\prime \,l}a^{l^\prime}(\sk)^\star,
\qquad \sk>0,
\end{align}
where we recall the general relation $S(-\sk)=S(\sk)^{-1}=S(\sk)^\dagger$ valid for all real $\sk\neq 0$. 
With this definition the relations \eqref{extension1}
remain valid for $\sk<0$ and 
then $a^{l}(\sk)^\star$ is again the adjoint of $a^{l}(\sk)$. Since for $\sk>0$ the operator 
$a^{l^\prime}(\sk)^\star$  creates a particle with wave function $\psi^l(\,;\sk)$, 
by linearity the operator $a^{l}(-\sk)^\star$ as defined by \eqref{extension1} 
creates a particle with wave function
\begin{equation*} 
\sum_{l^\prime\in\cE}S(-\sk)_{l^\prime\,l}\psi^{l^\prime}(p;\sk), 
\end{equation*}
which by \eqref{alphabeta4} equals $\psi^l(\,;-\sk)$. This gives the first part of  the next lemma, 
while the second part follows by an easy calculation.
\begin{lemma}\label{a-k}
For any $\sk>0$ the operator $a^{l}(-\sk)^\star$ as defined by \eqref{extension1},  
creates a particle with wave function $\psi^l(\,;-\sk)$.
The extended family of operators 
$$\left\{a^l(\sk),a^l(\sk)^\star\right\}_{l\in\cE,-\infty<\sk<\infty}
$$
satisfies the commutation relations
\begin{equation}\label{extension2}
\left[a^{l}(\sk),a^{l^\prime}(\sk^\prime)^\star\right]=\delta_{l\,l^\prime}\delta(\sk-\sk^\prime)+
S(\sk)_{ll^\prime}\delta(\sk+\sk^\prime),\qquad -\infty<\sk,\sk^\prime <\infty,\quad l,l^\prime\in\cE,
\end{equation}
again with all other commutators vanishing.
\end{lemma}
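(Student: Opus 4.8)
The first assertion is essentially already contained in the paragraph preceding the lemma, so my plan is only to make the bookkeeping explicit. First I would record that the two lines of \eqref{extension1} are genuinely mutually adjoint: taking the adjoint of $a^{l}(-\sk)=\sum_{l'}S(\sk)_{ll'}a^{l'}(\sk)$ and using unitarity in the form $\overline{S(\sk)_{ll'}}=(S(\sk)^\dagger)_{l'l}=S(-\sk)_{l'l}$ (Theorem \ref{theo:unitarity}) reproduces exactly the stated formula for $a^{l}(-\sk)^\star$; this justifies the claim ``$a^{l}(\sk)^\star$ is again the adjoint of $a^{l}(\sk)$'' for $\sk<0$. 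Then, since for $\sk>0$ the operator $a^{l'}(\sk)^\star$ creates $\psi^{l'}(\,;\sk)$, linearity gives that $a^{l}(-\sk)^\star$ creates $\sum_{l'}S(-\sk)_{l'l}\psi^{l'}(\,;\sk)$, and the componentwise form of \eqref{alphabeta4}, namely $\psi^{l}(\,;-\sk)=\sum_{l'}\psi^{l'}(\,;\sk)S(-\sk)_{l'l}$, identifies this with $\psi^{l}(\,;-\sk)$. This settles the first part.

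For the commutation relations I would proceed by a case analysis on the signs of $\sk,\sk'$, in each case reducing the extended operators to genuine ($\sk>0$) ones via \eqref{extension1} and then invoking the base relation \eqref{commrel}. When $\sk,\sk'>0$ there is nothing to do: \eqref{commrel} supplies the $\delta(\sk-\sk')$ term, while $\delta(\sk+\sk')=0$ because $\sk+\sk'>0$. When exactly one argument is negative, say $\sk=-\kappa$ with $\kappa>0$ and $\sk'>0$, I substitute $a^{l}(-\kappa)=\sum_{m}S(\kappa)_{lm}a^{m}(\kappa)$, pull the $c$-number coefficients out of the commutator, and apply \eqref{commrel} to $[a^{m}(\kappa),a^{l'}(\sk')^\star]$; the surviving $\delta(\kappa-\sk')$ becomes $\delta(\sk+\sk')$ and carries the coefficient $S(\kappa)_{ll'}=S(-\sk)_{ll'}$, and on the support $\sk'=-\sk$ this equals $S(\sk')_{ll'}$. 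The mirror subcase $\sk>0,\sk'<0$ is treated identically by expanding $a^{l'}(\sk')^\star$, giving the same coefficient. Finally, when both arguments are negative I expand both operators, reduce to the $\sk,\sk'>0$ commutator, and collapse the resulting double sum $\sum_m S(\kappa)_{lm}S(-\kappa)_{ml'}=(S(\kappa)S(\kappa)^{-1})_{ll'}=\delta_{ll'}$ using $S(-\kappa)=S(\kappa)^{-1}$ \eqref{-ks}, so that the $\delta(\sk-\sk')$ term is recovered and the $\delta(\sk+\sk')$ term again drops out.

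The step I expect to require the most care is the precise identification of the coefficient of $\delta(\sk+\sk')$, that is, keeping straight which argument of the meromorphic matrix $S$ appears. Since $S(\sk)\neq S(-\sk)$ in general and the two are tied only through $S(-\sk)=S(\sk)^{-1}=S(\sk)^\dagger$ (relations \eqref{unitarity},\eqref{-ks}), while $S(\sk)\,\delta(\sk+\sk')$ and $S(-\sk)\,\delta(\sk+\sk')$ are \emph{distinct} distributions, one must fix a convention consistently rather than freely ``evaluate on the support.'' The cleanest way to pin it down — which I would use as the master consistency check — is to demand compatibility of the candidate relation with the defining constraint \eqref{extension1} itself: replacing $\sk\mapsto-\sk$ on both sides, expanding the left side through \eqref{extension1}, and then matching the coefficients of $\delta(\sk-\sk')$ and of $\delta(\sk+\sk')$ \emph{separately}. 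Matching the $\delta(\sk-\sk')$ coefficient fixes the off-diagonal term to be $S(-\sk)_{ll'}$, and matching the $\delta(\sk+\sk')$ coefficient then forces the relation $S(\sk)\,S(-\sk)=\1$, which holds by unitarity; this single check simultaneously fixes the coefficient and confirms that the equal-sign and mixed-sign cases dovetail.
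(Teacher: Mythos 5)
Your calculation is correct, and it is essentially the ``easy calculation'' that the paper omits: the paper's proof of the first half is exactly the paragraph preceding the lemma (which you reproduce, including the adjointness check via unitarity and the identification through \eqref{alphabeta4}), and the second half is merely asserted. But your careful bookkeeping of the $\delta(\sk+\sk^\prime)$ coefficient uncovers a real inconsistency in the printed statement, and you should say so explicitly rather than leave it implicit. What your case analysis (and your ``master consistency check'') actually establishes is
\begin{equation*}
\left[a^{l}(\sk),a^{l^\prime}(\sk^\prime)^\star\right]
=2\pi\left(\delta_{l\,l^\prime}\delta(\sk-\sk^\prime)+S(-\sk)_{ll^\prime}\delta(\sk+\sk^\prime)\right),
\end{equation*}
with coefficient $S(-\sk)_{ll^\prime}=S(\sk^\prime)_{ll^\prime}$, whereas \eqref{extension2} as printed has $S(\sk)_{ll^\prime}$. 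Since $S(-\sk)=S(\sk)^{-1}\neq S(\sk)$ unless the S-matrix is $\sk$-independent, these are genuinely different distributions, so \eqref{extension1} and \eqref{extension2} are mutually incompatible as printed; your version is the one forced by \eqref{extension1}. Indeed, for $\sk,\sk^\prime>0$ the definition gives $[a^{l}(-\sk),a^{l^\prime}(\sk^\prime)^\star]=\sum_{m}S(\sk)_{lm}\,2\pi\delta_{ml^\prime}\delta(\sk-\sk^\prime)=2\pi S(\sk)_{ll^\prime}\delta(\sk-\sk^\prime)$, i.e.\ the coefficient of $\delta(\sk+\sk^\prime)$ is $S$ evaluated at \emph{minus} the first argument. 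That \eqref{extension1}, not \eqref{extension2}, encodes the paper's intended convention is confirmed by two independent checks: it is what makes the first half of the lemma true, and it is what makes the rewriting \eqref{phiptlocalrt} of the hermitian field correct. The printed form of \eqref{extension2} is the convention of \cite{BellBurrMinSo} referred to in Remark \ref{re:rtalgebra}, but that convention pairs with the opposite constraint $a^{l}(-\sk)=\sum_{l^\prime}S(-\sk)_{ll^\prime}a^{l^\prime}(\sk)$, under which $a^{l}(-\sk)^\star$ would create $\sum_{l^\prime}S(\sk)_{l^\prime l}\psi^{l^\prime}(\,;\sk)\neq\psi^{l}(\,;-\sk)$ and the first half of the lemma would fail. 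So you have in effect found, and repaired, a misprint; your proof proves the corrected lemma, not the printed one.

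One smaller point you pass over silently: there is also a normalization mismatch, since \eqref{commrel} carries a factor $2\pi$ that \eqref{extension2} drops. Your diagonal case $\sk,\sk^\prime>0$ therefore does not literally reproduce \eqref{extension2}; consistency requires the $2\pi$ to multiply both terms of the extended relation (or neither term of \eqref{commrel}). Apart from flagging these two corrections to the statement, nothing in your argument needs repair: pulling out the c-number coefficients, the distributional substitution $S(-\sk)_{ll^\prime}\delta(\sk+\sk^\prime)=S(\sk^\prime)_{ll^\prime}\delta(\sk+\sk^\prime)$ (legitimate because the two coefficient functions agree on the support), and the collapse of the double sum via $S(\kappa)S(-\kappa)=\1$ in the doubly negative case are all sound.
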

\begin{remark}\label{re:rtalgebra}
This realization of a RT-algebra agrees with the one used in  
\cite{BellBurrMinSo,Ragoucy}. The construction \eqref{extension1} of the $a^l(-\sk)$ and $a^l(-\sk)^\star$ 
out of the $a^{l^\prime}(\sk)$ and $a^{l^\prime}(\sk)^\star$ is 
reminiscent of the action of the Weyl group in the root space of a Lie algebra, by which any root is 
obtained  from the set of positive roots \cite{Faddeev}. A different context, where a (scalar) scattering 
matrix appears in commutation relations, is provided in \cite{Faddeev:Volkov}. 
\end{remark}

\subsection{The free  hermitian quantum field}\label{subsec:hermfield}
For reasons to become clear in a moment, in this subsection the boundary conditions $(A,B)$ 
will be taken to be real. 
The field operator, again of dimension zero, is defined to be 
\begin{align}\label{phipt}
\Phi(t,p)&=\e^{\ii {\bf H}\,t}\Phi(p)\e^{-\ii  {\bf H}\,t}\\\nonumber
&=\sum_{l\in\cE}
\int_0^\infty 
\frac{d\sk}{\sqrt{2\pi}}\frac{1}{\sqrt{2\omega(\sk)}}
\left(\overline{\psi^l(p;\sk)}\,\e^{\ii  \omega(\sk)t}a^{l}(\sk)^\star
+h.c.\right)\\\nonumber
&\qquad\qquad +\sum_{\sk\in\Sigma,1\le \nu\le n(\sk)}
\frac{1}{\sqrt{2\omega(\sk)}}
\left(\overline{\psi^{\sk,\nu}(p)}\,\e^{\ii \omega(\sk)t}a^{\sk,\nu\,\star}+h.c.\right).
\end{align}
where $h.c.$ denotes hermitian conjugate. By construction, this field is hermitian and 
$\Phi(t+s,p)=\e^{\ii {\bf H}\,t}\Phi(s,p)\e^{-\ii  {\bf H}\,t}$ holds.
Again we use a similar notational convention as the one 
used for a local description of functions on $\cG$. 
Thus for its restriction to an edge $j$ and with local coordinate $(j,x)\,x\in [0,a_j]$ for a point $p$ 
there the field is given as 
\begin{equation}\label{phiptlocal}
\Phi_j(t,x)=\begin{cases}\sum_{l\in\cE}
\int_0^\infty\frac{d\sk}{\sqrt{2\pi}}\frac{1}{\sqrt{2\omega(\sk)}}
\Big(\left(\e^{\ii \sk x}\delta_{jl}
+\overline{S(\sk)_{jl}}\e^{-\ii \sk x}\right)
\e^{\ii \omega(\sk)t}\,a^{l}(\sk)^\star+h.c.\Big)\\
\\
+\sum_{\sk\in\Sigma,1\le \nu\le n(\sk)}
\frac{1}{\sqrt{2\omega(\sk)}}
\left(\overline{s^{\sk,\nu}_j}\e^{\ii\sk x}
\e^{\ii\omega(\sk)t}\,a^{\sk,\nu\,\star}+h.c.\right)
,\qquad j\in\cE\\
\\
\sum_{l\in\cE}\int_0^\infty \frac{d\sk}{\sqrt{2\pi}}\frac{1}{\sqrt{2\omega(\sk)}}
\Big(\left(\overline{\alpha(\sk)_{jl}}\e^{-\ii\sk x}
+\overline{\beta(\sk)_{jl}}\e^{\ii\sk x}  \right)
\e^{\ii \omega(\sk)t}\,a^{l}(\sk)^\star +h.c.\Big)\\
\\
+\sum_{\sk\in\Sigma,1\le \nu\le n(\sk)}
\frac{1}{\sqrt{2\omega(\sk)}}\left(\left( \overline{u^{\sk,\nu}_j}\e^{\ii\sk x}+
\overline{v^{\sk,\nu}_j}\e^{-\ii\sk x}\right)\e^{\ii\omega(\sk)t}\,a^{\sk,\nu\,\star}+h.c.\right),
\; j\in\cI.
\end{cases}
\end{equation}
Observe that the $\psi^{\sk,\nu}(p)$ need not 
be chosen real. However, the reality of the boundary conditions comes as follows into play. 
By Corollary \ref{cor:laplacereal2} the $\overline{\psi^l(p;\sk)}$ and the 
$\overline{\psi^{\sk,\nu}(p)}$ are also eigenfunctions of $-\Delta_{A,B}$.
Since the boundary conditions are real, we can use Lemma \ref{lem:reales} and \eqref{extension1} 
to simplify the first terms in \eqref{phiptlocal} using the RT-algebra notation and 
$\omega(-\sk)=\omega(\sk)$
\begin{align}\label{phiptlocalrt}
\sum_{l\in\cE}
\int_0^\infty\frac{d\sk}{\sqrt{2\pi}}\frac{1}{\sqrt{2\omega(\sk)}}&
\Big(\left(\e^{\ii\sk x}\delta_{jl}
+\overline{S(\sk)_{jl}}\e^{-\ii \sk x}\right)
\e^{\ii \omega(\sk)t}\,a^{l}(\sk)^\star+h.c.\Big)=\\\nonumber
&
\int_{-\infty}^\infty\frac{d\sk}{\sqrt{2\pi}}\frac{1}{\sqrt{2\omega(\sk)}}
\Big(\e^{\ii (\sk x+\omega(\sk)t)}\,a^{j}(\sk)^\star+h.c.\Big)
,\qquad\qquad\quad\; j\in\cE\\\nonumber
\sum_{l\in\cE}\int_0^\infty \frac{d\sk}{\sqrt{2\pi}}\frac{1}{\sqrt{2\omega(\sk)}}&
\Big(\left(\overline{\alpha(\sk)_{jl}}\e^{-\ii\sk x}
+\overline{\beta(\sk)_{jl}}\e^{\ii\sk x}  \right)
\e^{\ii \omega(\sk)t}\,a^{l}(\sk)^\star+h.c.\Big)=\\\nonumber
&\sum_{l\in\cE}
\int_{-\infty}^\infty\frac{d\sk}{\sqrt{2\pi}}\frac{1}{\sqrt{2\omega(\sk)}}
\Big(\beta(-\sk)_{jl}\e^{\ii(\sk x+\omega(\sk)t)} 
\,a^{l}(\sk)^\star+h.c.\Big),\quad j\in\cI.
\end{align}
Let $\Omega$ denote the vacuum.
\begin{proposition}\label{prop:bdyphi}
The hermitian field $\Phi$ satisfies the Klein- Gordon equation
$$
(\Box_{A,B}+m^2)\Phi(p,t)=0.
$$
For all times $t$ the boundary conditions
$$
A\underline{\Phi}(t)+B\underline{\Phi}^\prime(t)=0
$$
 are valid in the sense of expectation values in states which are 
linear combinations of states of the form 
$$
\prod_i a(f_i)^\star\Omega
$$
with $f_i\in\cD(-\Delta_{A,B}+m^2)$.
\end{proposition}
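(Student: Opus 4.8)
The plan is to establish the two assertions separately: the Klein--Gordon equation as an edgewise (bulk) differential identity, and the boundary conditions as an identity of matrix elements expressing membership of the relevant one-particle functions in $\cD(-\Delta_{A,B})$; throughout I use that the boundary conditions are real. For the Klein--Gordon equation I would apply $\Box_{A,B}+m^2=\partial_t^2-\Delta_{A,B}+m^2$ to the field \eqref{phipt} term by term. In a typical summand $\overline{\psi^l(p;\sk)}\,\e^{\ii\omega(\sk)t}a^l(\sk)^\star$ the factor $\e^{\ii\omega(\sk)t}$ contributes $-\omega(\sk)^2$ under $\partial_t^2$, while on each edge $-\Delta_{A,B}$ acts as $-\partial_x^2$; since $\sk$ is real, $-\partial_x^2\,\overline{\psi^l_j(x;\sk)}=\overline{-\partial_x^2\psi^l_j(x;\sk)}=\sk^2\,\overline{\psi^l_j(x;\sk)}$, so that $(-\Delta_{A,B}+m^2)\overline{\psi^l(p;\sk)}=\omega(\sk)^2\,\overline{\psi^l(p;\sk)}$. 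The two contributions cancel, and the same holds for the hermitian-conjugate terms and, using $-\Delta_{A,B}\psi^{\sk,\nu}=\sk^2\psi^{\sk,\nu}$, for the bound-state summands; hence $\Box_{A,B}+m^2$ annihilates every term.

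For the boundary conditions I would pass to matrix elements between states of the stated form. Let $\Psi_1=\prod_j a(g_j)^\star\Omega$ and $\Psi_2=\prod_i a(f_i)^\star\Omega$ with $f_i,g_j\in\cD(-\Delta_{A,B}+m^2)$. Splitting $\Phi(t,p)=\Phi^{(+)}(t,p)+\Phi^{(-)}(t,p)$ into its creation and annihilation parts and commuting the single creation or annihilation operator of $\Phi$ through the products by \eqref{commrel}, the matrix element $\langle\Psi_1,\Phi(t,p)\Psi_2\rangle$ reduces to a finite sum of terms in each of which the whole $p$-dependence is carried by one factor of the form
\begin{equation*}
g(p)=\sum_{l\in\cE}\int_0^\infty d\sk\,\frac{\e^{\mp\ii\omega(\sk)t}}{\sqrt{2\omega(\sk)}}\,c_l(\sk)\,\chi^l(p;\sk)
+\sum_{\sk\in\Sigma,\,\nu}\frac{\e^{\mp\ii\omega(\sk)t}}{\sqrt{2\omega(\sk)}}\,c^\nu(\sk)\,\chi^{\sk,\nu}(p),
\end{equation*}
where $\chi$ denotes either $\psi$ or $\overline{\psi}$ and, by \eqref{fexpans}--\eqref{fexpans1}, the weights $c_l(\sk),c^\nu(\sk)$ are (up to conjugation) the expansion coefficients of the contracted $f_i$ or $g_j$.

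Each $\psi^l(\cdot;\sk)$ and each $\psi^{\sk,\nu}$ satisfies the boundary conditions $(A,B)$ by construction, and since these conditions are real, Corollary \ref{cor:laplacereal2} shows the conjugates $\overline{\psi^l(\cdot;\sk)}$ and $\overline{\psi^{\sk,\nu}}$ do as well. It therefore suffices to justify interchanging the boundary-value maps $\underline{\,\cdot\,},\,\underline{\,\cdot\,}'$ with the $\sk$-integral, i.e.\ to show $g\in\cD(-\Delta_{A,B})$, whence $A\underline{g}+B\underline{g}'=0$. This is where the domain hypothesis enters: applying the Parseval identity \eqref{fexpans2} to $(-\Delta_{A,B}+m^2)f_i$ yields $\sum_l\int_0^\infty(\sk^2+m^2)^2|c_l(\sk)|^2\,d\sk<\infty$, and since $m>0$ the weight $(2\omega(\sk))^{-1/2}$ is bounded, so the coefficients of $g$ satisfy the same summability and $g\in\cD(-\Delta_{A,B})$. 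Consequently every term, and hence $\langle\Psi_1,\Phi(t,\cdot)\Psi_2\rangle$, obeys the boundary conditions, giving $A\underline{\langle\Psi_1,\Phi(t)\Psi_2\rangle}+B\underline{\langle\Psi_1,\Phi(t)\Psi_2\rangle}'=0$. I expect the main obstacle to be precisely this interchange: $\Phi$ is an operator-valued distribution in $p$ and the boundary-value maps sample it, with a derivative, at the vertices, so the statement is meaningful only after smearing against states built from $f_i\in\cD(-\Delta_{A,B}+m^2)$, whose decay is what forces the contracted one-particle superposition $g$ into the operator domain.
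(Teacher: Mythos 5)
Your proposal is correct and takes essentially the same route as the paper: the key points are that reality of $(A,B)$ makes both $\psi^l(\cdot;\sk)$ and $\overline{\psi^l(\cdot;\sk)}$ (and the bound-state functions) satisfy the boundary conditions, and that the domain restriction $f_i\in\cD(-\Delta_{A,B}+m^2)$ is what renders the expectation-value statement meaningful. The paper's own proof is only a two-sentence sketch ending in ``We omit details''; your term-by-term verification of the Klein--Gordon equation and the contraction-plus-Parseval argument showing the one-particle functions $g$ lie in $\cD(-\Delta_{A,B})$ supply exactly those omitted details.
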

\begin{proof}
For general boundary conditions $(A,B)$ we recall that if $\psi$ satisfies the 
boundary 
condition \eqref{bdycond}, then $\bar{\psi}$ satisfies the boundary condition 
$(\bar{A},\bar{B})$. As a consequence, if the boundary conditions $(A,B)$ are real, 
then both $\e^{\ii\omega(\sk)t}\psi^l(\,;\sk)$ and 
$\e^{\ii\omega(\sk)t}\overline{\psi^l(\,;\sk)}$ satisfy the boundary 
condition \eqref{bdycond} for all $l\in\cE$ and all $\sk\in\R_+$ and the claim follows
from the construction of $\Phi$ and the choice of the states. We omit details.
\end{proof}
We also introduce the conjugate field
\begin{equation}\label{pifield}
\Pi(p,t)=\dot{\Phi}(p,t)=\frac{\partial}{\partial t}\Phi(p,t).
\end{equation}
Using the completeness relation for the eigenfunctions of 
$-\Delta_{A,B}$  
in the form \eqref{compl} we derive the 
\begin{theorem}\label{canonical}
For the boundary conditions $(A,B)$ 
the equal time commutation relation
\begin{equation}\label{equaltime}
\left[\Phi(p,t),\Pi(q,t)\right]=i\delta(p,q), \quad p,q\in\cG
\end{equation}
is valid.
\end{theorem}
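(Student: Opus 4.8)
The plan is to substitute the explicit mode expansion \eqref{phipt} for $\Phi(t,p)$, together with $\Pi(t,q)=\partial_t\Phi(t,q)$ from \eqref{pifield}, and to evaluate the commutator by contracting creation against annihilation operators. I would write $\Phi=\Phi^{+}+\Phi^{-}$, where $\Phi^{+}$ collects the terms carrying the creation operators $a^{l}(\sk)^\star$ and $a^{\sk,\nu\,\star}$, and $\Phi^{-}=(\Phi^{+})^{\star}$ collects their hermitian conjugates; differentiation in $t$ then gives $\Pi=\partial_t\Phi^{+}+\partial_t\Phi^{-}$. Since in \eqref{commrel} every commutator between two creation or two annihilation operators vanishes, only the mixed terms survive, so that $[\Phi(t,p),\Pi(t,q)]=[\Phi^{+}(t,p),\partial_t\Phi^{-}(t,q)]+[\Phi^{-}(t,p),\partial_t\Phi^{+}(t,q)]$, which is a c-number (a multiple of the identity).

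First I would carry out the contraction for the continuous modes. Inserting $[a^{l}(\sk),a^{l'}(\sk')^\star]=2\pi\delta_{ll'}\delta(\sk-\sk')$ collapses the double channel-sum and momentum-integral to a single one, forcing $\sk=\sk'$ and hence $\omega(\sk)=\omega(\sk')$. This is precisely where the equal-time hypothesis enters: the time factors combine to $\e^{\ii(\omega(\sk)-\omega(\sk'))t}=1$, so all $t$-dependence drops out. The derivative $\partial_t$ supplies a factor $\pm\ii\omega(\sk)$ that cancels one of the two $1/\sqrt{2\omega(\sk)}$ normalizations, leaving a pure constant. The two mixed terms then produce $\sum_{l}\int_0^\infty d\sk\,\overline{\psi^l(p;\sk)}\psi^l(q;\sk)$ and $\sum_{l}\int_0^\infty d\sk\,\psi^l(p;\sk)\overline{\psi^l(q;\sk)}$, i.e. the completeness kernel in both orderings. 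The bound-state modes are treated identically using $[a^{\sk,\nu},a^{\sk',\nu'\,\star}]=\delta_{\sk\sk'}\delta_{\nu\nu'}$, yielding the analogous symmetrized pair $\overline{\psi^{\sk,\nu}(p)}\psi^{\sk,\nu}(q)+\psi^{\sk,\nu}(p)\overline{\psi^{\sk,\nu}(q)}$.

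It then remains to recognize the completeness relation \eqref{compl}. Since $\delta(p,q)$ is real and symmetric, \eqref{compl} and its complex conjugate both hold, so the two orderings $\psi^l(p;\sk)\overline{\psi^l(q;\sk)}$ and $\overline{\psi^l(p;\sk)}\psi^l(q;\sk)$ (with their bound-state counterparts, and the normalization $1/2\pi$ of \eqref{ortho1}--\eqref{compl}) each sum to $\delta(p,q)$. Averaging the two orderings, the symmetrized sum obtained above is exactly $\delta(p,q)$, and collecting the prefactor yields $[\Phi(t,p),\Pi(t,q)]=\ii\delta(p,q)$, as asserted.

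The step demanding the most care is the bookkeeping of the normalization factors---the $2\pi$ in \eqref{commrel}, the $1/\sqrt{2\pi}$ in the measure of \eqref{phipt}, and the $1/\sqrt{2\omega(\sk)}$---and verifying that the continuous and the discrete (bound-state) contributions enter with a common coefficient, so that they recombine into the single kernel \eqref{compl}. A second point requiring care is rigor: $\Phi$ and $\Pi$ are operator-valued distributions, so the formal manipulations with $\delta(\sk-\sk')$ should be read after smearing against Cauchy data $f\in\cD(-\Delta_{A,B}+m^2)$, exactly as in Proposition \ref{prop:bdyphi}; the equal-time commutator is then an identity between the resulting sesquilinear forms.
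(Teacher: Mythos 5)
Your proposal is correct and is essentially the paper's own (largely unwritten) argument: Theorem \ref{canonical} is obtained exactly by contracting the mode expansion \eqref{phipt} with $\Pi=\partial_t\Phi$ via \eqref{commrel}, noting the surviving c-number terms, and recognizing the completeness relation \eqref{compl} together with its complex conjugate. The $2\pi$ bookkeeping you single out is indeed the one delicate point, and in fact it exposes a normalization slip in the paper itself: with the literal conventions (measure $d\sk/\sqrt{2\pi}$ in \eqref{phipt} combined with the factor $2\pi$ in \eqref{commrel}) the continuous and bound-state contributions would enter with coefficients differing by $2\pi$ and could not recombine into \eqref{compl}, so one must read \eqref{phipt} with measure $d\sk/2\pi$ (equivalently, drop the $2\pi$ in \eqref{commrel}), which is precisely the consistent normalization the paper tacitly uses when it writes \eqref{hermcomm}.
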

Observe that this relation fixes the normalization of the field.

\subsection{The free complex quantum field}\label{subsec:complex}

We now construct a complex field $\Psi$, which has the advantage of being able to 
carry (electric) charge. Associated is a particle with that charge and an antiparticle with the opposite 
charge. Accordingly the 1-particle space $\cH_1$ is chosen to be 
$L^2(\cG)\oplus L^2(\cG)$, 
the first for a particle and the second for the corresponding antiparticle. 

The 1-particle Hamiltonian $h$ on that space is chosen to be 
\begin{equation}\label{relcompl}
h=\sqrt{-\Delta_{A,B}+m^2}\,\oplus\, 0\;
+\;0\,\oplus\, \sqrt{-\Delta_{\bar{A},\bar{B}}+m^2}.
\end{equation}
The boundary conditions $(A,B)$ themselves may be chosen arbitrary.
To simplify the exposition we assume that $-\Delta_{A,B}$ and 
hence also $-\Delta_{\bar{A},\bar{B}}$ has no discrete spectrum, cf. Corollary 
\ref{conjspectrum}. So in particular $-\Delta_{A,B}\ge 0,\;-\Delta_{\bar{A},\bar{B}}\ge 0$.
Since the boundary conditions $(A,B)$ are not necessarily real, relation \eqref{realimprop} need not hold. 
However, $\overline{\psi(\,;\sk)}$ satisfies the boundary conditions 
$(\bar{A},\bar{B})$ by Lemma \ref{lem:bdycomplex}.
The creation and annihilation operators for the particles are as before, see \eqref{commrel}. 
As for the antiparticles, for $l\in\cE$ and $\sk>0$ introduce operators $b^l(\sk)$ and their adjoints 
$b^l(\sk)^\star$ satisfying commutation relations of the same form and commuting with all 
$a^{l^\prime}(\sk^\prime)$ and $a^{l^\prime}(\sk^\prime)^\star$.
They are the annihilation and creation operators for the antiparticle with wave function 
$\overline{\psi^l(\,;\sk)}=\overline{\psi^l_{A,B}(\,;\sk)}$, which we recall differs from 
$\psi^l_{\bar{A},\bar{B},\underline{a}}(\,;\sk)$. Correspondingly we set 
\begin{align*}
b^{l}(-\sk)&=\sum_{l^\prime\in\cE}S(-\sk)_{l^\prime\,l}b^{l^\prime}(\sk)
=\sum_{l^\prime\in\cE}\overline{S(\sk)}_{l\,l^\prime}b^{l^\prime}(\sk)\\\nonumber
b^{l}(-\sk)^\star&=\sum_{l^\prime\in\cE}\overline{S(-\sk)}_{l^\prime \,l}b^{l^\prime}(\sk)^\star
               =\sum_{l^\prime\in\cE}S(\sk)_{l\,l^\prime }b^{l^\prime}(\sk)^\star
\qquad \sk>0
\end{align*}
with $S(\sk)=S_{A,B}(\sk)$. By \eqref{alphabeta4} the interpretation 
is that $b^{l}(-\sk)^\star$ creates a particle with wave function 
$\overline{\psi^l(\,;-\sk)}$. ${\bf H}$, the second quantization of $h$ as given by \eqref{relcompl}, is 
\begin{equation*}
{\bf H}= \frac{1}{2\pi}\sum_{l\in \cE}\int_0^\infty d\sk\; 
\omega(\sk) \,a^{l}(\sk)^\star a^{l}(\sk)+\frac{1}{2\pi}\sum_{l\in \cE}\int_0^\infty d\sk\; 
\omega(\sk) \,b^{l}(\sk)^\star b^{l}(\sk).
\end{equation*}
The field $\Psi$ and its adjoint is now given as 
\begin{align}\label{complexrel}
\Psi(t,p)&=\e^{\ii {\bf H}\,t}\Psi(p)\e^{-\ii {\bf H}\,t}
=\sum_{l\in\cE}\int_0^\infty \frac{d\sk}{\sqrt{2\pi}}\frac{1}{\sqrt{2\omega(\sk)}}\,\psi^l(p;\sk)
\left(\e^{\ii  \omega(\sk)t}\,b^{l}(\sk)^\star
+\e^{-\ii \omega(\sk)t}\,a^l(\sk)\right)\\\nonumber
\Psi^\dagger(t,p)&=\e^{\ii {\bf H}\,t}\Psi^\dagger(p)\e^{-\ii {\bf H}\,t}=\sum_{l\in\cE}\int_0^\infty 
\frac{d\sk}{\sqrt{2\pi}}\frac{1}{\sqrt{2\omega(\sk)}}\,
\overline{\psi^l(p;\sk)}
\left(\e^{-\ii \omega(\sk)t}\,b^{l}(\sk)+\e^{\ii \omega(\sk)t}\,a^l(\sk)^\star\right).
\end{align}
In local coordinates and in terms of the RT-algebra we can write the field $\Psi$ (and similarly 
its adjoint)
as  
\begin{equation}\label{localcomplexrel}
\Psi_j(t,x)=\begin{cases}\int_{-\infty}^\infty \frac{d\sk}{\sqrt{2\pi}}\frac{1}{\sqrt{2\omega(\sk)}}\,
\left(\e^{\ii(\omega(\sk)t-\sk x)}\,b^{j}(\sk)^\star
+\e^{-\ii(\omega(\sk)t+\sk x)}a^j(\sk)\right),\quad j\in\cE\\
\\
\sum_{l\in\cE}\int_{-\infty}^\infty \frac{d\sk}{\sqrt{2\pi}}\frac{1}{\sqrt{2\omega(\sk)}}\,\beta_{jl}(\sk)
\left(\e^{\ii(\omega(\sk)t-\sk x)}\,b^{l}(\sk)^\star 
+\e^{-\ii(\omega(\sk)t+\sk x)}a^l(\sk)\right),\quad j\in\cI.
\end{cases}
\end{equation}
Use has been made of \eqref{-k}.
The motivation for this definition of the one particle Hilbert space for the antiparticle, the 
corresponding 1-particle Hamiltonian and finally the field $\Psi$ stems from  
\begin{proposition}\label{prop:bdypsi}
The field $\Psi(p,t)$ and its adjoint $\Psi^\dagger(p,t)$ satisfy the Klein-Gordon equation   
$$
(\Box+m^2)\Psi(p,t)=0,\qquad (\Box+m^2)\Psi^\dagger(p,t)=0
$$
and the boundary conditions 
$$
A\underline{\Psi}(t)+B\underline{\Psi}^{\prime}(t)=0=\bar{A}\underline{\Psi}^\dagger(t)
+\bar{B}\underline{\Psi}^{\dagger\,\prime}(t) 
$$
for all times.
\end{proposition}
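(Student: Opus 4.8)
The plan is to exploit the fact that, by construction \eqref{complexrel}, the field $\Psi$ is assembled entirely out of the improper eigenfunctions $\psi^l(\,;\sk)=\psi^l_{A,B}(\,;\sk)$, while its adjoint $\Psi^\dagger$ is assembled entirely out of their complex conjugates $\overline{\psi^l(\,;\sk)}$. The two different boundary conditions to be verified will then come from the two Laplacians $-\Delta_{A,B}$ and $-\Delta_{\bar A,\bar B}$ via Lemma \ref{lem:bdycomplex}. Since in this subsection we have assumed that $-\Delta_{A,B}$ (and hence $-\Delta_{\bar A,\bar B}$) has no discrete spectrum, $\Sigma=\emptyset$ and both fields reduce to the continuum integrals, so no bound-state terms need to be carried along.

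For the Klein-Gordon equation I would differentiate under the $\sk$-integral, edge by edge, reading $\Box+m^2$ as the second-order differential operator $\partial_t^2-\partial_x^2+m^2$ on the interior of each edge. On every mode the time dependence is a pure phase $\e^{\pm\ii\omega(\sk)t}$, so $\partial_t^2$ produces the scalar factor $-\omega(\sk)^2=-(\sk^2+m^2)$. On the spatial side each component $\psi^l_j(x;\sk)$ is a linear combination of $\e^{\pm\ii\sk x}$, whence $-\partial_x^2\psi^l(\,;\sk)=\sk^2\psi^l(\,;\sk)$; conjugating (with $\sk$ real) gives $-\partial_x^2\,\overline{\psi^l(\,;\sk)}=\sk^2\,\overline{\psi^l(\,;\sk)}$ as well. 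Applying $\Box+m^2$ to each integrand therefore yields the scalar $-(\sk^2+m^2)+\sk^2+m^2=0$, and the identical computation disposes of $\Psi^\dagger$. The interchange of $\partial_t^2$ and $\partial_x^2$ with the integral is legitimate because $\psi^l(\,;\sk)$ and its $x$-derivatives depend smoothly on $\sk$ by Corollary \ref{cor:sigma}, recalling that $\Sigma^>=\emptyset$.

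For the boundary conditions I would note that the boundary-value maps $\psi\mapsto\underline{\psi}$ and $\psi\mapsto\underline{\psi}'$ of \eqref{bdvpsi} are linear in the spatial argument and commute with the scalar phases $\e^{\pm\ii\omega(\sk)t}$ and with the operator coefficients $a^l(\sk),b^l(\sk)^\star$. By the very definition \eqref{psidef}--\eqref{bdycond}, each $\psi^l(\,;\sk)$ satisfies $A\,\underline{\psi^l(\,;\sk)}+B\,\underline{\psi^l(\,;\sk)}'=0$, so pulling $A$ and $B$ inside the integral gives $A\underline{\Psi}(t)+B\underline{\Psi}^{\prime}(t)=0$. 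For the adjoint field I would invoke Lemma \ref{lem:bdycomplex}: since $\psi^l(\,;\sk)$ obeys $(A,B)$, its conjugate obeys $(\bar A,\bar B)$, i.e. $\bar A\,\underline{\overline{\psi^l(\,;\sk)}}+\bar B\,\underline{\overline{\psi^l(\,;\sk)}}'=0$, and the same termwise argument yields $\bar A\underline{\Psi}^\dagger(t)+\bar B\underline{\Psi}^{\dagger\,\prime}(t)=0$. This is precisely the reason the construction of $\Psi^\dagger$ uses $-\Delta_{\bar A,\bar B}$ for the antiparticle.

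The only genuine subtlety, and the step I would treat most carefully, is that the $\psi^l(\,;\sk)$ are improper (non-$L^2$) eigenfunctions, so all of the above are operator-valued distributional identities. As in the proof of Proposition \ref{prop:bdyphi}, I would make them precise by reading every equality as an identity of expectation values in states of the form $\prod_i a(f_i)^\star\prod_j b(g_j)^\star\Omega$ with $f_i,g_j\in\cD(-\Delta_{A,B}+m^2)$; the smoothing by the expansion coefficients together with the smooth $\sk$-dependence from Corollary \ref{cor:sigma} justifies differentiating under the integral and taking boundary values termwise. The remaining manipulations are then routine and may be omitted.
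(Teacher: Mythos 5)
Your proposal is correct and follows essentially the same route as the paper: the paper proves Proposition \ref{prop:bdyphi} by termwise verification (each $\psi^l(\,;\sk)$ satisfies the boundary conditions, and conjugates satisfy $(\bar A,\bar B)$ by Lemma \ref{lem:bdycomplex}), interpreted as expectation values in states built from domain vectors, and then simply remarks that Proposition \ref{prop:bdypsi} holds "as in Proposition \ref{prop:bdyphi}". You have merely filled in the details the paper omits (mode-by-mode computation of $\Box+m^2$, differentiation under the integral, and the observation that $\Psi$ involves only $\psi^l_{A,B}$ while $\Psi^\dagger$ involves only their conjugates), all in the spirit of the paper's argument.
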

As in Proposition \ref{prop:bdyphi} the last relation holds in the sense of an expectation value
in suitable states.

Let  $\cC$ denote charge conjugation, the operation which interchanges particles and antiparticles.
In addition introduce the antilinear and antiunitary time reversal  map $\cT$ , 
cf. Remark \ref{re:timereversal}. 
Then there is $\cC\cT$ invariance, that is 
\begin{equation}\label{ctinv}
\cC\cT\Psi(p,t)(\cC\cT)^{-1}=\Psi(p,-t)
\end{equation}
holds.

\subsection{The commutator function}

In this subsection we calculate the commutator of the fields.
For the hermitian field we obtain 
\begin{align}\label{hermcomm}
\left[\Phi(t,p),\Phi(s,q)\right]&=\sum_{l\in\cE}
\int_0^\infty\frac{d\sk}{2\pi}\frac{1}{2\omega(\sk)}
\left(\overline{\psi^l(p;\sk)}\psi^l(q;\sk)\,\e^{\ii\omega(\sk)(t-s)}-
\psi^l(p;\sk)\overline{\psi^l(q;\sk)}\e^{-\ii\omega(\sk)(t-s)}\right)\\\nonumber
&\quad+\sum_{\sk\in\Sigma,1\le \nu\le n(\sk)}\frac{1}{2\omega(\sk)}
\left(\overline{\psi^{\sk,\nu}(p)}\psi^{\sk,\nu}(q)\e^{\ii\omega(\sk)(t-s)}-
\psi^{\sk,\nu}(p)\overline{\psi^{\sk,\nu}(q)}\e^{-\ii\omega(\sk)(t-s)}\right)
\end{align}
Since the boundary conditions are real, the reality properties  
\begin{equation}\label{eigenproj}
\sum_{l\in\cE}\overline{\psi^l(p;\sk)}\psi^l(q;\sk)=\sum_{l\in\cE}\psi^l(p;\sk)\overline{\psi^l(q;\sk)}
;\; \sum_{1\le \nu\le n(\sk)}\psi^{\sk,\nu}(p)
\overline{\psi^{\sk,\nu}(q)}=\sum_{1\le \nu\le n(\sk)}\overline{\psi^{\sk,\nu}(p)}\psi^{\sk,\nu}(q)
\end{equation}
hold. Indeed, the first relation is easily derived from \eqref{realimprop}. To prove the second one,
observe that for given $\sk\in\Sigma=\Sigma^<$ both sides give the unique integral kernel for the 
orthogonal projector in $L^2(\cG)$ onto the eigenspace of $ -\Delta_{A,B}$ with eigenvalue $\sk^2<0$.
In fact, since the $\psi^{\sk,\nu}$ form an orthonormal basis in that space, so do their complex conjugates.
Inserting the relations \eqref{eigenproj} into \eqref{hermcomm} gives the first part of the next theorem. 
The proof of the second part is even easier and will therefore be omitted.
\begin{theorem}\label{theo:G=Delta}
The commutator for the free hermitian field with real boundary conditions $(A,B)$ is given as 
\begin{equation}\label{hermcomm1}
\left[\Phi(t,p),\Phi(s,q)\right]=-\ii G_{A,B,m^2}(t)(p,q).
 \end{equation}
Similarly for the complex field and arbitrary boundary conditions $(A,B)$ the commutators are 
\begin{equation}\label{hermcomm2}
\left[\Psi(t,p),\Psi(s,q)^\dagger\right]=-\ii G_{A,B,m^2}(t)(p,q),\qquad \left[\Psi(t,p),\Psi(s,q)\right]=0.
\end{equation}
\end{theorem}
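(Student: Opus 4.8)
The plan is to compute both commutators directly from the mode expansions and the canonical commutation relations, and then to recognize the spectral sum that emerges as the integral kernel of $G_{A,B,m^2}$ via the completeness relation \eqref{compl}. First I would treat the hermitian field. Writing $\Phi(t,p)$ as its creation part $C(t,p)$ plus its annihilation part $A(t,p)$ (the two summands in \eqref{phipt}), the commutator splits as $[C(t,p),A(s,q)]+[A(t,p),C(s,q)]$, since $[C,C]=[A,A]=0$. Each surviving term is a $c$-number coming from a single contraction $[a^l(\sk),a^{l'}(\sk')^\star]=2\pi\delta_{ll'}\delta(\sk-\sk')$ (and its bound-state analogue in \eqref{commrel}); all operator content cancels. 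Performing the $\sk'$- and $l'$-integrations against the $\delta$'s yields \eqref{hermcomm}, a continuous piece $\int_0^\infty d\sk$ plus a finite bound-state sum, each term of the schematic form $\overline{\psi(p)}\psi(q)\,\e^{\ii\omega(t-s)}-\psi(p)\overline{\psi(q)}\,\e^{-\ii\omega(t-s)}$. The only care needed here is consistent bookkeeping of the factors of $2\pi$ and of the overall sign, the latter pinned down so that the equal-time relation of Theorem \ref{canonical} holds.

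The decisive step is then to invoke the two reality identities \eqref{eigenproj}, valid because $(A,B)$ is real. The first, $\sum_l\overline{\psi^l(p;\sk)}\psi^l(q;\sk)=\sum_l\psi^l(p;\sk)\overline{\psi^l(q;\sk)}$, follows from \eqref{realimprop} together with the unitarity \eqref{unitarity} of $S(\sk)$. The second, the corresponding statement for the bound-state sum over $\nu$, I would establish by the projector argument indicated in the text: for fixed $\sk\in\Sigma^<$ the sum $\sum_\nu\psi^{\sk,\nu}(p)\overline{\psi^{\sk,\nu}(q)}$ is the integral kernel of the orthogonal projection onto the finite-dimensional eigenspace with eigenvalue $\sk^2$, hence independent of the chosen orthonormal basis; since by Corollary \ref{cor:laplacereal2} the complex conjugates $\overline{\psi^{\sk,\nu}}$ form another orthonormal basis of that same eigenspace, the two sums coincide. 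With \eqref{eigenproj} the two kernels in each term of \eqref{hermcomm} become equal, so the bracket collapses via $\e^{\ii\omega(t-s)}-\e^{-\ii\omega(t-s)}=2\ii\sin\omega(\sk)(t-s)$ to $\sin\omega(\sk)(t-s)/\omega(\sk)$ multiplying the single kernel $\sum_l\psi^l(p;\sk)\overline{\psi^l(q;\sk)}$.

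At this point the completeness relation \eqref{compl}, read through the spectral calculus for $\sin h(t-s)/h$ with $h=\sqrt{-\Delta_{A,B}+m^2}$, identifies precisely this spectral sum with the kernel $G_{A,B,m^2}(t-s)(p,q)$; the half-line integral matches the full-line form \eqref{Gkernel} because $\sum_l\psi^l(p;\sk)\overline{\psi^l(q;\sk)}$ is even in $\sk$, again by \eqref{alphabeta4} and \eqref{unitarity}, and the $m=0$ case is covered by the convention \eqref{mconv}. Collecting the prefactor gives $[\Phi(t,p),\Phi(s,q)]=-\ii\,G_{A,B,m^2}(t-s)(p,q)$, which is \eqref{hermcomm1}. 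For the complex field the same contraction scheme applies to \eqref{complexrel}, but now two independent species contribute: $[\Psi(t,p),\Psi^\dagger(s,q)]$ receives the $[a,a^\star]$ contraction (carrying $\e^{-\ii\omega(t-s)}$) and the $[b,b^\star]$ contraction (carrying $\e^{\ii\omega(t-s)}$). Crucially, both summands of $\Psi$ are built from the \emph{same} eigenfunctions $\psi^l(p;\sk)$, so the contraction produces $\sum_l\psi^l(p;\sk)\overline{\psi^l(q;\sk)}$ \emph{directly}, with no appeal to reality; this is why the statement holds for arbitrary $(A,B)$, the two Laplacians $-\Delta_{A,B}$ and $-\Delta_{\bar{A},\bar{B}}$ entering only through their identical continuous spectra. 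The same collapse to $\sin\omega(t-s)/\omega$ and spectral identification then give $-\ii G_{A,B,m^2}(t-s)(p,q)$. Finally $[\Psi(t,p),\Psi(s,q)]=0$ is immediate, since $\Psi$ contains only the annihilators $a^l(\sk)$ and the creators $b^l(\sk)^\star$ of the opposite species, all of which mutually commute, so no contraction survives.

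I expect the main obstacle to be the second reality identity in \eqref{eigenproj}, that is, making the projector/basis-independence argument precise in the presence of degenerate bound states, together with the clean matching of the half-line spectral integral produced by the field commutator to the full-line kernel \eqref{Gkernel}; by contrast the contraction algebra and the $m=0$ limit are routine.
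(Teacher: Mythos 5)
Your proposal is correct and follows essentially the same route as the paper: compute the commutator by contraction to get \eqref{hermcomm}, invoke the reality identities \eqref{eigenproj} (the first via \eqref{realimprop}, the second via the basis-independence of the eigenspace projector kernel), and identify the resulting spectral sum with the kernel \eqref{Gkernel}. You in fact supply details the paper leaves implicit or omits, namely the evenness argument matching the half-line integral to the full-line form of \eqref{Gkernel} and the explicit treatment of the complex field, which the paper dismisses as ``even easier.''
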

The last relation of course also implies $\left[\Psi(t,p)^\dagger,\Psi(s,q)^\dagger\right]=0$.
In the Minkowski space context it is well known that 
(up to a sign) the Klein-Gordon kernel equals the commutator function, see e.g. \cite{Schweber} sec. 7c and 
\eqref{commmink} below. 
So in analogy to the Minkowski space context and as a 
consequence of finite propagation speed we have local commutativity in the form 
\begin{corollary}\label{corr:Einstein}
For space like separated events $(t,p)$ and $(s,t)$ the commutators \eqref{hermcomm1} and \eqref{hermcomm2} 
vanish provided as least one of the points $p$ and $q$ lies in $\cG_{ext}$.
\end{corollary}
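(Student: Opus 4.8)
The plan is to obtain the statement as an immediate consequence of the two results already in hand, so the argument is essentially a syllogism and I would keep it short. First I would recall that Theorem~\ref{theo:G=Delta} identifies both commutators with the Klein--Gordon kernel: by \eqref{hermcomm1} and \eqref{hermcomm2} the commutator of the hermitian field, and that of the complex field with its adjoint, each equal $-\ii\,G_{A,B,m^2}(t-s)(p,q)$. The dependence is only on the time difference $t-s$, as is manifest from the explicit computation \eqref{hermcomm}, where the factors $\e^{\pm\ii\omega(\sk)(t-s)}$ combine (after using the reality relations \eqref{eigenproj}) into $\sin\omega(\sk)(t-s)/\omega(\sk)$. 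Thus for a given pair of events the vanishing of the commutators is \emph{equivalent} to the vanishing of the integral kernel $G_{A,B,m^2}(t-s)(p,q)$ at those points.

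Next I would simply invoke Theorem~\ref{theo:mfinite}. By definition the events $(t,p)$ and $(s,q)$ being space like separated means $d(p,q)>|t-s|$, which is precisely the hypothesis under which Theorem~\ref{theo:mfinite} asserts $G_{A,B,m^2}(t-s)(p,q)=0$, provided in addition that at least one of $p,q$ lies in $\cG_{\mathrm{ext}}$. Combining this with the identification of the previous paragraph yields the vanishing of \eqref{hermcomm1} and \eqref{hermcomm2}, which is the claim.

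The only point that needs care, and which I would make explicit, is that the corollary is to be read under the hypotheses of Theorem~\ref{theo:mfinite}, namely that either $\cG$ is a single vertex graph or $-\Delta_{A,B}$ has no discrete eigenvalues. Both constructions meet these: the complex field of Section~\ref{subsec:complex} was built under the standing assumption that $-\Delta_{A,B}$ has no discrete spectrum, so the second alternative applies directly; for the hermitian field of Section~\ref{subsec:hermfield}, where bound states ($\Sigma=\Sigma^<$) are permitted, the theorem is available through its first alternative, i.e. when $\cG$ is a single vertex graph (recall that by Corollary~\ref{cor:degen} $\Sigma^>$ is then automatically empty, consistent with the standing assumption $\Sigma^>=\emptyset$).

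Since the present statement is purely a combination of Theorems~\ref{theo:G=Delta} and~\ref{theo:mfinite}, there is no genuine obstacle at this stage: all the analytic difficulty, namely establishing finite propagation speed via the walk expansion \eqref{absumrep} together with the support and analyticity properties of $S(\sk),\alpha(\sk),\beta(\sk)$, is packaged inside Theorem~\ref{theo:mfinite}, whose proof is deferred to Appendix~\ref{app:KGsupport}. Likewise the reality input \eqref{eigenproj} needed to pass from \eqref{hermcomm} to \eqref{hermcomm1} has already been absorbed into Theorem~\ref{theo:G=Delta}.
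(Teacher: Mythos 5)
Your proposal is correct and follows exactly the paper's route: the paper also obtains Corollary \ref{corr:Einstein} as an immediate consequence of the identification of the commutators with the Klein--Gordon kernel (Theorem \ref{theo:G=Delta}) combined with finite propagation speed (Theorem \ref{theo:mfinite}), offering no further argument. Your explicit check that the standing assumptions of each field construction place it under one of the two alternative hypotheses of Theorem \ref{theo:mfinite} is a useful clarification the paper leaves implicit, but it does not constitute a different approach.
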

\subsection{Examples}We illustrate our discussion in the context of single vertex graphs with two 
simple examples.
First we make the following notational convention. If $p$ has local coordinate $(i,x)$ and $q$ the local 
coordinate $(j,y)$ and for given $(A,B)$ and $m$ we set 
\begin{equation}\label{convention}
G_{ij}(t,x;s,y)=G_{A,B,m^2}(t-s)(p,q).
\end{equation} 
Also we will need the following quantities.
Let $\Delta(t,x;m)$ be the usual relativistic commutator function of mass $m$ 
in $1+1$ space-time dimensions
\begin{align}\label{commmink}
\Delta(t,x;m)&=\frac{1}{2\pi\ii}\int_{-\infty}^\infty\frac{d\sk}{2\omega(\sk)}\e^{\ii\sk x}
\left(\e^{-\ii\omega(\sk)t}-\e^{\ii\omega(\sk)t}\right)
=\Delta^{(+)}(t,x;m)+\Delta^{(-)}(t,x;m)\\
&=-\int_{-\infty}^\infty\frac{d\sk}{2\pi}\e^{\ii\sk x}\frac{\sin\omega(\sk)t}{\omega(\sk)}.
\end{align}
More explicitly
\begin{equation}\label{jordanfunction}
\Delta(t,x;m)=\begin{cases}\;\;0\quad &t^2-x^2<0\\
-\sign\, t\, N_0(m\sqrt{t^2-x^2})\quad &t^2-x^2>0.
\end{cases}
\end{equation}
$N_0$ is the zero'th Neumann function (a Bessel function of the second kind). For large argument it 
satisfies 
\begin{equation}\label{neumann}
N_0(z)\simeq \sqrt{\frac{2}{\pi z}}\sin(z-\pi/4) \qquad\mbox{for}\quad 1\ll z.
\end{equation}
For a more detailed discussion of the commutator function in local coordinates and which will be needed 
in the proof of Theorem \ref{theo:mfinite} in Appendix \ref{app:KGsupport}, introduce the 
distribution in $0<x,-\infty<t<\infty$
\begin{align}\label{sprop1}
\fD(t,x;m,\kappa)&=\frac{1}{2\pi\ii}\int_{-\infty}^\infty\frac{d\sk}{2\omega(\sk)}
\e^{\ii\sk x}\left(\e^{-\ii \omega(\sk)t}-\e^{\ii \omega(\sk)t}\right)\frac{\sk+\ii\kappa}{\sk-\ii\kappa}\\
\nonumber
&=\fD^{(+)}(t,x;m,\kappa)+\fD^{(-)}(t,x;m,\kappa)\\\nonumber
&=-\int_{-\infty}^\infty\frac{d\sk}{2\pi}\e^{\ii\sk x}\frac{\sin\omega(\sk)t}{\omega(\sk)}
\;\frac{\sk+\ii\kappa}{\sk-\ii\kappa}
\end{align}
with $m>0$ and $\kappa$ real, the values $\kappa=0,\infty$ being allowed, that is 
\begin{equation}\label{sprop2}
\fD(t,x;m,\kappa=0)=\Delta(t,x;m),\qquad \fD(t,x;m,\kappa=\infty)=-\Delta(t,x;m).
\end{equation}
By construction $\fD(t,x;m,\kappa)$ is odd in $t$. For $\kappa\neq \infty$ write  
\begin{align}\label{sprop3}
\fD(t,x;m,\kappa)&=\Delta(t,x;m)+\fd(t,x;m,\kappa)\\\nonumber
&=\fD^{(+)}(t,x;m,\kappa)+\fD^{(-)}(t,x;m,\kappa)\\\nonumber
\fD^{(\pm)}(t,x;m,\kappa)&=\Delta^{(\pm)}(t,x;m)+\fd^{(\pm)}(t,x;m,\kappa)
\end{align}
with the \emph{bona fide} function
\begin{align}\label{sprop4}
\fd(t,x;m,\kappa)&=\frac{2\ii\kappa}{2\pi\ii}\int_{-\infty}^\infty\frac{d\sk}{2\omega(\sk)}
\e^{\ii\sk x}\left(\e^{-\ii \omega(\sk)t}-\e^{\ii \omega(\sk)t}\right)\frac{1}{\sk-\ii\kappa}\\\nonumber
&=\fd^{(+)}(t,x;m,\kappa)+\fd^{(-)}(t,x;m,\kappa)\\\nonumber
&=-2\ii\kappa\int_{-\infty}^\infty\frac{d\sk}{2\pi}\e^{\ii\sk x}\frac{\sin\omega(\sk)t}{\omega(\sk)}
\frac{1}{\sk-\ii\kappa}
\end{align}
such that $\fd^{(-)}(t,x;m,\kappa)=-\fd^{(+)}(-t,x;m,\kappa)=\overline{\fd^{(+)}(t,x;m,\kappa)}$. 
It is easy to show that for given $m$ and $\kappa$ 
$\fd^{(\pm)}$ are uniformly bounded functions of $x$ and $t$ and H\"older continuous in both $x$ and $t$ 
of H\"older index $<1$.
$\fD^{(+)},\fd^{(+)}$ and $\fD^{(-)},\fd^{(-)}$ are positive and negative energy solutions of the 
usual Klein-Gordon equation respectively
$$
(\partial_t^2-\partial_x^2+m^2)\fD^{(\pm)}(x,t;m,\kappa)=
(\partial_t^2-\partial_x^2+m^2)\fd^{(\pm)}(x,t;m,\kappa)=0.
$$
Moreover, the differential equation 
\begin{equation}\label{ddiffeq}
\left(\frac{\partial}{\partial x}+\kappa\right)\fd^{(\pm)}(x,t;m,\kappa)=-2\kappa\Delta^{(\pm)}(x,t:m) 
\end{equation}
holds. $\fd^{(\pm)}(t,x;m,\kappa)$ decays at least like 
$|t|^{-1/2}$ for large $t$ and fixed $x$. This is well known from the theory of Haag-Ruelle scattering theory,
see e.g. \cite{ Glimm-Jaffe,Jost}. Sufficient conditions for stronger decay are also well known 
but do not apply here. 
When $0<x<|t|$ the \emph{stationary phase approximation} gives 
\begin{equation}\label{statphase}
\fd^{(\pm)}(t,x;m,\kappa)\cong \left(1-v^2\right)^{-1/4}
\frac{\kappa}{\pm\frac{mv}{\sqrt{1-v^2}}-\ii\kappa}\e^{\mp(\phi(x,t)+\ii\frac{\pi}{4}\sign t)}
\frac{1}{\sqrt{2\pi m|t|}}
\end{equation}
with $v=x/t$ and $\phi(x,t)=m\sqrt{1-v^2}\,t$.
As a function of $t$ and for fixed $x$ the $t^{-1/2}$ decay as well 
as the oscillations are visible in numerical computations of $\fd^{(\pm)}$. 
\begin{example}(The half-line with Robin boundary conditions at the origin)\label{ex:robinex} 
View the positive real axis $\R_+$ as a single vertex graph with one external edge, $|\cE|=1$. 
All possible boundary conditions at the origin giving rise to self-adjoint Copulations are the  
{\rm Robin} boundary conditions and which are real 
\begin{equation}\label{robin}
\cos\, \tau\; \psi(0)+\sin\,\tau\; \psi^\prime(0)=0,\quad 0\le \tau<\pi.
\end{equation}
They interpolate between {\rm Dirichlet} ($\sin\tau=0$) and {\rm Normans} ($\cos\tau=0$) boundary 
conditions. 
\end{example}
Denote the resulting Laplace operator by $-\Delta_\tau$.
The scattering matrix is now just a function 
\begin{equation}\label{stau}
S_\tau(\sk)=-\frac{\cos\,\tau -\ii \sk \sin\,\tau}{\cos\, \tau +\ii \sk \sin\,\tau}
\end{equation}
satisfying $S_\tau(-\sk)=S_\tau(\sk)^{-1}$ for $\sk\in\C$ and being of modulus $1$ for 
$\sk\in\R\setminus\{0\}$, as it should.
There is a pole of $S_\tau(\sk)$ at $\sk=\ii\cot\tau$. So for $\cot\tau<0$ this pole 
lies in the lower $\sk$-half-plane {\it (the second physical sheet)}. Then there is no 
bound state and $-\Delta_\tau\ge 0$. Conversely $\cot\tau>0$ gives 
rise to a pole of $S(\sk)$ in the upper 
half-plane at $\sk=\ii \cot\tau$ and correspondingly there is one bound state 
with (normalized and real) bound state wave function
\begin{equation}\label{psib}
\psi_{b,\tau}(x)=\sqrt{2\cot\tau}\, \e^{-\cot\tau\, x}
\end{equation}
and with bound state energy
\begin{equation}\label{boundenergy}
\varepsilon_\tau=-\cot^2\tau<0.
\end{equation}
As a consequence $-\Delta_\tau\ge \varepsilon_b$.
Observe that $S_\tau(\sk)$ is real on the imaginary axis, as should be by 
Remark \ref{reality}. Note also agreement with Lemma \ref{lem:mathfracs} and Proposition 
\ref{prop:spectrum}.
In fact in the present case $AB^\dagger=\cos\tau\sin\tau=\cot\tau\sin^2\tau$.

By our general discussion the improper eigenfunctions in this example are given as 
\begin{equation}\label{robineigen}
\psi_\tau(x;\sk)=\e^{-\ii\sk x}+ S_\tau(\sk)\e^{\ii \sk x},\qquad \sk>0. 
\end{equation}
This set is complete if $\cot\tau<0$ while for $\cot\tau>0$ this set combined with 
the bound state wave function \eqref{psib} forms a complete set.
For finite $\cot\tau\neq 0$ and with the condition $m>\max(0,\cot\tau)$, such that 
$\omega(\ii\cot\tau)>0$ for the mass, we obtain $(0<x,y)$
\begin{align}\label{robingreen1}
G(t,x;s,y)&=-\Delta(t-s,x-y;m)-\Delta(t-s,x+y;m)-\fd(t-s, x+y;m,\cot\tau)\\\nonumber
&\qquad\quad+\Theta(\cot\tau){2\cot\tau}\frac{\sin(\omega(\ii\cot\tau)\cdot(t-s))}{\omega(\ii\cot\tau)}
\e^{-\cot\tau(x+y)}.
\end{align}
$\Theta$ is the Heaviside step function. 

This example also provides a nice illustration to a long standing 
problem, namely to what extent the scattering matrix is determined by the cross section 
\cite{Chrichton,IzMartin, MartinI,MartinII,MartinIII, Newton}. Define the scattering amplitude 
$T_\tau(\sk)$ by $S_\tau(\sk)=1+2\ii\, T_\tau(\sk)$, that is 
\begin{equation}\label{scattamp} 
T_\tau(\sk)=-\ii\frac{\cos\tau}{\cos\tau+\ii\sk\sin\tau}.
\end{equation}
The knowledge of $\left|T_\tau(\sk)\right|^2$ for all $\sk>0$ only fixes $\sin^2\tau$. An additional 
information, namely whether there is a bound state or not, is needed to fix $\tau$ itself.
A way to overcome this dilemma and 
to solve this inverse problem in the present context of quantum graphs has been proposed in
\cite{KS11}.

The next example is the single vertex graph with two external lines which may also be viewed as the 
real line with the origin as a distinguished point. 
As boundary conditions we take the the one describing the $\delta$-potential of 
strength $\lambda$ at the origin. This is a very popular model for describing a pointlike impurity.
\begin{example}(The single vertex graph with two external edges ($n=|\cE|=2$) 
and with a boundary condition describing the $\delta-potential$ on the line)\label{ex:delta}~~~\\ 
The graph is obtained by considering two copies of $\R_+$ with their origins identified. 
The real boundary conditions are given as 
$$
A=\begin{pmatrix}1&-1\\0&\lambda\end{pmatrix},\quad B=\begin{pmatrix}0&0\\1&1\end{pmatrix}.
$$
The choices $\lambda<0$ and $\lambda>0$ describe an attractive and a repulsive $\delta$-potential 
on $\R$ respectively.
\end{example}
The resulting on-shell scattering matrix is a symmetric $2\times 2$ matrix  
\begin{equation}\label{slambda}
S_\lambda(\sk)=\frac{1}{2\sk+\ii\lambda}
\begin{pmatrix}-\ii\lambda&2\sk\\2\sk&-\ii\lambda
\end{pmatrix}=\frac{2\sk-\ii\lambda}{2\sk+\ii\lambda}
\;\frac{1}{2}\begin{pmatrix}1&1
\\1&1\end{pmatrix}-\frac{1}{2}\begin{pmatrix}\;1&-1\\
-1&\;1\end{pmatrix}.
\end{equation}
The second expression gives the spectral decomposition \eqref{star1} of the scattering matrix 
for this example, that is $P^0=0$ and 
\begin{equation}\label{slambdaspec}
P^{-\lambda/2}=\frac{1}{2}\begin{pmatrix}1&1
\\1&1\end{pmatrix},\qquad P^\infty=\frac{1}{2}\begin{pmatrix}\;1&-1\\
-1&\;1\end{pmatrix}.
\end{equation}
It has the additional symmetry 
\begin{equation}\label{symmextra}
S_\lambda(\sk)=\begin{pmatrix}0&1\\1&0\end{pmatrix}S_\lambda(\sk)\begin{pmatrix}0&1\\1&0\end{pmatrix}
\end{equation}
describing invariance of the boundary conditions under the interchange of the two edges. 
Using local coordinates we arrange the components $\psi^{l}_j(x;\sk)\, (j,l=1,2;x\,>\,0)$ 
of the two improper eigenfunctions $\psi^{l}(\,;\sk)$ as a $2\times 2$ matrix 
$$
\begin{pmatrix}\e^{-\ii\sk x}-\frac{\ii\lambda}{2\sk+\ii\lambda}\e^{\ii\sk x}
&\frac{2\sk}{2\sk+\ii\lambda}\e^{\ii\sk x}
\\
&\\
\frac{2\sk}{2\sk+\ii\lambda}\e^{\ii\sk x}
&\e^{-\ii\sk x}-\frac{\ii\lambda}{2\sk+\ii\lambda}\e^{\ii\sk x}\end{pmatrix}.
$$ 
Like the S-matrix this matrix is symmetric, reflecting the parity invariance of the 
$\delta$-potential. 
Also ordinary plane waves appear when $\lambda=0$, as they should. The relation \eqref{alphabeta4} 
is easily verified. In the attractive case $\lambda<0$ there is a bound state with bound 
state energy $\epsilon_\lambda=-\lambda^2/4$. The  two local components of the bound state wave function 
are both of the form
\begin{equation}\label{delta3}
\psi_j(x)=\sqrt{-\frac{\lambda}{2}}\e^{\frac{\lambda x }{2}}\quad j=1,2.
\end{equation}
Observe that 
$$
AB^\dagger=\begin{pmatrix}0&0\\0&\lambda\end{pmatrix}
$$
and recall again  Lemma \ref{lem:mathfracs} and Proposition \ref{prop:spectrum} concerning 
the number of bound states.

With the notational convention \eqref{convention} $G_{A,B,m^2}$ in local coordinates can be written 
as a $2\times 2$ matrix in the form, see \eqref{sprop6}, 
\begin{align}\label{delta4} 
G(t,x;s,y)&=-\Delta(t-s,x-y;m)\begin{pmatrix}1&0\\0&1\end{pmatrix}
-\Delta(t-s,x+y;m)\begin{pmatrix}0&1\\1&0\end{pmatrix}\\\nonumber
&\qquad\qquad\qquad-\fd(t-s,x+y;m,-\lambda/2)\;\frac{1}{2}\begin{pmatrix}1&1\\1&1\end{pmatrix}
\\\nonumber
&\qquad
+\Theta(-\lambda)(-\lambda)\frac{\sin(\omega(-\ii\lambda/2)(t-s))}{\omega(-\ii\lambda/2)}
\e^{\frac{\lambda(x+y)}{2}}\frac{1}{2}\begin{pmatrix}1&1\\1&1\end{pmatrix}.
\end{align}

\subsection*{Acknowledgments.}
The author wants to thank L. Faddeev, M. Karowski, V. Kostrykin, and A. Sedrakyan 
for stimulating and helpful comments.


\begin{appendix}
\section{Proof of Relation \eqref{ortho}}\label{app:orthoproof}
\renewcommand{\theequation}{\mbox{\Alph{section}.\arabic{equation}}}
\setcounter{equation}{0} 
The general idea of proof follows a familiar route, see, e.g. \cite{Titchmarsh}. However, the
boundary conditions defining the Laplacian enter in a simple but crucial way, which warrant a more detailed 
discussion. In addition the regularity of the scattering matrix $S(\sk)$ for $\sk>0$ away from $\Sigma^>$ 
will be used.  
For given $R>0$, let $\cG_R$ be the set obtained 
from $\cG$ by deleting from any external edge $e$ all points with distance larger 
than $R$ from its initial vertex $v_e$. On each edge $e$ we introduce an extra vertex at distance $R$ from 
$v_e$ and denoted by $v_{e,R}$. Obviously $\cG_R$ is a compact graph and a closed subset of $\cG$. 
In particular $\cG_R$ has no external edges and hence is compact. 
The set of 
vertices of $\cG_R$ is given as 
$$
\cV_{\cG_R}=\cV\cup\{v_{e,R}\}_{e\in\cE}.
$$
In other words, $\cG_R$ is obtained from $\cG$ by removing the external edges $e\in\cE$, each isomorphic to 
the half-line $[0,\infty)$, and replacing each of them by a closed interval of the form $[0,R]$, where 
the vertex $v_e=\partial(e)$ corresponds to $0\in [0,R]$ and the new vertex $v_{e,R}$ to $R\in [0,R]$.
Correspondingly there is a Hilbert space $L^2(\cG_R)$ with 
scalar product denoted by $\langle\,,\,\rangle_R$. By restriction any function $f$ on 
$\cG$ defines a function on $\cG_R$ also denoted by $f$. In this way any element in 
$L^2(\cG)$ defines an element in $L^2(\cG_R)$ and  
$$
\lim_{R\rightarrow \infty }\langle f,g\rangle_R=\langle f,g\rangle
$$
clearly holds for any $f,g\in L^2(\cG)$. As for the claim \eqref{ortho}, the functions 
$\psi^l(\,;\sk)$ 
are elements in each $L^2(\cG_R)$ but not of $L^2(\cG)$, as already mentioned. 
Now we write 
\begin{equation*}
\langle \psi^l(\,;\sk),\psi^{l^\prime}(\,;\sk^\prime)\rangle_R=-
\frac{1}{\sk^2-\sk^{\prime\,2}}
\left(\langle\Delta_{A,B}\psi^l(\,;\sk),
\psi^{l^\prime}(\,;\sk^\prime)\rangle_R
-\langle\psi^l(\,;\sk),\Delta_{A,B}
\psi^{l^\prime}(\,;\sk^\prime)\rangle_R\right).
\end{equation*}
and perform a partial integration. Since the functions $\psi^l(\,;\sk)$ satisfy the 
boundary conditions, what remains are only contributions from $\psi^l(\,;\sk)$ and its 
first derivative at the vertices $v_{e,R}$.
We now observe
\begin{align*}
\psi^l(v_{e,R};\sk)&=\psi^l_e(x=R;\sk)=\e^{-\ii R\sk}\delta_{l e}
+S(\sk)_{el}\e^{\ii R\sk }\\\nonumber
\frac{d}{dx}\psi^l(v_{e,R};\sk)&=\frac{d}{dx}\psi^l_e(x=R;\sk)=
-\ii \sk \e^{-\ii R\sk}\delta_{l e}+\ii \sk S(\sk)_{el}\e^{\ii R\sk}
\end{align*}
and obtain 
\begin{align*}
\langle \psi^l(\,;\sk),\psi^{l^\prime}(\,;\sk^\prime)\rangle_R
&=-\frac{1}{\sk^2-\sk^{\prime\,2}}\sum_{e\in\cE}
\left(\left(\ii \sk \e^{\ii R\sk}\delta_{l e}-\ii \sk \overline{S(\sk)_{el}}
\e^{-\ii R\sk }\right)
\left(\e^{-\ii R\sk^\prime }\delta_{l^\prime e}
+S(\sk^\prime )_{el^\prime}\e^{\ii R\sk ^\prime }\right)\right.\\\nonumber
&\qquad\qquad\qquad
\left.-\left(\e^{\ii R\sk}\delta_{l e}+\overline{S(\sk)_{el}}\e^{-\ii R\sk }\right)
\left(-\ii \sk^\prime  \e^{-\ii R\sk^\prime }\delta_{l^\prime e}
+\ii \sk^\prime  S(\sk^\prime )_{el^\prime}\e^{\ii R\sk^\prime }\right)\right)\\\nonumber
&=-\frac{\ii}{\sk-\sk^{\prime}}
\left(\e^{\ii R(\sk-\sk^\prime)}\delta_{ll^\prime}
-\sum_{e\in\cE}\overline{S(\sk)_{el}}S(\sk^\prime )_{el^\prime}
\e^{-\ii R(\sk-\sk^\prime)}\right)\\\nonumber
&\qquad\qquad+\frac{\ii}{\sk+\sk^{\prime}}\left(\overline{S(\sk)_{l^\prime l}}
\e^{-\ii R(\sk+\sk^\prime)}
+S(\sk^\prime )_{l l^\prime}\e^{\ii R(\sk+\sk^\prime)}\right).
\end{align*}
Since $\sk+\sk^{\prime}>0$ the second term on the r.h.s. vanishes for $R\rightarrow\infty$ in the sense of 
distributions by the Riemann-Lebesgue lemma. As for the first term write
\begin{align}\label{est1}
-\frac{\ii}{\sk-\sk^{\prime}}
\Big(\e^{\ii R(\sk-\sk^\prime)}\delta_{ll^\prime}&
-\sum_{e\in\cE}\overline{S(\sk)_{el}}S(\sk^\prime )_{el^\prime}
\e^{-\ii R(\sk-\sk^\prime)}\Big)
=\\\nonumber&2\frac{\sin R(\sk-\sk^\prime)}{\sk-\sk^\prime}\delta_{ll^\prime}
-\frac{\ii}{\sk-\sk^{\prime}}\left(\delta_{ll^\prime}-
\sum_{e\in\cE}\overline{S(\sk)_{el}}S(\sk^\prime )_{el^\prime}\right) 
\e^{-\ii R(\sk-\sk^\prime)}.
\end{align}
Here the first term converges in the sense of distributions 
to $2\pi \delta(\sk-\sk^\prime)\delta_{ll^\prime}$ as $R\rightarrow\infty$. 
As for the second term we use the unitarity of $S(\sk)$ to write 
$$
\delta_{ll^\prime}
-\sum_{e\in\cE}\overline{S(\sk)_{el}}S(\sk^\prime )_{el^\prime}
=\sum_{e\in\cE}\overline{S(\sk)_{el}}
\left(S(\sk)_{el^\prime}-S(\sk^\prime )_{el^\prime}\right)
$$
By Corollary \ref{cor:sigma}
all matrix elements of $S(\sk)$ are differentiable functions of 
$\sk\in\R_+\setminus\Sigma^>$.
Since all matrix elements also are bounded by 1 due to unitarity, we have the estimate
$$
|S(\sk)_{ll^\prime}-S(\sk^\prime)_{ll^\prime}| \le 
const\cdot  |\sk-\sk^\prime|,\quad \sk,\sk^\prime\in \R_+\setminus\Sigma^>,\quad 
l,l^\prime\in\cE
$$ 
whenever $|\sk-\sk^\prime|$ is small. Observe that $\R_+\setminus\Sigma^>$ is a
union of open, pairwise disjoint intervals.
This gives the estimate 
$$
\frac{|\delta_{ll^\prime}
-\sum_{e\in\cE}\overline{S(\sk)_{el}}S(\sk^\prime )_{el^\prime}|}
{| \sk-\sk^\prime|}
\le const, \quad \sk,\sk^\prime\in \R_+\setminus\Sigma^>,\quad l,l^\prime\in\cE,
$$ 
again whenever $|\sk-\sk^\prime|$ is small.
Therefore and again by the Riemann-Lebesgue lemma the second term in \eqref{est1} 
tends to zero as $R\rightarrow \infty$.



\section{Proof of Theorem \ref{theo:mfinite} }\label{app:KGsupport}
\renewcommand{\theequation}{\mbox{\Alph{section}.\arabic{equation}}}
\setcounter{equation}{0} 
 
\subsection{Proof of Theorem  \ref{theo:mfinite} in the single vertex case }\label{theo:single}
In the single vertex case, besides a proof of the theorem, in this appendix we will provide a 
detailed analysis of the Klein-Gordon kernel when written in local coordinates, see the 
convention \eqref{convention}. We obtain
\begin{align}\label{starcommutator}
G_{ij}(t,x;s,y)&=-\Delta(t-s,x-y;m)\delta_{ij}+
\int_{-\infty}^\infty\frac{d\sk}{2\pi}S(\sk)_{ij}\e^{\ii\sk(x+y)}\frac{\sin (\omega(\sk)(t-s))}{\omega(\sk)}
\\\nonumber
&\qquad\qquad +\sum_{0<\kappa\in\fI_0}2\kappa P^\kappa_{ij}\e^{-\kappa(x+y)}
\frac{\sin(\omega(\ii\kappa)(t-s))}{\omega(\ii\kappa)}
\end{align}
where we used Corollary \ref{cor:real} and relations \eqref{psimunu}, \eqref{sigma1}, and 
\eqref{diracprep}. Recall also the convention \eqref{mconv} for the case $m=0$.
We can rewrite this as 
\begin{align}\label{sprop6}
G_{ij}(t,x;s,y)&=-\Delta(t-s,x-y;m)\delta_{ij}
-\Delta(t-s,x+y;m)\left(\delta_{ij}-2P^\infty_{ij}\right)\\\nonumber
&\qquad\qquad-\sum_{\infty\neq\kappa\in\fI}\fd(t-s,x+y;m,\kappa)P^\kappa_{jl}\\\nonumber
&\qquad\qquad\qquad+\sum_{0<\kappa\in\fI_0}2\kappa
P^\kappa_{jl}\e^{-\kappa(x+y)}\frac{\sin(\omega(\ii\kappa)(t-s))}{\omega(\ii\kappa)}.
\end{align}
Note that $P^\infty$ may be the zero matrix. We shall use the representation \eqref{starcommutator} to 
prove the theorem.

For the single vertex graph the distance between two points $p$ and $q$ with local 
coordinates $(i,x)$ and $(j,y)$ is 
\begin{equation}\label{stardistance}
d(p,q)=d((i,x),(j,y))=\begin{cases}\,|x-y|\quad &i=j\\
\;\, x+y \quad &i\neq j.
\end{cases}
\end{equation}
As a consequence the first term on the r.h.s. of \eqref{starcommutator}
vanishes for space-like separations, a well known property of the 
relativistic commutator function. As for the integral in \eqref{starcommutator}
insert the relation \eqref{star1}. We observe that 
$d((i,x),(j,y))\le x+y$ is always valid, so for space-like separations $x+y>|t-s|$ holds and thus
we can deform the integral from $-\infty$ to $+\infty$ to the integral from $-\infty+\ii\rho$ to 
$+\infty+\ii\rho$ for arbitrary $\rho>0$. Indeed, by the analyticity of the
first function in \eqref{fanal} we can apply Cauchy's theorem.
During this deformation we pick up a residue at each of the poles $\sk=\ii\kappa$ 
with $0<\kappa <\rho$. Each such term, however,
is compensated by the corresponding term in the sum in \eqref{starcommutator}. 
When we let $\rho\rightarrow +\infty$, we claim that the integral from $-\infty+\ii\rho$ to 
$+\infty+\ii\rho$ vanishes. To see this,view the function $\sk\mapsto \omega(\sk)$ as analytic in the cut 
(open) upper $\sk$-half-plane with a cut from $\ii m$ to $\ii\infty$, see Figure \ref{cutplane0}.
In this cut upper $\sk$-half-plane, the estimate $\Im\, \omega(\sk)\le \Im\, \sk$ holds. 

\begin{figure}[htb]
\setlength{\unitlength}{0.8cm}
\begin{picture}(12,6)
\put(0,0){\line(1,0){10}}
\put(5,0){\line(0,1){2.5}}
\put(9,0.3){Re $\sk$}
\put(5.2,5.7){Im $\sk$}
\put(5,2.5){\circle*{0.17}}
\put(4.2,2.2){$\ii m$}
\put(5,3.0){\oval(1,1)[b]}
\put(4.5,3.0){\line(0,1){2.5}}
\put(5.5,3.0){\line(0,1){2.5}}
\put(3.8,3.5){$\cC_-$}
\put(6.0,3.5){$\cC_+$}
\thicklines
\put(5.0,2.5){\line(0,1){3.0}}
\end{picture}
\caption{\label{cutplane0} The upper $\sk$-half-plane with a cut from $\ii m$ to 
$\ii\infty$ for the function $\omega(\sk)$.
$\cC_\pm$ form the lips of the cut.}
\end{figure}
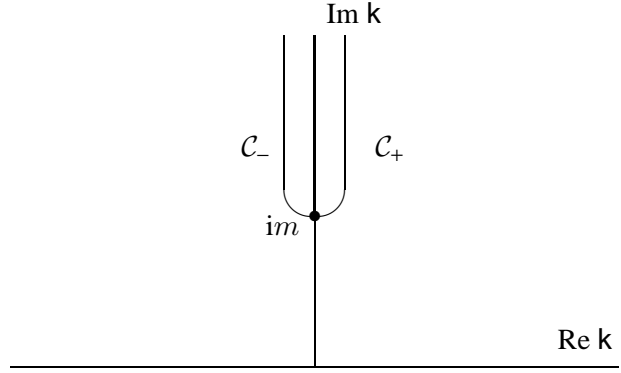
Moreover both functions
\begin{equation}\label{decomp}
\frac{1}{2\ii\omega(\sk)}\e^{\ii\omega(\sk)(t-s)},\qquad -\frac{1}{2\ii\omega(\sk)}\e^{-\ii\omega(\sk)(t-s)}
\end{equation}
are also analytic there and their sum equals 
$$
\frac{\sin(\omega(\sk)(t-s))}{\omega(\sk)}
$$
there. Furthermore this sum has no discontinuity across the cut, as it should since it is entire analytic. 
Indeed, replace $\sk$ by the variable 
$m\le\lambda< \infty$ via $\sk=\ii\lambda-\epsilon$ 
on the left lip $\cC_-$ and $\sk=\ii\lambda+\epsilon$ on the right lip $\cC_+$ with $\epsilon>0$. 
But on the left lip
$$
\lim_{\epsilon\downarrow 0}\omega(\ii\lambda-\epsilon)=\sqrt{\lambda^2-m^2}
$$
while on the right lip 
$$
\lim_{\epsilon\downarrow 0}\omega(\ii\lambda+\epsilon)=-\sqrt{\lambda^2-m^2}.
$$ 
Using $\Im\, \omega(\sk)\le \Im\, \sk$ for $\sk$ in the upper half plane we can therefore estimate 
\begin{equation}\label{kernelest}
\left|\frac{\sin(\omega(\sk)(t-s))}{\omega(\sk)}\right|\le\frac{e^{\Im \sk |t-s|}}{|\omega(\sk)|}.
\end{equation}
in the upper half plane and which combined with 
$$
\left|\e^{\ii\sk(x+y)}\right|= \e^{-\Im \sk (x+y)}
$$
proves the claim.
This concludes the proof of the Theorem \ref{theo:mfinite} when the graph is a single vertex graph.
Observe that we have actually proved
\begin{equation}\label{local0}
G_{ij}(t,x;s,y)=-\Delta(t-s,x-y;m)\delta_{ij}\qquad \mbox{when}\quad x+y>|t-s|.
\end{equation}
\eqref{sprop6} compares with \eqref{local0}, valid when 
$x+y>|t-s|$. If at least one of the points $p$ and $q$ is far away from the vertex, 
that is $x \gg 1$ or $y\gg 1$, then the last term on the r.h.s. of \eqref{sprop6} becomes exponentially 
small, uniformly for all times $t$ and $s$.
To sum up, as far as commutators are concerned and by comparison with \eqref{jordanfunction}, 
the contribution 
from $\fd$ in \eqref{sprop6} compares with the two preceding terms there. 
\begin{remark}\label{re:loc}
We observe from the proof that in the single vertex case the bound state 
contributions in the definition of the fields are necessary in order to obtain locality.
A somewhat similar observation was made in the context of integrable models 
in quantum field theory \cite{KarowskiWeisz}. There it was observed that bound
state contributions in the form factors of the Sine-Gordon model were crucial for
determining the wave-function renormalization constant. Moreover, in the articles 
\cite{BFK,Q} local commutation relations for certain integrable models were established,
see in particular relation (54) in \cite{BFK}, for which also contributions from bound states 
are relevant.
\end{remark}

\subsection{Proof of Theorem  \ref{theo:mfinite}  for an arbitrary graph when $\Sigma=\emptyset$}
\label{theo:general}
We turn to the case of an arbitrary graph with the spectral assumption $\Sigma=\emptyset$ for the 
Laplacian $-\Delta_{A,B}$, that is with the assumption that there are no bound states. 
In local coordinates 
\begin{equation}\label{gencommutator}
G_{ij}(t,x;s,y)=\begin{cases}-\Delta(t-s,x-y;m)\delta_{ij}+\int_{-\infty}^\infty\frac{d\sk}{2\pi}
S(\sk)_{ij}\e^{\ii\sk(x+y)}\;\frac{\sin(\omega(\sk)(t-s))}{\omega(\sk)}\\
\hspace{8.3cm}\mbox{for}\quad i,j\in\cE\\
\\
\int_{-\infty}^\infty\frac{d\sk}{2\pi}\left(\alpha(\sk)_{ij}\e^{\ii\sk(x+y)}
+\beta(\sk)_{ij}\e^{\ii\sk(-x+y)}\right)\;\frac{\sin(\omega(\sk)(t-s))}{\omega(\sk)}\\\hspace{8.3cm}
\mbox{for}\quad i\in\cI,j\in\cE\\
\\
\int_{-\infty}^\infty\frac{d\sk}{2\pi}
\left(\left(\alpha_{A,B}(\sk)\e^{\ii\sk x}+\beta_{A,B}(\sk)\e^{-\ii\sk x}\right)
\beta_{\bar{A},\bar{B}}(-\sk)^T\right)_{ij}\e^{\ii\sk y} 
\;\frac{\sin(\omega(\sk)(t-s))}{\omega(\sk)}\\\hspace{8.3cm}\mbox{for}\quad i,j\in\cI.
\end{cases}
\end{equation}
Relation \eqref{unitarity} has been used for the case $i,j\in\cE$, Corollary \ref{lem:hermanal} 
for the case $i\in\cI,j\in\cE$. Lemma \ref{lem:hermanal} and Corollary \ref{cor:complex} 
have been used for the case $i,j\in\cI$.
Consider first the case $j,l\in\cE$. The first term, the relativistic commutator 
function, has already been dealt with and 
vanishes for space-like separations. As for the integral we insert the path space expansion \eqref{walks} 
for the scattering matrix  to obtain the representation 
\begin{equation}\label{commutator3}
\sum_{\bw\in \cW_{ij}}
\int_{-\infty}^\infty\frac{d\sk}{2\pi}
S(\bw;\sk)_{ij}\e^{\ii\sk(x+y+|\bw|)}\frac{\sin(\omega(\sk)(t-s))}{\omega(\sk)} .
\end{equation}
Here and in what follows we will freely interchange summation and integration. This is permitted as can 
be shown with help of Proposition 5.6 in \cite{KS8} and where one lets the lengths $a_i$ of the internal 
edges become complex with a positive imaginary part. We omit details.

For events, which are space-like separated, 
$ x+y+|\bw| \ge d((i,x),(j,y))> |t-s| $ is valid for any 
$\bw\in \cW_{ij}$ whenever $i,j\in\cE$. Also by Lemma \ref{lem:mathfracs}, the 
assumption $\Sigma=\emptyset$ implies $A B^\dagger\le 0$ which in turn implies 
that $A(v) B(v)^\dagger\le 0$ holds for all vertices $v$ by Lemma \ref{lem:decomp}. 
This in turn implies that each $S(v;\sk)$, which is of the form 
$-(A(v)+\ii\sk B(v))^{-1}(A(v)-\ii\sk B(v))$, has no poles and 
and hence is analytic in the upper half plane and polynomially bounded there, again by 
Lemma \ref{lem:mathfracs}. As a consequence each $S(\bw;\sk)_{jl}$ is analytic 
in the upper half-plane and polynomially bounded. 
These considerations again allow us to make a deformation of the 
integration over $\sk$ in \eqref{commutator3} from the real axis $(-\infty,+\infty)$ to the parallel line 
$(-\infty+\ii\rho,+\infty+\ii\rho)$. Combining the estimate \eqref{kernelest} with 
$$
\left|\e^{\ii\sk(x+y+|\bw|)}\right|= \e^{-\Im \sk ((x+y+|\bw|)}
$$
and the polynomial bound of each $S(\bw;\sk)_{jl}$ in the limit $\rho\rightarrow +\infty$ 
we obtain a vanishing contribution. In other words, each summand in \eqref{commutator3} vanishes.
This concludes our discussion of the case $i,j\in\cE$.

We turn to the case $i\in\cI$ and $j\in\cE$ and discuss the integral involving the 
$\alpha(\sk)$ and $\beta(\sk)$ amplitudes separately. By the walk expansion \eqref{absumrep}
\begin{align}
\int_{-\infty}^\infty\frac{d\sk}{2\pi}
\alpha(\sk)_{ij}\e^{\ii\sk(x+y)}\frac{\sin(\omega(\sk)(t-s))}{\omega(\sk)} 
&=\sum_{\bw\in \cW_{ij}^-}\int_{-\infty}^\infty \frac{d\sk}{2\pi}
S(\bw;\sk)_{ij}\e^{\ii\sk(x+ |\bw|+y)}\frac{\sin(\omega(\sk)(t-s))}{\omega(\sk)} 
\end{align}
and we observe that $d((j,x),(l,y))\le x+ |\bw|+y $, holds for all 
$\bw\in \cW_{jl}^-$, see \eqref{dini}. Hence for space-like separation and for each summand 
we can again deform the integration contour to $(-\infty+\ii\rho,+\infty+\ii\rho)$ 
and thus this expression then vanishes when $\rho\rightarrow \infty$. As for the 
term containing the amplitude $\beta(\sk)$, again the walk expansion gives
\begin{align}
\int_{-\infty}^\infty\frac{d\sk}{2\pi}
\beta(\sk)_{ij}\e^{\ii\sk(-x+y)}\frac{\sin(\omega(\sk)(t-s))}{\omega(\sk)} 
&=\sum_{\bw\in \cW_{ij}^+}\int_{-\infty}^\infty \frac{d\sk}{2\pi}
S(\bw;\sk)_{ij}\e^{\ii\sk(a_i-x+ |\bw|+y)}\frac{\sin(\omega(\sk)(t-s))}{\omega(\sk)} 
\end{align}
Now $d((j,x),(l,y))\le a_i-x+ |\bw|+y$ holds for all 
$\bw\in \cW_{ij}^+$, cf. again \eqref{dini}, and the previous arguments can again be applied. 

In the case $j,l\in\cI$, the arguments just used do not work. This is the reason why we have been 
unable to establish finite propagation speed inside the graph, that is in $\cG_{\mathrm{int}}$. 
Indeed, now the contour deformation into the upper $\sk$-half plane can not be carried out, since 
$\beta_{\bar{A},\bar{B}}(-\sk)$  will have poles in the 
upper half-plane. Also the walk representation of $\beta_{\bar{A},\bar{B}}(-\sk)$ for $\sk>0$ 
does not have the form needed to invoke the arguments we have used so far.

\begin{remark}\label{re:generalbdy}
The reason we had to impose the condition $\Sigma=\emptyset$ for a general graph
is that in the presence of bound states
we do not (yet) have sufficient control over the matrix valued functions $S(\sk),\alpha(\sk)$ 
and $\beta(\sk)$ at the poles. Recall that in the single vertex case we had 
Proposition \ref{prop:starev} at our disposal.
However, we expect Einstein causality still to be valid without this condition. 
\end{remark}
\end{appendix}


\end{document}